\newif\ifarxiv
\arxivtrue

\ifarxiv
\documentclass[12pt]{article}
\usepackage{a4wide}
\usepackage[numbers,sort]{natbib}
\usepackage{authblk}
\author[1]{Matthias Bentert}
\author[2]{Chen Avin}
\author[1]{Stefan Schmid}
\affil[1]{Technische Universität Berlin, Germany}
\affil[2]{Ben-Gurion University of the Negev, School of Electrical and Computer Engineering, Israel}
\else
\documentclass[sigconf]{acmart}
\author{Matthias Bentert}
\orcid{0009-0009-0705-972X}
\affiliation{%
\institution{Technische Universität Berlin}
\city{Berlin}
\country{Germany}}
\email{bentert@tu-berlin.de}
\author{Chen Avin}
\orcid{0000-0002-6647-8002}
\affiliation{%
\institution{Ben-Gurion University of the Negev, School of Electrical and Computer Engineering}
\city{Beersheba}
\country{Israel}}
\email{avin@bgu.ac.il}
\author{Stefan Schmid}
\orcid{0000-0002-7798-1711}
\affiliation{%
\institution{Technische Universität Berlin}
\city{Berlin}
\country{Germany}}
\email{stefan.schmid@tu-berlin.de}

\ccsdesc[500]{Networks~Topology analysis and generation}
\ccsdesc[500]{Networks~Logical / virtual topologies}
\ccsdesc[300]{Theory of computation~Distributed algorithms}

\keywords{Network Topology, Matching Lower and Upper Bounds, Probabilistic Method, NP-hardness}
\fi
\usepackage{amsmath}
\usepackage{amsthm}
\usepackage{todonotes}
\usepackage{framed}
\usepackage{tabularx}
\usepackage{tcolorbox}
\usepackage{tikz}
\usetikzlibrary{calc,shapes}
\usepackage{subcaption}
\usepackage{dsfont}
\usepackage[capitalize,noabbrev]{cleveref}
\usepackage{balance}

\newtheorem{theorem}{Theorem}
\newtheorem{observation}{Observation}
\newtheorem{lemma}{Lemma}
\newtheorem{proposition}{Proposition}

\newcommand{\thr}{\textsc{Demand-Aware Weak Throughput}}
\newcommand{\strong}{\textsc{Demand-Aware Throughput}}
\newcommand{\dt}{\textsc{Demand-Aware Direct Throughput}}
\newcommand{\di}{\textsc{Demand-Aware Weak Direct Throughput}}

\newcommand{\N}{\ensuremath{2n-1}}
\newcommand{\Np}{\ensuremath{(2n-1)}}
\newcommand{\ns}{\ensuremath{n^*}}
\newcommand{\NN}{\ensuremath{\mathds{N}}}
\newcommand{\new}{weak throughput}
\newcommand{\old}{throughput}
\newcommand{\direct}{direct throughput}

\DeclareMathOperator{\len}{len}
\DeclareMathOperator{\con}{cont}

\newlength{\RoundedBoxWidth}
\newsavebox{\GrayRoundedBox}
\ifarxiv
\newenvironment{GrayBox}[1]%
   {\setlength{\RoundedBoxWidth}{.93\textwidth}
    \def\boxheading{#1}
    \begin{lrbox}{\GrayRoundedBox}
       \begin{minipage}{\RoundedBoxWidth}}%
   {   \end{minipage}
    \end{lrbox}
    \begin{center}
    \begin{tikzpicture}%
       \node(Text)[draw=black!20,fill=white,rounded corners,%
             inner sep=2ex,text width=\RoundedBoxWidth]%
             {\usebox{\GrayRoundedBox}};
        \coordinate(x) at (current bounding box.north west);
        \node [draw=white,rectangle,inner sep=3pt,anchor=north west,fill=white] 
        at ($(x)+(6pt,.75em)$) {\boxheading};
    \end{tikzpicture}
    \end{center}}

\newenvironment{defproblemx}[2][]{\noindent\ignorespaces%
    \FrameSep=6pt%
    \parindent=0pt%
    \vspace*{-1.5em}
    \begin{GrayBox}{#2}
    \begin{tabular*}{\textwidth}{@{\hspace{.1em}} >{\itshape} p{1.7cm} p{0.86\textwidth} @{}}%
}{
    \end{tabular*}%
    \end{GrayBox}%
    \ignorespacesafterend
}
\else
\newenvironment{GrayBox}[1]%
   {\setlength{\RoundedBoxWidth}{.45\textwidth}
    \def\boxheading{#1}
    \begin{lrbox}{\GrayRoundedBox}
       \begin{minipage}{\RoundedBoxWidth}}%
   {   \end{minipage}
    \end{lrbox}
    \begin{center}
    \begin{tikzpicture}%
       \node(Text)[draw=black!20,fill=white,rounded corners,%
             inner sep=2ex,text width=\RoundedBoxWidth]%
             {\usebox{\GrayRoundedBox}};
        \coordinate(x) at (current bounding box.north west);
        \node [draw=white,rectangle,inner sep=3pt,anchor=north west,fill=white] 
        at ($(x)+(6pt,.75em)$) {\boxheading};
    \end{tikzpicture}
    \end{center}}

\newenvironment{defproblemx}[2][]{\noindent\ignorespaces%
    \FrameSep=6pt%
    \parindent=0pt%
    \vspace*{-1.5em}
    \begin{GrayBox}{#2}
    \begin{tabular*}{\textwidth}{@{\hspace{.1em}} >{\itshape} p{1cm} p{0.81\textwidth} @{}}%
}{
    \end{tabular*}%
    \end{GrayBox}%
    \ignorespacesafterend
}
\fi

\newcommand{\defproblem}[3]{
  \begin{defproblemx}{#1}
    Input:  & #2 \\
    Question: & #3
  \end{defproblemx}
}%

\title{A Separation Between Optimal Demand-Oblivious and Demand-Aware Network Throughput}
\date{}

\ifarxiv
\begin{document}
\maketitle
\fi

\begin{abstract}
The performance of distributed applications often critically depends on the interconnecting network or more specifically on its throughput: how fast data can be carried across a network.
Over the last years, great progress has been made in understanding 
\emph{demand-oblivious} throughput: how fast a given demand matrix describing pairwise communication
requirements, can be served on a \emph{given} network.
However, surprisingly little is known today about
the achievable \emph{demand-aware} throughput: the throughput on a network topology which
can be \emph{optimized} toward the demand. Such demand-aware networks have recently gained
popularity in datacenters and are enabled by emerging reconfigurable optical technologies.

In this paper, we are interested in both the achievable demand-aware throughput bounds 
as well as in the computational complexity of finding a throughput-optimizing network topology.
We take a systematic approach and investigate four variants of demand-aware throughput: we analyze, and derive bounds for, 
two definitions of throughput, the classic throughput usually considered in the literature, and 
a new generalized definition which we call \new; for each of them, we consider two routing models, a \emph{direct}
one where demand can only be served on a single hop and a general one where multi-hop routing is allowed.

Our main result is a separation result which solves an open problem in the literature about the classical throughput definition,
showing that demand-aware topologies can outperform demand-oblivious topologies even in the worst case:
the demand-aware throughput asymptotically approaches at least~$\frac{5}{8}$, while it is known 
that the demand-oblivious throughput is at most~$\frac{n}{2n-1} \approx \frac{1}{2}$. 
In terms of computational complexity, we show that computing the demand-aware weak throughput is NP-hard, 
but computing the demand-aware direct throughput and the demand-aware weak direct throughput are both polynomial-time solvable.
We leave it as an open problem whether computing the demand-aware throughput 
is polynomial-time solvable or NP-hard, but we conjecture that it is NP-hard.
\end{abstract}

\ifarxiv
\newpage
\else
\begin{document}
\maketitle
\fi

\section{Introduction}

Many data-intensive distributed applications, especially those related
to high-performance computing and machine learning, critically depend on a high-throughput interconnecting communication
network. Accordingly, over the last decades, significant efforts have been made
to design novel datacenter networks which provide high capacity at low cost.
To this end, researchers intensively studied how to define and measure the \emph{throughput of a network}~\cite{jyothi2016measuring,dc-throughput},
how to model its impact on application
performance~\cite{faizian2017throughput,mogul2012we}, 
and how to improve it~\cite{addanki2025vermilion,al2008scalable,amir2022optimal,baral2026universal,li2019hpcc,mellette2017rotornet,singla2014high}. A particularly intriguing solution are demand-aware networks (also known as topology engineering in the literature)~\cite{avin2025revolutionizing,avin2020demand,avin2022demand,firefly,projector,helios,griner2021cerberus,poutievski2022jupiter,zerwas2023duo}: networks whose topology can be optimized towards 
the demand using emerging optical switching technologies.

However, defining and analyzing the throughput of a demand-aware network topology is surprisingly subtle.
Existing literature so far mostly revolves around the  
\emph{demand-oblivious} throughput: 
design a network topology (a graph)~$G$ (with given link capacities) that can serve well \emph{any} 
(doubly stochastic) demand matrix~$\mathcal{M}$, describing pairwise communication requirements.
More specifically, the throughput of a (directed) graph $G$ with respect to a demand matrix $\mathcal{M}$ is defined by the maximum  factor by which the rates specified in the demand matrix must be scaled (or equivalently, the time horizon of the transmission increased)  such that they can be served on the topology without violating its link capacities~\cite{jain2014maximizing,jyothi2016measuring,yuan2014lfti,zerwas2023duo,griner2021cerberus,addanki2025vermilion,addanki2023mars,KCKL05}. The throughput of a graph is defined as its worst-case throughput
over all possible demand matrices.
It is already known that the throughput of demand-oblivious networks connecting $n$ nodes is 
at most~$\frac{n}{2n-1} \approx \frac{1}{2}$~\cite{KCKL05}.
Empirical studies suggest that demand-aware networks can achieve a higher throughput.
However, whether this also holds under worst-case demand matrices has been an open question for several years~\cite{griner2021cerberus,addanki2025vermilion,addanki2023mars,farrington2013multiport}.
We study this fundamental question.
Particularly: is there a separation between the worst case demand-aware throughput and the demand-oblivious throughput? 

To shed light on the demand-aware throughput, we take a systematic approach and study both the achievable bounds as well as the computational complexity of optimizing the topology. Moreover, we consider two different throughput definitions and two different routing models, that is, four scenarios in total. 
In particular, in addition to the classic throughput definition in the literature mentioned above, we also introduce an intuitive and more general
definition which we call the \new{} of a network: in a nutshell (details will follow), the weak throughput measures the highest possible percentage of demands that can be served.
In terms of routing models, we consider a multi-hop scenario, where the demand can be routed
arbitrarily over the network, and a single-hop scenario, where the demand can be served only across a single link---as is sometimes the case in optical networks~\cite{hall2021survey}.
We refer to the latter as \emph{direct} routing.
Before presenting our contributions in detail, we introduce our model more formally.

\subsection{Formal Model}
\label{sec:prelim}

For a positive integer~$n$, we use~$[n]$ to denote the set~$\{1,2,\ldots,n\}$.
We use~$\boldsymbol{1}_x$ for a Boolean expression~$x$ as the indicator function: ``1~if~$x$ holds and~$0$ otherwise''.
For a sequence~$\sigma = (\phi_1,\phi_2,\ldots,\phi_{\ell})$, we use~$x \in \sigma$ as a shorthand for~$x \in \{\phi_1,\phi_2,\ldots,\phi_{\ell}\}$.

We model networks as multigraphs, that is, graphs can contain self-loops and parallel arcs.
If each vertex in a directed graph~$G$ has exactly~$r$ incoming and outgoing arcs, then we call~$G$ a \emph{directed~$r$-regular graph}.
A path~$P$ in a directed graph is a sequence of arcs~$((u_1,v_1),(u_2,v_2),\ldots,(u_\ell,v_\ell))$ such that for all~${i \in [\ell-1]}$ it holds that~$v_i = u_{i+1}$.
We say that~$P$ is a path from~$u_1$ to~$v_\ell$.
The \emph{length}~$\len(P)=\ell$ of a path~$P$ is the number of arcs in it and we only consider paths containing at least one arc in this work.
This is consistent with the existing literature, where demand from a node~$v$ to itself has to use some arcs (possibly a self-loop).
We mention in passing that while our specific constructions do use self-loops, all of our results can also be transferred to the setting where demand from a node to itself can be served by empty paths.
The throughput bounds (and a construction in a hardness proof later on) will then of course be slightly different.

A matrix is doubly stochastic if each entry is non-negative and each row and each column sums to~$1$.
As in the existing literature, we will assume that all input matrices are doubly stochastic.
For a matrix~$\mathcal{M} = (a_{i,j})_{i,j \in [n]}$ and a number~$x$, we use~$x \mathcal{M}$ to denote the matrix, where each entry is scaled by a factor of~$x$, that is, the matrix~$(x a_{i,j})_{i,j \in [n]}$.
We will often match the~$i$-th row and~$i$-th column of a matrix to the~$i$-th vertex in a graph.
For notational convenience, we will often use~$[n]$ as the set of vertices to easily index the rows and columns of the matrix with the names of the vertices.
Given a matrix~${\mathcal{M} = (a_{i,j})_{i,j \in [n]}}$ and a directed graph~$G$ with arc capacities where the sum of capacities entering and leaving each vertex is~$1$, we say that~$G$ can \emph{host}~$\mathcal{M}$ if there is a flow (a collection~${\mathcal{P} = \{(P_1,d_1),(P_2,d_2),\ldots,(P_{\ell},d_{\ell})\}}$ of $\ell$ pairs---where each~$P_i$ is a directed (non-empty) path in~$G$ from some vertex~$s_i$ to some (not necessarily different) vertex~$t_i$ and~$d_i$ is a number) such that~$\mathcal{P}$ is a multi-commodity flow in~$G$ satisfying the demand requirements of~$\mathcal{M}$ and not violating the capacities of~$G$.
If the graph is~$r$-regular and each arc has capacity~$\frac{1}{r}$, then this is equivalent to setting each capacity to~$1$ and requiring that~$G$ can satisfy the multicommodity flow described by~$r \mathcal{M}$.
More formally, the following holds.
\begin{itemize}
    \item For all~$(u,v) \in V \times V$, $\sum_{i \in [\ell]} (d_i \cdot \boldsymbol{1}_{s_i=u} \cdot \boldsymbol{1}_{t_i=v}) = r \cdot a_{u,v}$ and
    \item for all arcs~$e = (u,v)$ in~$G$, it holds that $\sum_{i \in [\ell]} (\boldsymbol{1}_{e \in P_i} \cdot d_i) \leq 1$.
\end{itemize}

We say that~$G$ can \emph{directly host}~$\mathcal{M}$ if~$G$ can host~$\mathcal{M}$ in such a way that all paths in~$\mathcal{P}$ only consist of a single arc each.
The \emph{throughput} of~$G$ with respect to~$\mathcal{M}$ is the largest number~$\theta$ such that~$G$ can host the matrix~$\theta \mathcal{M}$ and the direct throughput of~$G$ with respect to~$\mathcal{M}$ is the largest number~$\theta'$ such that~$G$ can directly host the matrix~$\theta' \mathcal{M}$.\footnote{
There also exist cut-based measures to capture the throughput or ``capacity
of a network'' in the literature~\cite{al2008scalable,greenberg2011vl2,ahn2009hyperx,curtis2012rewire}. However, these have shown to be suboptimal~\cite{jyothi2016measuring,dc-throughput,yuan2014lfti} and have become less popular recently.} 
The \emph{\new} of~$G$ with respect to~$\mathcal{M}$ is the largest number~$\eta$ such that~$G$ can host a matrix~$\mathcal{M'}= (a'_{i,j})_{i,j \in [n]}$ where~$a'_{i,j} \leq a_{i,j}$ for all~$i,j \in [n]$ and~$\frac{\sum_{(i,j) \in V \times V} a'_{i,j}}{\sum_{(i,j) \in V \times V} a_{i,j}} = \eta$.
We also say that~$G$ can host an~$\eta$-fraction of~$\mathcal{M}$.
Note that since we assume that the input matrix is doubly stochastic, the weak throughput can equivalently be defined by~$\frac{\sum_{(i,j) \in V \times V} a'_{i,j}}{n} = \eta$.
Finally, the direct weak throughput of~$G$ with respect to~$\mathcal{M}$ is the largest number~$\eta$ such that~$G$ can directly host an~$\eta$-fraction of~$\mathcal{M}$ and the (weak) throughput of a graph is the smallest throughput of~$G$ for any (doubly stochastic) demand matrix~$\mathcal{M}$.

Keslassy, Chang, McKeown, and Lee showed the following result on the optimal demand-oblivious throughput.\footnote{Recently, similar results have also been obtained for dynamic settings~\cite{addanki2023mars,amir2022optimal,KCKL05,dc-throughput}.}
They provided a certain graph for each number~$n$ of nodes that achieves a throughput of~$\frac{n}{2n-1}$ and showed that no other graph has  throughput at least~$\frac{n}{2n-1}$.

\begin{theorem}[{\cite[Theorem 9]{KCKL05}}]
    \label{thm:oblivous}
    For each positive integer~$n$, there exists a directed graph~$G_n$ with~$n$ vertices such that for any~$n \times n$ demand matrix~$\mathcal{M}$, $G_n$ achieves throughput~$\frac{n}{2n-1}$ for~$\mathcal{M}$.
    For any graph~$G$ with~$n$ vertices and any~$\varepsilon > 0$, there exists a matrix~$\mathcal{M}_{G,\varepsilon}$ such that~$G$ does not achieve throughput~$\frac{n}{2n-1}+\varepsilon$ for~$\mathcal{M}_{G,\varepsilon}$.
\end{theorem}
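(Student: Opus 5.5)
The plan has two parts: an explicit construction for the upper bound, and a counting argument for the matching lower bound. The counting argument charges every unit of demand that is not served on a single arc for at least two arcs of capacity.

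For the first part, I would take $G_n$ to be the complete directed graph on $[n]$ with a self-loop at every vertex. This graph is $n$-regular, so with the normalisation of the formal model every arc has capacity $1$ and we must route $n\theta\mathcal{M}$. I would use Valiant-style two-hop load balancing, as in \cite{KCKL05}, and split the demand $a_{u,v}$ into two parts.
\begin{itemize}
\item A fraction $\alpha$ of the demand is sent directly on the arc $(u,v)$.
\item The remaining fraction is spread uniformly over all two-hop paths $u\to w\to v$, one for each $w\in[n]$.
\end{itemize}
The load on an arc $(x,y)$ then consists of three contributions:
\begin{itemize}
\item the direct part $n\theta\alpha\, a_{x,y}$;
\item first hops of spread traffic, giving $\theta(1-\alpha)\sum_v a_{x,v} = \theta(1-\alpha)$;
\item second hops of spread traffic, giving $\theta(1-\alpha)\sum_u a_{u,y} = \theta(1-\alpha)$.
\end{itemize}
The last two use double stochasticity. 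The worst case is $a_{x,y}=1$, so the load is at most $\theta\bigl(n\alpha+2(1-\alpha)\bigr)$. The choice $\alpha=0$ with all traffic two-hop, including the $w=u$ or $w=v$ terms that use self-loops, gives load $2\theta$, i.e.\ only throughput $1/2$. To reach $\frac{n}{2n-1}$ I would refine this by treating the degenerate two-hop paths through $w\in\{u,v\}$ as one-hop paths. Equivalently, one can route each pair uniformly over the $2n-1$ "paths" consisting of the direct arc and the $2(n-1)$ genuine two-hop paths, weighted appropriately. Then I would check that every arc carries at most $\theta\cdot\frac{2n-1}{n}$. This is the calculation I expect to need care with. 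Alternatively I would simply invoke \cite[Theorem 9]{KCKL05}, since the statement is cited.

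For the second part, fix any $n$-vertex graph $G$ with capacities normalised so that each vertex has total out-capacity and in-capacity $1$; the total capacity is therefore $n$. I would choose $\mathcal{M}$ to be a permutation matrix $\pi$ that avoids the heavy arcs of $G$. Averaging over all $n!$ permutations (or over the $n$ cyclic shifts), the expected total capacity of arcs $(i,\pi(i))$ equals $\frac{1}{n}\sum_{i,j} c_{i,j} = 1$. So some permutation $\pi$ has total direct capacity $\sum_i c_{i,\pi(i)}\le 1$. Demand $\theta$ of pair $(i,\pi(i))$ routed on its own arc uses at most that arc's capacity. Every unit routed otherwise uses at least two arcs, so it consumes at least $2$ units of capacity. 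Hence
\[
\sum_i \bigl(\theta - c_{i,\pi(i)}\bigr)\cdot 2 + \sum_i c_{i,\pi(i)} \le n .
\]
This gives $2n\theta - 1 \le n$, i.e.\ $\theta\le \frac{n+1}{2n}$.

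That bound is not tight, and the main obstacle is sharpening it to $\frac{n}{2n-1}$. The refinement I would try first is to bound $\sum_i c_{i,\pi(i)}$ more tightly. I would pick $\pi$ uniformly among the permutations avoiding the self-pairs that are badly served, and I would account for the fact that a self-loop at $i$ serves only the pair $(i,i)$. The target inequality is $\sum_i c_{i,\pi(i)}\le \theta$-dependent, giving $2n\theta-\theta\le n$, which is exactly $\theta\le\frac{n}{2n-1}$. The form of this target suggests the right averaging is over the $n$ cyclic shifts. Rather than bounding the direct capacity separately, I would treat each shift $k$ with its own LP-dual certificate, sum the resulting capacity inequalities over all $k$, and use that the arcs are partitioned exactly once among the shifts. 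This should yield $n\theta(2n-1)\le n^2$. Any $\varepsilon>0$ slack then follows by choosing the best shift.
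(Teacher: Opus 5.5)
The paper gives no proof of this statement; it imports it from \cite{KCKL05}. Your fallback of citing that reference is therefore exactly what the paper does. Your self-contained argument, however, has a genuine gap in each half.

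\textbf{Construction.} The complete digraph with one self-loop per vertex and uniform capacity $\frac1n$ does not achieve $\frac{n}{2n-1}$, and no routing refinement can fix it. Take any derangement permutation matrix. For $n=2$ this is the swap matrix, where every path from $a$ to $b$ must use the single arc $(a,b)$ of capacity $\frac12$.
\begin{itemize}
\item Self-loops are useless for such demands, so only $n-1$ units of capacity are usable.
\item The direct arcs carry at most $n\cdot\frac1n=1$ in total.
\item Everything else needs at least two arcs.
\end{itemize}
Hence $2n\theta-1\le n-1$, i.e.\ $\theta\le\frac12$. The defect lies in the capacities, not the routing. The right graph is the $(2n-1)$-regular multigraph with one loop and two parallel arcs per ordered pair of distinct vertices. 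Its loop capacity is $\frac{1}{2n-1}$ and its pair capacity is $\frac{2}{2n-1}$, which is why the paper notes that the optimal oblivious construction has $N=2n-1$. On this graph, use plain two-hop routing through a uniform intermediate $w\in[n]$, skipping degenerate hops and using the loop only when $u=v=w$. This puts load $\frac{2\theta}{n}$ on each ordered pair and at most $\frac{\theta a_{x,x}}{n}\le\frac{\theta}{n}$ on each loop. Both are within capacity exactly when $\theta\le\frac{n}{2n-1}$.

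\textbf{Adversarial matrix.} Your bound $\frac{n+1}{2n}$ is correct but is not the statement. The sharpening you sketch, summing per-shift inequalities uniformly over all $k$, cannot reach $\frac{n}{2n-1}$. The missing idea is to isolate the loop capacity $S=\sum_i c_{i,i}$ and play the identity against the derangement shifts.
\begin{itemize}
\item For the identity $I$, loops are the only one-hop routes, so $2n\theta\le n+S$.
\item For each cyclic shift $\pi_k(i)=i+k \bmod n$ with $k\in[n-1]$, loops can be deleted from all paths. Only $n-S$ capacity is then usable, so $2n\theta\le n-S+\sum_i c_{i,\pi_k(i)}$.
\item The shifts $k\in[n-1]$ cover each non-loop pair once, so summing gives $2(n-1)\theta\le n-S$.
\end{itemize}
Adding the first and last inequalities eliminates $S$ and yields $(4n-2)\theta\le 2n$. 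This is your target $n(2n-1)\theta\le n^2$, but it requires weighting the identity $n$ times as heavily as each derangement shift. With uniform weights you get only $2n^2\theta\le n^2+n-(n-1)S$, which at $S=0$ is again $\frac{n+1}{2n}$. Done correctly, some permutation matrix among $I,P_1,\dots,P_{n-1}$ already has throughput at most $\frac{n}{2n-1}$.
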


Our paper is motivated by two fundamental questions. First, can these bounds be improved in demand-aware networks, that is, if we are allowed to chose the graph after knowing the input matrix, and second, what is the computational complexity for computing a graph maximizing (weak) throughput. 
As it is common in the related literature in the real of demand-aware networks (but not in the realm of demand-oblivious throughput), we require that all arcs have the same capacity. This also means that there is a fixed number~$N$ of arcs entering and leaving any given node.
The (unique) optimal demand-oblivious construction turned out to have~${N = 2n-1}$ and to make a meaningful separation from the demand-oblivious case, we decided to also pick~$N=2n-1$ for the demand-aware case.
This is allows us to state the four computational problems we study in this work:

\smallskip

\defproblem{\strong}{A doubly stochastic demand matrix~$\mathcal{M}$ and a number~$0 \leq \kappa \leq 1$.}{Is there a directed \Np-regular graph~$G$ such that the \old{} of~$G$ with respect to~$\mathcal{M}$ is at least~$\kappa$?}

\defproblem{\thr}{A doubly stochastic demand matrix~$\mathcal{M}$ and a number~$0 \leq \kappa \leq 1$.}{Is there a directed \Np-regular graph~$G$ such that the \new{} of~$G$ with respect to~$\mathcal{M}$ is at least~$\kappa$?}

\defproblem{\dt}{A doubly stochastic demand matrix~$\mathcal{M}$ and a number~$0 \leq \kappa \leq 1$.}{Is there a directed \Np-regular graph~$G$ such that the \direct{} of~$G$ with respect to~$\mathcal{M}$ is at least~$\kappa$?}

\defproblem{\di}{A doubly stochastic demand matrix~$\mathcal{M}$ and a number~$0 \leq \kappa \leq 1$.}{Is there a directed \Np-regular graph~$G$ such that the \direct{} of~$G$ with respect to~$\mathcal{M}$ is at least~$\kappa$?}

\subsection{Our Contribution}

We contribute fundamental insights into the achievable throughput bounds of demand-aware networks
and the underlying computational complexities.
The main finding of our work is a separation result, solving the open problem whether demand-awareness 
can improve the throughput also in the worst case. Specifically (more formally stated and proven as \cref{thm:main} later):

\begin{theorem}
    \label{thm:mainshort}
    For each~$\varepsilon > 0$, there exists an integer~$n_{\varepsilon}$ such that for each~$n \geq n_{\varepsilon}$ and each~$n \times n$ demand matrix~$\mathcal{M}$, there exists a graph~$G$ that achieves a throughput of~$\frac{5}{8}-\varepsilon$ for~$\mathcal{M}$.
\end{theorem}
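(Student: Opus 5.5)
The plan is to build $G$ as the union of two arc-disjoint parts: a demand-aware \emph{direct} part $D$ that carries the heavy entries of $\mathcal{M}$ on single hops, and a demand-oblivious \emph{uniform} part $U$ that carries the remaining light demand. Set $N=2n-1$ and fix a constant $\beta\in(0,2)$, to be optimized later.

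\textbf{The uniform part.} Let $U$ consist of $b=\lfloor \beta n/2\rfloor$ parallel copies of every ordered pair $(u,v)$, self-loops included. Then $U$ is $bn$-regular, and each vertex has $N-bn\approx(2-\beta/2)\cdot n/1 \cdot \tfrac12$, i.e.\ roughly $(1-\beta/2)N$, arcs left over for $D$. On $U$ I will use two routing modes together.
\begin{itemize}
\item Single-hop: any pair $(u,v)$ can send up to $b$ units of demand on its own parallel arcs.
\item Valiant two-hop: a residual matrix whose row and column sums are at most $R$ is spread evenly over all intermediate vertices. This puts load about $2R/n$ on each arc, so it is feasible whenever $2R/n$ fits into the capacity left over after the single-hop mode.
\end{itemize}
Since both modes are fractional flows on the same arcs, I only need to account for the total load on each arc.

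\textbf{The direct part.} Consider $\theta N\mathcal{M}$, the demand to be hosted with unit arc capacities. Let $d=N-bn$. Take the scaled matrix $d\mathcal{M}$; its row and column sums all equal $d$. Round it to a nonnegative integer matrix $Z$ with the same row and column sums and $|Z_{ij}-d\,m_{ij}|<1$. This can be done by integrality of bipartite flows, or equivalently by a Birkhoff--von Neumann decomposition. Then $Z$ is the adjacency matrix of a $d$-regular directed multigraph, and this graph is $D$. Pair $(i,j)$ sends $\min(\theta N m_{ij}, Z_{ij})$ on its arcs of $D$, and sends the excess $e_{ij}=\max(0,\theta N m_{ij}-Z_{ij})$ through $U$. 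In $U$, each excess is first put on the single-hop arcs of its pair, up to some threshold, and whatever remains goes through the Valiant mode.

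\textbf{The optimization (expected main obstacle).} It remains to choose $\beta$, $\theta=\tfrac58-\varepsilon$, and the single-hop threshold so that, for \emph{every} doubly stochastic $\mathcal{M}$, the combined load on each arc of $U$ is at most $1$. The difficulty is the adversary's trade-off.
\begin{itemize}
\item Heavy entries, with $m_{ij}$ of order $1$, are handled by $D$ alone exactly when $\theta\le d/N$.
\item Light entries, with $m_{ij}=O(1/n)$, lose up to a constant fraction to rounding. Their total rounding loss per row can be $\Theta(n)$, which is comparable to the whole budget, so it must be absorbed by the single-hop arcs of $U$, whose per-pair capacity is $b$.
\end{itemize}
To handle this, I will parametrize each row by the mass it places on entries of each size class. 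I will bound the worst-case excess per row, and per pair, as a function of $\beta$ and $\theta$. The bound should be piecewise linear in the row profile, so its maximum is attained at extreme profiles: all mass on one entry, uniform mass, or mass concentrated at the break-point where $d\,m_{ij}$ sits just below an integer. Balancing the all-heavy extreme against the all-light extreme should give a value of $\beta$ at which $\theta\to\tfrac58$. If this two-extreme balance does not yield exactly $\tfrac58$, my first fallback is to randomize the rounding of $d\mathcal{M}$ (independent randomized rounding with dependent correction). Then each entry's expected loss is $0$, Chernoff bounds control each row's excess up to $o(n)$, and $n_\varepsilon$ is chosen so that these lower-order terms, together with the $\lfloor\cdot\rfloor$ losses in $b$ and $d$, are below $\varepsilon N$. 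This probabilistic-method step is the one I expect to need the most care.
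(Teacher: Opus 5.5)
Your proposal has two genuine gaps. The uniform part cannot be parametrized the way you describe, and the rounding/concentration step you defer is the whole proof.

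\textbf{The uniform part.} With $b=\lfloor\beta n/2\rfloor$ parallel copies of each of the $n$ ordered pairs leaving a vertex, $U$ has out-degree $bn=\Theta(n^2)$. That is far above the budget $N=2n-1$, and the leftover degree you then write down is inconsistent. A complete uniform part fits only for $b=1$, which forces $d=n-1$. So there is no $\beta$ to tune, and the mechanism you announce (balancing the all-heavy extreme against the all-light extreme in $\beta$) has nothing to act on. With $b=1$ you are on the paper's graph, and the real tension is this. A permutation-like row must push $\theta N-(n-1)\approx\frac n4$ through $U$. Uniform Valiant turns this into load $\approx\frac12$ on every arc, so your single-hop threshold $\tau$ can be at most about $\frac12$. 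Everything then hinges on whether the rounding fluctuations of light entries fit under that threshold, which is exactly the step you postpone.

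\textbf{Arbitrary rounding fails.} With an arbitrary rounding, as in your main plan, your routing is infeasible, so randomization is not a fallback but the heart of the proof. Take $n=2k+1$ with indices mod $n$. Let $(n-1)\mathcal{M}$ have diagonal entries $k$, entries $1-\delta$ at $(i,i+t)$, and entries $\delta$ at $(i,i+k+t)$, for $t\in[k]$. The rounding $Z_{i,i}=k$, $Z_{i,i+t}=0$, $Z_{i,i+k+t}=1$ preserves all row and column sums and has $|Z_{ij}-d\,m_{ij}|<1$. Each row then has $k$ pairs $(i,i+t)$ with excess $\approx\frac54$ and a diagonal excess $\approx\frac k4$. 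So for every $\tau\le1$ the arcs $(i,i+t)$ carry $\tau$ single-hop, while $R_i,C_j\approx\frac{3k}{2}-\tau k$. Valiant then adds $\approx\frac32-\tau$ on these arcs, for a total load of $\approx\frac32$.

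\textbf{What the randomized step actually requires.} First, you need dependent rounding with exact marginals, exact row and column sums, and the negative-correlation properties (3)--(4) of \cref{thm:dependentrounding}. You also need the tail bound of \cref{thm:tailprob} and a union bound over the $2n$ rows and columns. ``Independent rounding with dependent correction'' gives neither exact sums nor negative correlation. Second, you need the right expectation estimate. The row excess $\sum_j(\theta N m_{ij}-Z_{ij})^+$ is convex in $Z$, so ``expected loss $0$'' does not control it. For example, if all off-diagonal entries of $(n-1)\mathcal{M}$ equal $\frac25$, its mean is about $\frac{9n}{20}$, not $\frac n4$.

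\textbf{How the paper closes this.} The expected leftover on an oblivious arc is $f^2-f+1\ge\frac34$, where $f$ is the fractional part, and this gives common excess $\zeta_{i,j}\ge(\frac12-2\varepsilon)n$. Two-hop routing proportional to $\min(\eta_{i,h},\eta_{h,j})$, instead of uniform spreading, then bounds the load on $(a,b)$ by $\eta_{a,b}$.

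\textbf{A possible completion of your framework.} Your scheme can likely be finished differently, with $\tau=\frac12$. For $c=\theta N/(n-1)\ge\frac98$, one checks that $\mathrm{E}[(cx-Z-\frac12)^+]\le(c-1)x$. For the fractional part $f$ of $x$, this reduces to $cf^2-\frac32f+\frac12\ge0$. Then with high probability all $R_a,C_b\le(\frac14-\varepsilon)n+O(1)$, and every $U$-arc carries at most $\frac12+\frac{R_a+C_b}{n}\le1$. None of these choices or estimates appears in your proposal.
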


More systematically, we show lower and upper bounds for all four problem variants we consider.
The results are shown in \cref{tab:bounds}.
\begin{table}[t]
    \setlength{\tabcolsep}{10pt} 
    \renewcommand{\arraystretch}{1.5}
    \centering
    \caption{Lower and upper bounds for the different notions of demand-aware throughput we consider in this paper. An entry~$A$ / $B$ means a lower bound of~$A$ and an upper bound of~$B$. The variable~$\varepsilon$ stands for an arbitrarily small positive constant. The entries marked with~$\dagger$ only hold for~$n=2$ and the entries marked with~$\ddagger$ only hold for a sufficiently large number~$n$ of nodes.}
    \begin{tabular}{c|c c}
         demand aware & throughput & weak \\\hline
         general & $\frac{5}{8}-\varepsilon^{\ddagger}$ {\Large /} $\frac{5}{6}^{\dagger}$ & $\frac{7n-4}{8n-4}$ {\Large /} $\frac{8}{9}^{\dagger}$ \\
         direct & $\frac{n}{2n-1}$ {\Large /} $\frac{n}{2n-1}$ & $\frac{7n-4}{8n-4}$ {\Large /} $\frac{7n-3}{8n-4}$
    \end{tabular}
    \label{tab:bounds}
\end{table}%
To the best of our knowledge, weak throughput has only been considered implicitly in the literature before~\cite{jyothi2016measuring} and we are not aware of any formal definitions and guarantees.
Our bounds for \dt{} and \di{} are essentially tight, whereas there are still gaps for \strong{} and \thr.

We also study the computational complexity of all four variants (depicted in \cref{tab:cc}) and show that both direct versions can be computed in polynomial time.
\begin{table}[t]
    \setlength{\tabcolsep}{10pt}
    \renewcommand{\arraystretch}{1.5}
    \centering
    \caption{The computational complexity of the different demand-aware throughput notions we consider.}
    \begin{tabular}{c|c c}
         demand aware & throughput & weak \\\hline
         general & open & NP-hard\\
         direct & $O(n^3)$ & $O(n^4 \log(n))$
    \end{tabular}
    \label{tab:cc}
\end{table}%
Regarding the multi-hop setting, we show that the demand-aware weak throughput is NP-hard 
to compute and the complexity of computing the demand-aware throughput remains an interesting open problem.
We mention that while polynomial-time algorithms for the exact weak throughput are impossible assuming~\mbox{P $\neq$ NP}, the lower bound listed in \cref{tab:bounds} is constructive and a graph achieving this bounds can be computed in polynomial time.

\subsection{Our Methods}

We next give a high-level overview of how the different results are achieved.
We start with the algorithms.
The optimal algorithm for \dt{} is a simple greedy algorithm and the argument why it is optimal also gives the tight lower bound.
The optimal algorithm for \di{} is a reduction to maximum-cost flow.
The NP-hardness for \thr{} is shown via an intricate reduction from \textsc{Exact Cover by 3-Sets}.
On a very high level, there is a vertex for each element and a gadget for each set in the original instance.
For each set gadget, there are two optimal solutions locally and we ensure that one of them can be used at most~$\frac{N}{3}$ times, where~$N$ is the number of elements in the original instance.
This solution also allows all nodes corresponding to elements in the respective set to fulfill a tiny bit of additional demand.
The value~$\kappa$ is then chosen such that all such nodes need to fulfill this additional demand and hence any solution needs to pick~$\frac{N}{3}$ sets that together contain all~$N$ elements, that is, they form an exact cover.

We continue with the lower and upper bounds, but we mention that all upper bounds are shown by constructing explicit (relatively simple) counterexamples and hence there is not much to say about them.
The lower bound for \dt{} follows from the greedy algorithm mentioned above.
Our main result is the lower bound for \strong.
This result builds upon two results that were previously used in approximation algorithms.
The first is called \emph{dependent rounding} and gives a randomized algorithm for a certain matrix-rounding problem.
The second result allows to apply Chernoff-type bounds in a setting with variables that are not independent but dependent in a particular way which appears in the context of dependent rounding.
In a nutshell, we use half of the arcs in a demand-aware manner using dependent rounding to greedily satisfy a large fraction of the demands directly.
The other half of the arcs are placed in a demand-oblivious manner.
We then send a certain fraction of the demands originating in any given vertex directly and therein mostly use the demand-aware arcs but potentially also some of the capacity of any demand-oblivious arc.
The dependent rounding ensures that the expected amount for any such arc is somewhat small and using the probabilistic method and the probability analysis mentioned above, we are able to show that for a sufficiently large number of nodes, some rounding must exist where the sum of capacities of the demand-oblivious arcs entering or leaving any particular vertex is at least~$\frac{3n}{4}-\varepsilon$.
These capacities then allow us to send all remaining demands via paths of length two using an adaptation of the demand-oblivious routing scheme.
In particular, for each pair~$(v_i,v_j)$ of nodes, we send demand from~$v_i$ to~$v_j$ via any given node~$v_h$ which is proportional to~$\min(c(i,h),c(h,j))$, where~$c(a,b)$ is the remaining capacity of the demand oblivious arc from~$v_a$ to~$v_b$.

The last result is an lower bound for \di{} (which is also a lower bound for \thr).
Here, we first show that given any matrix where all entries are between 0 and 1 and each row and column sums to an integer, there is a way to round all entries to 0 or 1 such that the sum in each row and column remains the same and the average entry rounded up is at least as large as the average entry in the entire matrix.
Them we use this result for the lower bound as follows.
We first greedily place arcs whose full capacity can be used to serve demand directly.
Then, we use the above rounding to ensure that from the remaining demand, we can satisfy at least the average remaining demand times the number of remaining arcs to be placed.
We also show that this gives a lower bound of~$\frac{7n-4}{8n-4}$.

The rest of the paper is structured as follows.
We first show in \cref{sec:relations} how the different throughput notions relate to one another.
\cref{sec:strong} is devoted to showing our upper and lower bounds for \strong{} and \dt.
Our upper and lower bounds for \thr{} and \di{} are shown in \cref{sec:weak} and our algorithms and the NP-hardness result are proven in \cref{sec:cc}.
We conclude with \cref{sec:conclusion}.

\section{Relations Between Different Throughput Notions}
\label{sec:relations}

We first investigate how the different throughput notions relate to each other.
Our findings are summarized in \cref{fig:relations}.
\begin{figure}
    \centering
    \begin{tikzpicture}
        \node at(0,2) (weak) {weak throughput};
        \node at(-2,1) (strong) {throughput};
        \node at(2,1) (weak direct) {weak direct throughput};
        \node at(0,0) (strong direct) {direct throughput};
        \draw[thick,->] (weak) to (strong);
        \draw[thick,->] (weak direct) to (strong direct);
        \draw[thick,->] (weak) to (weak direct);
        \draw[thick,->] (strong) to (strong direct);
    \end{tikzpicture}
    \caption{Relations of different demand-aware throughput notions. An arrow from a notion~$A$ to a notion~$B$ indicates that~$A \geq B$ for all demand matrices.}
    \label{fig:relations}
    \ifarxiv\else\Description[Hierarchy of throughput notions.]{Weak Throughput is the largest and direct throughput is the smallest. The other two (throughput and weak direct throughput) are incomparable in between the other two.}\fi
\end{figure}%
Note that any single arc also describes a non-empty path.
This simple observation yields the following.
\begin{observation}
    \label{obs:direct}
    Let~$\mathcal{M}$ be a doubly-stochastic~$n \times n$~matrix and~${\kappa \in [0,1]}$.
    If~$(\mathcal{M},\kappa)$ is a yes-instance of \dt, then it is also a yes-instance of \strong.
    Furthermore, if~$(\mathcal{M},\kappa)$ is a yes-instance of \di, then it is also a yes-instance of \thr.
\end{observation}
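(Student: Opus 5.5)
The plan is to show that every certificate for the direct variant is already a certificate for the corresponding multi-hop variant. Concretely, I will fix the same $(2n-1)$-regular graph $G$ and reuse the same flow decomposition.

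For the first claim, suppose $(\mathcal{M},\kappa)$ is a yes-instance of \dt. Then there is a directed $(2n-1)$-regular graph~$G$ whose direct throughput with respect to~$\mathcal{M}$ is some $\theta' \geq \kappa$. By definition, $G$ can directly host $\theta'\mathcal{M}$. That is, there is a collection $\mathcal{P}=\{(P_i,d_i)\}$ in which every $P_i$ consists of a single arc, the demand equalities $\sum_i d_i\boldsymbol{1}_{s_i=u}\boldsymbol{1}_{t_i=v}=(2n-1)\theta' a_{u,v}$ hold, and every arc carries load at most~$1$.

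A single arc $(u,v)$ is a non-empty path from $u$ to $v$ in the sense of \cref{sec:prelim}; this includes the case $u=v$, where it is a self-loop. So $\mathcal{P}$ is also an admissible flow for the definition of hosting. The demand equalities and capacity constraints are literally the same conditions, so $G$ hosts $\theta'\mathcal{M}$. Hence the throughput of~$G$ with respect to~$\mathcal{M}$, being the largest $\theta$ for which $G$ hosts $\theta\mathcal{M}$, is at least $\theta'\geq\kappa$. The same~$G$ therefore witnesses that $(\mathcal{M},\kappa)$ is a yes-instance of \strong.

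For the second claim I argue identically. A yes-instance of \di{} gives a $(2n-1)$-regular $G$ and a matrix $\mathcal{M}'$ with $a'_{i,j}\leq a_{i,j}$ and total mass $\eta n$, $\eta\geq\kappa$, which $G$ directly hosts. By the observation above, $G$ also hosts $\mathcal{M}'$. So the weak throughput of~$G$ is at least $\eta\geq\kappa$, and $(\mathcal{M},\kappa)$ is a yes-instance of \thr.

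There is no real obstacle in either part. The only point to check is that the paper's convention requires paths to be non-empty, not of length at least two, so single-arc paths and self-loops are indeed allowed in the multi-hop model. One remark on the problem statement: the definition of \di{} literally says ``direct throughput''. I read it as the weak direct throughput, as the name indicates. Under the literal reading the claim still follows, by combining the first part with the fact that throughput is at most weak throughput.
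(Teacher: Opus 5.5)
Your proposal is correct and matches the paper's argument: the paper justifies this observation only by noting that a single arc is a non-empty path, so any direct hosting of $\theta'\mathcal{M}$ (or of a sub-matrix $\mathcal{M}'$) on the same $(2n-1)$-regular graph is already a valid hosting in the multi-hop model. You are right that the \textsc{Demand-Aware Weak Direct Throughput} definition is meant to refer to weak direct throughput (the literal wording is a typo), and your fallback argument for the literal reading is also sound.
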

Next, we show that the weak (direct) throughput gives an upper bound for the (direct) throughput.
To this end, consider any instance~$(\mathcal{M},\kappa)$ of \strong{} or \dt, where~$\mathcal{M} = (a_{i,j})_{i,j \in [n]}$.

Let~${a'_{i,j} = \kappa a_{i,j}}$.
Since~${0 \leq \kappa \leq 1}$, it holds that~$a'_{i,j} \leq a_{i,j}$.
As each row of the resulting matrix~$\mathcal{M}'$ sums to~$\kappa$ and there are~$n$ rows, it holds that~$\frac{\sum_{i,j \in V \times V} a'_{i,j}}{n} = \frac{n\kappa}{n} = \kappa$, that is, $(\mathcal{M},\kappa)$ is a yes-instance of \thr{} or \di, respectively.
This yields the following.

\begin{observation}
    \label{obs:weak}
    Let~$\mathcal{M}$ be a doubly-stochastic~$n \times n$~matrix and~${\kappa \in [0,1]}$.
    If~$(\mathcal{M},\kappa)$ is a yes-instance of \textsc{Demand-Aware\linebreak Throughput}, then it is also a yes-instance of \thr.
    Furthermore, if~$(\mathcal{M},\kappa)$ is a yes-instance of \dt, then it is also a yes-instance of \di.
\end{observation}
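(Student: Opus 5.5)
The argument is essentially the one sketched just before the statement, and I would write it out in two halves matching the two claims.

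Fix a doubly stochastic matrix $\mathcal{M}=(a_{i,j})_{i,j\in[n]}$ and a number $\kappa\in[0,1]$ such that $(\mathcal{M},\kappa)$ is a yes-instance of \strong. By definition there is a directed \Np-regular graph $G$ that can host $\theta\mathcal{M}$ for some $\theta\ge\kappa$. The key step is a \emph{monotonicity} observation: if $G$ can host $\theta\mathcal{M}$ and $0\le\kappa\le\theta$, then $G$ can also host $\kappa\mathcal{M}$.

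To see this, I take the flow $\mathcal{P}=\{(P_i,d_i)\}$ witnessing that $G$ hosts $\theta\mathcal{M}$ and replace each $d_i$ by $\frac{\kappa}{\theta}d_i$. If $\theta=0$, then $\kappa=0$ and the empty flow works. After rescaling, the demand equalities become $r\kappa a_{u,v}$. The arc loads are multiplied by $\frac{\kappa}{\theta}\le 1$, so the capacities are still respected. The paths themselves are unchanged, so the same argument preserves direct hosting, where every path is a single arc.

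With monotonicity in hand, set $a'_{i,j}=\kappa a_{i,j}$, so that $\mathcal{M}'=\kappa\mathcal{M}$. Since $\kappa\le 1$, we have $a'_{i,j}\le a_{i,j}$ for all $i,j$. Each row of $\mathcal{M}$ sums to $1$, so
\[\frac{\sum_{(i,j)\in V\times V}a'_{i,j}}{\sum_{(i,j)\in V\times V}a_{i,j}}=\frac{n\kappa}{n}=\kappa .\]
The same graph $G$ hosts $\mathcal{M}'$, so $G$ can host a $\kappa$-fraction of $\mathcal{M}$.

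One point needs care. The weak throughput is the \emph{largest} $\eta$ for which $G$ hosts an $\eta$-fraction, so what I have shown is that this maximum is at least $\kappa$. That is exactly the yes-condition of \thr. This settles the first claim.

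The second claim follows word for word, replacing ``host'' by ``directly host''. The witness flow for \dt{} consists of single-arc paths, and rescaling the $d_i$ keeps every path a single arc. The same graph therefore directly hosts a $\kappa$-fraction of $\mathcal{M}$, so $(\mathcal{M},\kappa)$ is a yes-instance of \di.

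There is no real obstacle in this proof. The only step needing explicit justification is the downward monotonicity in the scaling factor, which is what lets us pass from throughput $\theta\ge\kappa$ to hosting exactly $\kappa\mathcal{M}$.
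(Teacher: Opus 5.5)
Your proof is correct and follows the paper's own argument: set $a'_{i,j}=\kappa a_{i,j}\le a_{i,j}$ and note that the entries sum to $n\kappa$, so the same graph hosts a $\kappa$-fraction of $\mathcal{M}$, directly in the direct case. Your explicit downward-monotonicity step, rescaling the witness flow by $\kappa/\theta$, fills in a detail the paper leaves implicit when it passes from throughput at least $\kappa$ to hosting exactly $\kappa\mathcal{M}$.
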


Given the above results, one might now wonder whether there is also a similar connection between the demand-aware throughput and the demand-aware weak direct throughput.
We next show that this is not the case.
Consider the matrices
\[\mathcal{M}_1 = \begin{pmatrix}
    \frac{1}{2}&\frac{1}{2}\\
    \frac{1}{2}&\frac{1}{2}
\end{pmatrix} \text{ and } \mathcal{M}_2 = \begin{pmatrix}
\frac{9}{10}&\frac{1}{10}\\
\frac{1}{10}&\frac{9}{10}
\end{pmatrix}.\]
The optimal networks for~$\mathcal{M}_1$ and~$\mathcal{M}_2$ for the demand-aware throughput and the demand-aware weak direct throughput are shown in \cref{fig:example}.
\begin{figure*}[t]
  \begin{subfigure}{0.43\textwidth}
    \centering
    \begin{tikzpicture}
        \node[circle,draw] at (-4,3) (a1) {$1$};
        \node[circle,draw] at (-2,3) (b1) {$2$};
        \draw[loop above,->] (a1) to (a1);
        \draw[loop above,->] (b1) to (b1);
        \draw[->,bend left=10] (a1) to (b1);
        \draw[->,bend left=10] (b1) to (a1);
        \draw[->,bend left=30] (a1) to (b1);
        \draw[->,bend left=30] (b1) to (a1);
    \end{tikzpicture}
    \caption{The graph optimizing throughput for~$\mathcal{M}_1$. Note that for~${\kappa = \frac{8}{9}}$, we can use the capacity of~$\frac{2}{3}$ from vertex~$1$ to vertex~$2$ to send a flow of~$\frac{4}{9}$ directly and to send twice a flow of~$\frac{1}{9}$ from~$1$ to~$2$ (once to start a flow which is then sent back to vertex~$1$ and once to complete a symmetric flow starting in vertex~$2$). The flow from vertex~$1$ to itself is then~$\frac{1}{3} + \frac{1}{9} = \frac{4}{9}$ and all arc capacities are fully saturated.}
  \end{subfigure}\hspace*{1cm}
  \begin{subfigure}{.43\textwidth}
    \centering
    \begin{tikzpicture}
        \node[circle,draw] at (-4,3) (a1) {$1$};
        \node[circle,draw] at (-2,3) (b1) {$2$};
        \draw[loop above,->] (a1) to (a1);
        \draw[loop left,->] (a1) to (a1);
        \draw[loop above,->] (b1) to (b1);
        \draw[loop right,->] (b1) to (b1);
        \draw[->,bend left=15] (a1) to (b1);
        \draw[->,bend left=15] (b1) to (a1);
    \end{tikzpicture}
    \caption{The graph optimizing throughput for~$\mathcal{M}_2$. Note that for~${\kappa = \frac{100}{114}}$, we can use the capacity of~$\frac{1}{3}$ from vertex~$1$ to vertex~$2$ to send a flow of~$\frac{10}{114}$ directly and to send twice a flow of~$\frac{14}{114}$ from~$1$ to~$2$ (once for flow from~$1$ to~$1$ and once for flow from~$2$ to~$2$). The flow from vertex~$1$ to itself is then~${\frac{2}{3} + \frac{14}{114} = \frac{2 \cdot 38+14}{114} = \frac{90}{114} = \frac{100}{114} \cdot \frac{9}{10}}$ and all arc capacities are fully saturated since~${\frac{10+2\cdot 14}{114} = \frac{38}{114} = \frac{1}{3}}$.}
  \end{subfigure}
  
  \begin{subfigure}{.43\textwidth}
    \centering
    \begin{tikzpicture}
        \node[circle,draw] at (-4,2.5) (a1) {$1$};
        \node[circle,draw] at (-2,2.5) (b1) {$2$};
        \draw[loop above,->] (a1) to (a1);
        \draw[loop left,->] (a1) to (a1);
        \draw[loop above,->] (b1) to (b1);
        \draw[loop right,->] (b1) to (b1);
        \draw[->,bend left=15] (a1) to (b1);
        \draw[->,bend left=15] (b1) to (a1);
        \node[circle,draw] at (-4,1) (a2) {$1$};
        \node[circle,draw] at (-2,1) (b2) {$2$};
        \draw[loop above,->] (a2) to (a2);
        \draw[loop above,->] (b2) to (b2);
        \draw[->,bend left=10] (a2) to (b2);
        \draw[->,bend left=10] (b2) to (a2);
        \draw[->,bend left=30] (a2) to (b2);
        \draw[->,bend left=30] (b2) to (a2);
    \end{tikzpicture}
    \caption{The graphs optimizing weak direct throughput for~$\mathcal{M}_1$. Note that in both cases each node can satisfy a flow of~$\frac{1}{2}$ to one of the two vertices and a flow of~$\frac{1}{3}$ to the other, that is~$\frac{5}{6}$ in total.}
    \end{subfigure}\hspace*{1cm}
  \begin{subfigure}{.43\textwidth}
    \centering
    \begin{tikzpicture}
        \node[circle,draw] at (-4,3) (a1) {$1$};
        \node[circle,draw] at (-2,3) (b1) {$2$};
        \draw[loop above,->] (a1) to (a1);
        \draw[loop left,->] (a1) to (a1);
        \draw[loop above,->] (b1) to (b1);
        \draw[loop right,->] (b1) to (b1);
        \draw[loop below,->] (a1) to (a1);
        \draw[loop below,->] (b1) to (b1);
    \end{tikzpicture}
    \caption{The graph optimizing weak direct throughput for~$\mathcal{M}_2$. Note that each node can satisfy~$0.9$ of the demand originating in it, that is, the weak direct throughput is (at least)~$\frac{9}{10}$.}
    \end{subfigure}
\caption{The directed $3$-regular graphs optimizing the throughput and the weak direct throughput for matrices $\mathcal{M}_1 = \begin{pmatrix}
    \frac{1}{2}&\frac{1}{2}\\
    \frac{1}{2}&\frac{1}{2}
\end{pmatrix}$ and $\mathcal{M}_2 = \begin{pmatrix}
\frac{9}{10}&\frac{1}{10}\\
\frac{1}{10}&\frac{9}{10}
\end{pmatrix}$.}
\label{fig:example}
\ifarxiv\else\Description[Directed 3-regular graphs optimizing different throughput requirements.]{The top left shows the directed 3-regular graph where each vertex has one self-loop and two arcs to the other vertex. The top right shows the graph with two self loops and one arc to the respective other vertex. The bottom left shows two graphs, the same as in the top left and the top right. The bottom right shows the graph where both vertices have each 3 self-loops.}\fi
\end{figure*}
On the one hand, it holds that~$(\mathcal{M}_1,\frac{8}{9})$ is a yes-instance of \strong{} and a no-instance of \di{} (since the highest weak direct throughput that is achievable for~$\mathcal{M}_1$ is~$\frac{5}{6}$).
On the other hand, $(\mathcal{M}_2,\frac{9}{10})$ is a yes-instance of \di, but a no-instance of \strong{} (as the highest achievable throughput for~$\mathcal{M}_2$ is~$\frac{100}{114} < \frac{9}{10}$).

To conclude this section, we observe that since the graph~$G$ behind \cref{thm:oblivous} is~$\Np$-regular, it holds that the demand-aware throughput for any demand matrix is at least as much as the one achieved by~$G$, that is, the demand-aware throughput is at least as good as the demand-oblivious throughput (where no restriction is made on the number of arcs in~$G$ other than that the sum of capacities outgoing and incoming to each node is~$1$).

\begin{observation}
    For any doubly-stochastic $n \times n$ demand matrix~$\mathcal{M}$ and any~$\kappa \leq \frac{n}{2n-1}$, it holds that~$(\mathcal{M},\kappa)$ is a yes-instance of \strong.
\end{observation}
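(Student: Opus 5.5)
The plan is to reduce the statement directly to \cref{thm:oblivous}. Since demand-aware means we may pick the graph after seeing~$\mathcal{M}$, it suffices to exhibit, for every doubly stochastic~$\mathcal{M}$, a single directed $\Np$-regular graph (with all arc capacities~$\frac{1}{2n-1}$) whose throughput with respect to~$\mathcal{M}$ is at least~$\frac{n}{2n-1}$. I will take the demand-oblivious graph~$G_n$ from \cref{thm:oblivous} and ignore~$\mathcal{M}$ entirely when constructing it.

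The first step is to check that~$G_n$ is admissible for \strong. As stated before the observation, the optimal construction of Keslassy et al.\ is $\Np$-regular with uniform capacities. Concretely, it is the complete directed graph with self-loops in which every ordered pair~$(u,v)$, including~$u=v$, gets capacity~$\frac{1}{2n-1}$, with a second parallel self-loop at each vertex. This gives~$n+1$ out-arcs and~$n+1$ in-arcs per vertex, which does not match~$\N$. So I need to recall their construction precisely.

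The construction I intend to use is the one in which each non-loop pair gets capacity~$\frac{2}{2n-1}$ and each self-loop gets~$\frac{1}{2n-1}$. Written as a multigraph, it has two parallel arcs for each ordered pair~$u\ne v$ and one self-loop at each vertex. Every vertex then has~$2(n-1)+1=\N$ outgoing and incoming arcs, each of capacity~$\frac{1}{2n-1}$, so the graph is $\Np$-regular in the paper's sense. I would verify this degree count explicitly; I expect it to be the only real check.

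The second step applies \cref{thm:oblivous}: $G_n$ achieves throughput~$\frac{n}{2n-1}$ for~$\mathcal{M}$, so it can host~$\frac{n}{2n-1}\mathcal{M}$. Throughput is monotone, in the sense that if~$G$ can host~$\theta\mathcal{M}$ then it can host~$\kappa\mathcal{M}$ for every~$\kappa\le\theta$ (scale each~$d_i$ by~$\kappa/\theta$). Hence~$G_n$ witnesses that~$(\mathcal{M},\kappa)$ is a yes-instance for every~$\kappa\le\frac{n}{2n-1}$.

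The one subtlety, which I would address in a sentence, is normalization. The paper's hosting condition for an $r$-regular graph is phrased with arc capacity~$1$ and demand~$r\mathcal{M}$. Scaling every capacity and demand in the flow from \cref{thm:oblivous} by~$r=\N$ turns it into exactly such a flow, so the two formulations agree.
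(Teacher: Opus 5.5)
Your proposal is correct and matches the paper's own one-line argument: the optimal demand-oblivious graph behind \cref{thm:oblivous} has two parallel arcs for every ordered pair of distinct vertices plus one self-loop per vertex, each of capacity $\frac{1}{2n-1}$. It is therefore $(2n-1)$-regular, hence an admissible demand-aware choice, and hosting $\frac{n}{2n-1}\mathcal{M}$ implies hosting $\kappa\mathcal{M}$ by scaling. Drop the mistaken first description in your second paragraph; also note that \cref{thm:oblivous} is stated existentially, so identifying $G_n$ with this explicit graph relies on the Keslassy et al.\ construction (as the paper does too), which you could instead confirm directly by a short two-hop Valiant load-balancing check.
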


\section{Throughput Bounds}
\label{sec:strong}
In this section, we show lower and upper bounds for the demand-aware (direct) throughput.
We start with a lower bound of~$\frac{5}{8}-\varepsilon$ for throughput for any~$\varepsilon > 0$ that relies on the probabilistic method.
Therein, we rely on randomized techniques previously applied in the context of approximation algorithms.
However, our final result will be a deterministic lower bound.
Afterwards, we investigate demand-aware direct throughput and show that a direct throughput of~$\frac{n}{2n-1}$ is always achievable and this bound is tight, that is, for any~$\varepsilon>0$, there exist matrices such that a throughput of~$\frac{n}{2n-1}+\varepsilon$ is unobtainable.

\subsection{Throughput}

We will next prove a lower and an upper bound for the demand-aware throughput of networks.
We start with the lower bound.
Therein, we make use the following result due to Gandhi et al.~\cite[Theorem 2.3]{GKPS06}.\footnote{We slightly reformulated the statement to fit our notation. The authors of the original result interpret any $n \times m$ matrix as the adjacency matrix of a biparitite graph with~$n$ vertices on one side and~$m$ vertices on the other side. We only require the special case where~$n = m$ and statement (P3) of the original paper only for the case~$b=0$.}

\begin{theorem}
    \label{thm:dependentrounding}
    Given a matrix~$\mathcal{M}=(\alpha_{i,j})_{i,j \in [n]}$ where each entry is between~$0$ and~$1$ and each row and each column sums to an integer value, there is a rounding scheme creating a matrix~$\mathcal{N} = (\beta_{i,j})_{i,j \in [n]}$ where~$\beta_{i,j} \in \{0,1\}$ and
    \begin{enumerate}
        \item $\beta_{i,j} = 1$ with probability (exactly)~$\alpha_{i,j}$,
        \item the sum of entries in~$\mathcal{N}$ in any row or column is the same as the corresponding sum in~$\mathcal{M}$ (with probability 1),
        \item for all~$i \in [n]$ and for any subset~$S \subseteq [n]$, the probability that~$\beta_{i,j} = 0$ for all~$j \in S$ is at most~$\prod_{j \in S} (1-\alpha_{i,j})$, and
        \item for all~$j \in [n]$ and for any subset~$S \subseteq [n]$, the probability that~$\beta_{i,j} = 0$ for all~$i \in S$ is at most~$\prod_{i \in S} (1-\alpha_{i,j})$.
    \end{enumerate}
\end{theorem}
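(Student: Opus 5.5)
Since the statement is the dependent rounding theorem of Gandhi et al., I will not reprove it from scratch but sketch the standard pipage-style argument behind it, specialised to the bipartite graph $H$ with left vertices $r_1,\dots,r_n$ (rows), right vertices $c_1,\dots,c_n$ (columns), and an edge $r_ic_j$ carrying value $\alpha_{i,j}$. The scheme is iterative. Call an edge \emph{floating} if its current value lies strictly between $0$ and $1$. Because every row and column sum is an integer, every vertex incident to a floating edge is incident to at least two floating edges. Hence the floating edges contain a cycle, which is even because $H$ is bipartite.

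In each round I pick such a cycle (a simple cycle always exists, so no maximal-path case is needed) and split its edges alternately into classes $E_1$ and $E_2$. I choose the largest $\gamma>0$ such that adding $\gamma$ on $E_1$ and subtracting $\gamma$ on $E_2$ keeps all values in $[0,1]$, and the largest $\delta>0$ for the opposite move. I then apply the first move with probability $\delta/(\gamma+\delta)$ and the second with probability $\gamma/(\gamma+\delta)$. Each round makes at least one edge integral, so there are at most $n^2$ rounds.

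Each vertex on the cycle has exactly one incident cycle edge in $E_1$ and one in $E_2$, so all row and column sums are preserved deterministically; this gives property~2. For property~1, the expected change of every edge value in a round is $\delta\gamma/(\gamma+\delta)-\gamma\delta/(\gamma+\delta)=0$. So each value is a martingale ending in $\{0,1\}$, and $\Pr[\beta_{i,j}=1]=\alpha_{i,j}$.

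The main obstacle is the negative-correlation properties~3 and~4. The key local fact is that at a fixed vertex $r_i$, a round touches at most two incident edges, one in $E_1$ and one in $E_2$, and they move in opposite directions by the same amount. I would show by induction over rounds that $\mathbb{E}\bigl[\prod_{j\in S}(1-X_{i,j})\bigr]$ is non-increasing, where $X_{i,j}$ denotes the current values; this product becomes the indicator of the event in property~3 at the end. If a round changes at most one factor of the product, the martingale property keeps the expectation unchanged. If it changes two factors $x$ and $y$, to $(x+\gamma,\,y-\gamma)$ or $(x-\delta,\,y+\delta)$ with the probabilities above, then the expected value of $(1-x')(1-y')$ equals $(1-x)(1-y)-\gamma\delta\le(1-x)(1-y)$. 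This is a short computation, and it uses crucially that the two changes are perfectly anticorrelated. Conditioning on the history and taking expectations gives the bound $\prod_{j\in S}(1-\alpha_{i,j})$. The symmetric argument at column vertices gives property~4. The delicate part is verifying that the choice of probabilities makes the cross term exactly $-\gamma\delta$ irrespective of which other vertices the cycle visits, and the local structure at each vertex is what guarantees this.
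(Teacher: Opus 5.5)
Your sketch is correct and follows the standard argument. The paper does not reprove this statement; it imports it from Gandhi et al.\ (their Theorem~2.3, with (P3) specialised to $b=0$), and that result is proved by the same alternating-cycle dependent rounding, with the expectation of $\prod_{j\in S}(1-X_{i,j})$ shown to be non-increasing via your $-\gamma\delta$ computation at a vertex touched by two cycle edges. The only simplification over the general source, which you already note, is that integral row and column sums guarantee a cycle of floating edges in every round, so the maximal-path case never arises; the same observation drives the deterministic cycle rounding in the paper's proof of \cref{lem:average}.
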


We also use a result due to Panconesi and Srinivasan~\cite{PS97} (in a reformulation due to Gandhi et al.~\cite[Theorem 3.1]{GKPS06}).
Intuitively, this can be thought of as a guarantee that the rounding from the previous result does not deviate from the expected value by more than an~$\varepsilon$-fraction in any row or column.
A little more formally, it bounds the probability that the sum of entries in a given row or column is at most~$(1-\delta)\mu$ for any given~$\delta > 0$, where~$\mu$ is the expected sum in that row or column.

\ifarxiv
\begin{theorem}
    \label{thm:tailprob}
    Let~$0 \leq a_1,a_2,\ldots,a_t,x_1,x_2,\ldots,x_t \leq 1$, let~$X_1,X_2,\ldots,X_t \in \{0,1\}$ be random variables, where~$X_i = 1$ has probability~$x_i$, and let~${\mu = \sum_{i \in [t]}a_ix_i}$ be the expected value of~$\sum_{i \in [t]} a_i X_i$.
    Suppose for all~$S \subseteq [t]$, it holds that the probability that~$X_i = 0$ for all~$i \in S$ is at most~$\prod_{i \in S}(1-x_i)$.
    Then, for any~$\delta \in [0,1]$, it holds that the probability that~$\sum_{i \in [t]} a_i X_i \leq \mu(1-\delta)$ is at most~$e^{-\frac{\mu\delta^2}{2}}$.
\end{theorem}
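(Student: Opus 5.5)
The plan is the standard exponential-moment (Chernoff) argument. The only new ingredient is to check that the hypothesis on the events ``$X_i=0$ for all $i\in S$'' is enough to bound the moment generating function of $-\sum_i a_iX_i$ as if the $X_i$ were independent.

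Fix $\lambda>0$ and write $Z=\sum_{i\in[t]}a_iX_i$. By Markov's inequality,
\[\Pr[Z\le(1-\delta)\mu]=\Pr\bigl[e^{-\lambda Z}\ge e^{-\lambda(1-\delta)\mu}\bigr]\le e^{\lambda(1-\delta)\mu}\,\mathbb{E}\bigl[e^{-\lambda Z}\bigr].\]

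The key step is to bound $\mathbb{E}[e^{-\lambda Z}]$. Since $X_i\in\{0,1\}$, we have
\[e^{-\lambda a_iX_i}=e^{-\lambda a_i}+(1-e^{-\lambda a_i})(1-X_i).\]
Multiplying these identities over $i\in[t]$ and expanding gives
\[e^{-\lambda Z}=\sum_{S\subseteq[t]}c_S\prod_{i\in S}(1-X_i),\qquad c_S=\prod_{i\notin S}e^{-\lambda a_i}\prod_{i\in S}(1-e^{-\lambda a_i})\ge 0.\]
Each product $\prod_{i\in S}(1-X_i)$ is the indicator of the event that $X_i=0$ for all $i\in S$. By hypothesis its expectation is at most $\prod_{i\in S}(1-x_i)$.

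Because all coefficients $c_S$ are nonnegative, linearity of expectation gives
\[\mathbb{E}[e^{-\lambda Z}]\le\sum_S c_S\prod_{i\in S}(1-x_i)=\prod_i\bigl(1-x_i(1-e^{-\lambda a_i})\bigr).\]
This is exactly the value in the independent case. I expect this expansion step to be the main point needing care: it is precisely where the one-sided correlation condition replaces independence, and it works only because the coefficients are nonnegative.

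The rest is routine. Since $a_i\in[0,1]$, convexity of $a\mapsto e^{-\lambda a}$ gives $1-e^{-\lambda a_i}\ge a_i(1-e^{-\lambda})$. Using $1-y\le e^{-y}$, we get
\[\mathbb{E}[e^{-\lambda Z}]\le\exp\bigl(-\mu(1-e^{-\lambda})\bigr).\]

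For $\delta\in(0,1)$, choose $\lambda=-\ln(1-\delta)$. The Markov bound then becomes
\[\Bigl(\frac{e^{-\delta}}{(1-\delta)^{1-\delta}}\Bigr)^{\mu}=\exp\bigl(-\mu(\delta+(1-\delta)\ln(1-\delta))\bigr).\]
The inequality $(1-\delta)\ln(1-\delta)\ge-\delta+\delta^2/2$ on $[0,1)$ follows by comparing Taylor series, or by differentiating twice. It yields the bound $e^{-\mu\delta^2/2}$.

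The case $\delta=0$ is trivial. The case $\delta=1$ follows by letting $\lambda\to\infty$. Alternatively, the events $\{Z\le(1-\delta)\mu\}$ decrease as $\delta\to1$, so continuity of the right-hand side gives it. One can also argue directly: if $\mu>0$, the event $Z\le0$ means that $X_i=0$ for every $i$ with $a_i>0$, and this has probability at most $\prod(1-x_i)\le e^{-\mu}\le e^{-\mu/2}$.
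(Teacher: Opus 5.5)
Your proof is correct and complete. The multilinear expansion of $e^{-\lambda Z}=\prod_i\bigl(e^{-\lambda a_i}+(1-e^{-\lambda a_i})(1-X_i)\bigr)$ has nonnegative coefficients, so it is exactly where the hypothesis on the events $\{X_i=0 \text{ for all } i\in S\}$ replaces independence. The rest of the Chernoff calculation, with $\lambda=-\ln(1-\delta)$ and the edge cases $\delta\in\{0,1\}$, also checks out.

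The paper does not prove this statement itself. It imports it from Panconesi and Srinivasan via Gandhi et al.\ (Theorem~3.1 there). Your argument is the standard proof behind that result: dominate the moment generating function by its value for independent variables, then optimize. As a small bonus, your proof makes explicit that the zero-event condition alone suffices for the lower tail, which is exactly the hypothesis in the paper's reformulation.
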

\else
\begin{theorem}
    \label{thm:tailprob}
    Let~$a_1,a_2,\ldots,a_t,x_1,x_2,\ldots,x_t$ be values between~$0$ and~$1$, let~$X_1,X_2,\ldots,X_t$ be random variables taking values~$0$ or~$1$, where~$X_i = 1$ has probability~$x_i$, and let~${\mu = \sum_{i \in [t]}a_ix_i}$ be the expected value of~$\sum_{i \in [t]} a_i X_i$.
    Suppose for all~$S \subseteq [t]$, it holds that the probability that~$X_i = 0$ for all~$i \in S$ is at most~$\prod_{i \in S}(1-x_i)$.
    Then, for any~$\delta \in [0,1]$, it holds that the probability that~$\sum_{i \in [t]} a_i X_i \leq \mu(1-\delta)$ is at most~$e^{-\frac{\mu\delta^2}{2}}$.
\end{theorem}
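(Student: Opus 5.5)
The plan is the standard exponential-moment (Chernoff) argument. The one new ingredient is showing that the hypothesis on the joint probabilities of zeros lets us dominate the moment generating function by the independent case.

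\textbf{Step 1: rewrite the tail event with an exponential.} Set $Z_i = 1-X_i$. For $\lambda>0$, Markov's inequality gives
\[
\Pr\Bigl[\textstyle\sum_i a_iX_i \le (1-\delta)\mu\Bigr] \le e^{\lambda(1-\delta)\mu}\, \mathbb{E}\Bigl[e^{-\lambda\sum_i a_iX_i}\Bigr].
\]
Since $X_i\in\{0,1\}$, each factor can be written as
\[
e^{-\lambda a_iX_i} = e^{-\lambda a_i}(1+c_iZ_i), \qquad c_i = e^{\lambda a_i}-1 \ge 0 .
\]

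\textbf{Step 2: dominate by the independent case.} This is the key step and the only place the hypothesis is used. Expand the product:
\[
\prod_i (1+c_iZ_i) = \sum_{S\subseteq[t]} \Bigl(\prod_{i\in S} c_i\Bigr)\prod_{i\in S} Z_i .
\]
All coefficients are nonnegative, and $\mathbb{E}[\prod_{i\in S} Z_i]$ is exactly the probability that $X_i=0$ for all $i\in S$. By assumption this is at most $\prod_{i\in S}(1-x_i)$. Summing back up gives
\[
\mathbb{E}\Bigl[e^{-\lambda\sum_i a_iX_i}\Bigr] \le \prod_i e^{-\lambda a_i}\bigl(1+c_i(1-x_i)\bigr) = \prod_i \bigl(1 - x_i(1-e^{-\lambda a_i})\bigr).
\]
The right-hand side is the value for independent Bernoulli variables. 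I expect this monotone-expansion argument to be the main obstacle. It works for the lower tail precisely because the expansion is in the $Z_i$ with nonnegative coefficients, which matches the form of the hypothesis.

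\textbf{Step 3: the independent Chernoff computation.} Use $1-y\le e^{-y}$ on each factor. Then use convexity of $a\mapsto e^{-\lambda a}$ on $[0,1]$, which gives $1-e^{-\lambda a_i} \ge a_i(1-e^{-\lambda})$. Together these yield
\[
\mathbb{E}\Bigl[e^{-\lambda\sum_i a_iX_i}\Bigr] \le \exp\bigl(-\mu(1-e^{-\lambda})\bigr).
\]
For $\delta<1$, choose $\lambda = -\ln(1-\delta)$. This gives the bound
\[
\Bigl(\frac{e^{-\delta}}{(1-\delta)^{1-\delta}}\Bigr)^{\mu}.
\]
The elementary inequality $(1-\delta)\ln(1-\delta) \ge -\delta+\delta^2/2$ then gives $e^{-\mu\delta^2/2}$. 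This inequality follows from the Taylor series, since all remaining terms are nonnegative.

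\textbf{Step 4: the boundary case $\delta=1$.} When $\delta = 1$ the event is $\sum_i a_iX_i\le 0$, and the probability is monotone in $\delta$. So the bound follows by letting $\delta\to 1$. Alternatively, let $\lambda\to\infty$ directly, which gives the bound $e^{-\mu} \le e^{-\mu/2}$.
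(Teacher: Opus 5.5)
Your proof is correct. The expansion of $\prod_i(1+c_iZ_i)$ with nonnegative coefficients is exactly where the hypothesis enters: it dominates $E[e^{-\lambda\sum_i a_iX_i}]$ by the independent-case value $\prod_i(1-x_i(1-e^{-\lambda a_i}))$. The convexity bound for $a_i\in[0,1]$, the choice $\lambda=-\ln(1-\delta)$ and the Taylor estimate then give $e^{-\mu\delta^2/2}$, and the boundary cases $\delta\in\{0,1\}$ are handled correctly.

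The paper gives no proof of its own here; it imports the statement from Panconesi and Srinivasan, in the weighted reformulation of Gandhi et al. Your argument is essentially the standard proof behind that citation.
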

\fi

We can now show our main result.

\begin{theorem}
    \label{thm:main}
    For each~$\kappa < \frac{5}{8}$, there exists an integer~$n_{\kappa}$ such that for all~$n \geq n_{\kappa}$ and each doubly stochastic~$n \times n$ matrix~$\mathcal{M}$, the instance~$(\mathcal{M},\kappa)$ is a yes-instance of \strong.
\end{theorem}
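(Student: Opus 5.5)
We construct, for a given doubly stochastic $\mathcal{M}=(a_{i,j})$ and $\kappa<\frac58$, a $\Np$-regular graph. Fix $\varepsilon>0$ with $\kappa\le\frac58-\varepsilon$. The graph is the union of two $n$-regular parts, ignoring a $\pm1$ in the arc count, which costs only $O(1/n)$. Each arc has capacity~$1$ against the scaled demand $\Np\kappa\mathcal{M}$.

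\textbf{Demand-aware part.} Apply \cref{thm:dependentrounding} to $\mathcal{A}=(\alpha_{i,j})$ with $\alpha_{i,j}=\min(1,\lambda a_{i,j})$ for a suitable $\lambda\approx n$. Since the capped entries break the integrality of row and column sums, we first add a fractional filler matrix so that every row and column sums to exactly~$n$. A filler entry then serves either as a demand-aware arc or, if it is unused, as extra generic capacity. The rounded matrix $\mathcal{N}$ gives the arc multiset $D$ with exactly $n$ out- and in-arcs per vertex.

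\textbf{Demand-oblivious part.} Take the complete digraph with self-loops, one arc $(i,j)$ for every ordered pair. This is $n$-regular.

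\textbf{Routing, step one.} Each pair $(i,j)$ sends its demand $d_{i,j}=\Np\kappa a_{i,j}\approx 2n\kappa a_{i,j}$. If $\beta_{i,j}=1$, send $\min(1,d_{i,j})$ on that demand-aware arc. Any leftover of at most $d_{i,j}-\beta_{i,j}$ is sent directly on the oblivious arc $(i,j)$, provided this uses only a bounded fraction of that arc. The accounting should show the following: per vertex $i$, the oblivious capacity consumed by direct sends in expectation is $\sum_j \mathbb{E}[\text{use}_{i,j}]$, which is at most about $n/4$ when $\kappa<5/8$. Equivalently, the remaining capacities $c(i,h)$ satisfy $\sum_h c(i,h)\ge \frac{3n}{4}-\varepsilon n$, and symmetrically for in-capacities.

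To get this per vertex with probability close to~$1$ simultaneously for all $2n$ rows and columns, use property~3 and property~4 of \cref{thm:dependentrounding}. These supply the negative-correlation hypothesis of \cref{thm:tailprob}, applied to the weighted sum of arc-savings $a_iX_i$ with $\mu=\Theta(n)$. The failure probability is $e^{-\Theta(\varepsilon^2 n)}$. A union bound over $2n$ events gives $n_\kappa$ such that some rounding works, and the result is a deterministic existence statement.

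\textbf{Routing, step two.} All remaining demand $r_{i,j}$ is routed via two-hop paths $i\to h\to j$ on oblivious arcs. Pair $(i,j)$ places $r_{i,j}\cdot \min(c(i,h),c(h,j))/Z_{i,j}$ through $h$, where $Z_{i,j}=\sum_h\min(c(i,h),c(h,j))$. We must check that arc $(i,h)$ receives at most $c(i,h)$ in total. Since $\min(c(i,h),c(h,j))\le c(i,h)$, its load is at most $c(i,h)\sum_j r_{i,j}/Z_{i,j}$ from the first-leg flows, plus an analogous second-leg term. Because each $c\le1$ and the row and column sums of $c$ are at least $\frac34 n$, we get $Z_{i,j}\ge \frac{3n}{4}+\frac{3n}{4}-n=\frac n2$. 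Together with $\sum_j r_{i,j}\le$ the residual demand, this makes both terms fit. The choice $\kappa=\frac58$ is exactly where total residual demand, about $2\kappa n$ minus what step one served, balances against $\frac{c(i,h)}{Z}$ summed over both roles.

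\textbf{Main obstacle.} The hard part is the calibration. We must choose $\lambda$ and the rule for direct use of oblivious arcs so that three things hold at once: the expected leftover oblivious capacity reaches $\frac34 n$, the residual two-hop demand per row is at most about $n/4$ after dividing by the $\frac n2$ bound, and both are consistent with $\kappa=\frac58$. I would first analyse the extreme matrices, the uniform matrix and permutation matrices, to fix the constants, and then try to prove the per-entry inequality $\mathbb{E}[\text{oblivious use}_{i,j}]\le \tfrac14$-share by convexity in $a_{i,j}$.
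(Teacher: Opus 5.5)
Your second stage is essentially the paper's: two-hop routing proportional to $\min(c(i,h),c(h,j))$, the bound $Z_{i,j}\ge n/2$ from row and column sums of at least $\frac34 n$, and concentration plus a union bound over the $2n$ rows and columns. Your first stage, however, has a fatal gap.

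Because $\alpha_{i,j}=\min(1,\lambda a_{i,j})\le 1$ and \cref{thm:dependentrounding} outputs $0/1$ entries, your demand-aware part has at most one arc per ordered pair. In fact, an $n\times n$ matrix with entries in $[0,1]$ and all row sums equal to $n$ is the all-ones matrix. So after your filler, the ``demand-aware'' part is just a second complete digraph, and the whole graph is demand-oblivious with two arcs per pair.

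No such graph beats $\frac12$. Take a permutation matrix: vertex $i$ must send $(2n-1)\kappa$ (in units of arc capacity) to a single vertex. At most $2$ units go directly, and every other unit consumes at least two units of capacity. The total capacity needed is therefore at least $2n(2n-1)\kappa-2n$, which exceeds the available $n(2n-1)$ unless $\kappa\le\frac12+\frac1{2n-1}$. In your accounting, this shows up as a residual of about $\frac54 n$ per row after step one instead of $\frac14 n$. The balance against $Z_{i,j}\ge n/2$ then fails, and no choice of $\lambda$ repairs it.

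The missing idea is that heavy pairs need many parallel arcs. The paper sets $b_{i,j}=(n-1)a_{i,j}$ and keeps $\lfloor b_{i,j}\rfloor$ arcs deterministically. It applies dependent rounding only to the fractional parts $b'_{i,j}=b_{i,j}-\lfloor b_{i,j}\rfloor$, whose row and column sums are automatically integers, so no capping or filler is needed. The rounded $c_{i,j}$ have row and column sums $n-1$. Adding one oblivious arc per pair yields $d_{i,j}=c_{i,j}+1>b_{i,j}$ arcs, an exactly $(2n-1)$-regular graph.

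Step one then sends exactly $b_{i,j}$ directly, and the residual per row is exactly $(2n-1)\kappa-(n-1)\approx\frac n4$. The leftover capacity $\eta_{i,j}=\min(d_{i,j}-b_{i,j},1)$ is $1$ if $b_{i,j}$ was rounded up and $1-b'_{i,j}$ otherwise. Its expectation is $b'_{i,j}+(1-b'_{i,j})^2\ge\frac34$. This is the per-entry ``$\frac14$-share'' you were after, since the expected consumption is $b'_{i,j}(1-b'_{i,j})\le\frac14$.

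With this first stage, \cref{thm:tailprob} (with $\delta=\frac{4\varepsilon}{3}$) and your two-hop stage go through. They give a load of at most $\eta_{a,b}\cdot 2((2n-1)\kappa-(n-1))/((\frac12-2\varepsilon)n)\le\eta_{a,b}$ on every pair $(a,b)$ for sufficiently large $n$.
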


\begin{proof}
Let~$\varepsilon = \kappa - \frac{5}{8} > 0$.
We assume that~$\varepsilon < \frac{1}{8} - \frac{1}{4n-2}$ as otherwise~${\kappa \leq \frac{1}{2} + \frac{1}{4n-2} = \frac{n}{2n-1}}$ and \cref{thm:oblivous} shows that~$(\mathcal{M},\kappa)$ is a yes-instance of \strong.
Moreover, it holds that~${(2n-1)\kappa > n-1}$.
Let~$n_{\kappa}$ be the smallest positive integer such that~${n_{\kappa} \geq \frac{6}{16\varepsilon}+1}$ and~$e^{-\frac{2\varepsilon^2n_{\kappa}}{3}} < \frac{1}{2n_{\kappa}}$.
Note that such a number always exists as~$e^{-xn} < \frac{1}{yn}$ holds for any pair~$x,y > 0$ for all sufficiently large~$n$.
Let~$\mathcal{M} = (a_{i,j})_{i,j \in [n]}$ be any doubly stochastic ${n \times n}$~matrix with~$n \geq n_{\kappa}$.
We will show that~$(\mathcal{M},\kappa)$ is a yes-instance of \strong.

Let~$b_{i,j} = (n-1) a_{i,j}$ and~$b'_{i,j} = b_{i,j} - \lfloor b_{i,j} \rfloor$ for all~${i,j \in [n]}$.
Recall that~${(2n-1)\kappa > n-1}$.
Hence, ${b_{i,j} < (2n-1)\kappa a_{i,j}}$.
We will employ a randomized rounding technique (based on \cref{thm:dependentrounding}) and fix the rounding later.
For now, let~$X_{i,j}$ be a random variable that takes values in~$\{0,1\}$ where~$X_{i,j}=1$ corresponds to rounding entry~$b_{i,j}$ up and otherwise~$b_{i,j}$ is rounded down.
The probability that~$X_{i,j} = 1$ is~$b'_{i,j}$.
Now, consider any rounding~$\varphi$ of the matrix~${\mathcal{N}'=(b'_{i,j})_{i,j \in [n]}}$ that \cref{thm:dependentrounding} could return and let~${c_{i,j} \in \{\lfloor b_{i,j} \rfloor,\lceil b_{i,j} \rceil\}}$ be the result of applying the same rounding to~$b_{i,j}$.
Let~$d_{i,j} = c_{i,j}+1 > b_{i,j}$.
Note that due to property (2) in \cref{thm:dependentrounding}, each row~$i$ in the matrix~$\mathcal{N}=(d_{i,j})_{i,j\in[n]}$ sums to~$\sum_{j=1}^n (b_{i,j} + 1) = (n-1) + n = 2n-1$ and the same holds for each column~$j$.

We will now construct the graph that can host~$\kappa \mathcal{M}$ and for notational convenience, we will consider the names of the vertices in it~$1,2,\ldots,n$.
For a given rounding~$\varphi$ and any~$i,j \in [n]$, we add~$d_{i,j}$ arcs from vertex~$i$ to vertex~$j$.
Note that this can be seen as adding~$c_{i,j}$ demand-aware arcs and one additional demand-oblivious arc between each pair of (not necessarily distinct) vertices.
We partition the flow into two parts and describe them in two separate stages.
In the first stage, we send~$b_{i,j} < \min(a_{i,j},d_{i,j})$ demand from vertex~$i$ to vertex~$j$ directly, that is, we add the pair~$(((i,j)),b_{i,j})$ to the solution collection~$\mathcal{P}$.

For the second stage, we introduce some additional notation.
\ifarxiv
See \cref{fig:prob} for an example instance with all of the following definitions. 
\begin{figure*}
    \centering
    \begin{tikzpicture}
    \renewcommand{\arraystretch}{1.2}
        \node at(-3,3) {$\mathcal{M} = (a_{i,j})_{i,j \in [3]} = \begin{pmatrix}
            \frac{3}{8} & \frac{1}{4} & \frac{3}{8}\\
            \frac{3}{8} & \frac{3}{8} & \frac{1}{4}\\
            \frac{1}{4} & \frac{3}{8} & \frac{3}{8}
        \end{pmatrix}$};
        \node at(3,3) {$(n-1)\mathcal{M} = (b_{i,j})_{i,j \in [3]} = \begin{pmatrix}
            \frac{3}{4} & \frac{1}{2} & \frac{3}{4}\\
            \frac{3}{4} & \frac{3}{4} & \frac{1}{2}\\
            \frac{1}{2} & \frac{3}{4} & \frac{3}{4}
        \end{pmatrix}$};
        \node at(-3,1) {$(2n-1)\kappa \mathcal{M} = \begin{pmatrix}
            \frac{75}{64} & \frac{25}{32} & \frac{75}{64}\\
            \frac{75}{64} & \frac{75}{64} & \frac{25}{32}\\
            \frac{25}{32} & \frac{75}{64} & \frac{75}{64}
        \end{pmatrix}$};
        \node at(3,1) {rounding~$(c_{i,j})_{i,j \in [3]} =  \begin{pmatrix}
            1 & 0 & 1\\
            1 & 1 & 0\\
            0 & 1 & 1
        \end{pmatrix}$};
        \node at(-3,-1) {$(d_{i,j})_{i,j \in [3]} =  \begin{pmatrix}
            2 & 1 & 2\\
            2 & 2 & 1\\
            1 & 2 & 2
        \end{pmatrix}$};
        \node at(-4.25,-1.5) {$= (c_{i,j}+1)_{i,j \in [3]}$};
        \node at(-3.5,-3) {excess capacity~$(\eta_{i,j})_{i,j \in [3]} =  \begin{pmatrix}
            1 & \frac{1}{2} & 1\\
            1 & 1 & \frac{1}{2}\\
            \frac{1}{2} & 1 & 1
        \end{pmatrix}$};
        \node at(-4.2,-3.75) {$= \Big(\min(d_{i,j} - b_{i,j},1)\Big)_{i,j \in [3]}$};
        \node at(3.75,-1) {overflow~$(\sigma_{i,j})_{i,j \in [3]} =  \begin{pmatrix}
            \frac{27}{64} & \frac{9}{32} & \frac{27}{64}\\
            \frac{27}{64} & \frac{27}{64} & \frac{9}{32}\\
            \frac{9}{32} & \frac{27}{64} & \frac{27}{64}
        \end{pmatrix}$};
        \node at(2.25,-1.75) {$=\Big((2n-1)\kappa a_{i,j}-b_{i,j}\Big)_{i,j \in [3]}$};
        \node at(-4.25,-1.5) {$= (c_{i,j}+1)_{i,j \in [3]}$};
        \node at(3.5,-3) {common excess~$(\zeta_{i,j})_{i,j \in [3]} =  \begin{pmatrix}
            2 & 2 & \frac{5}{2}\\
            \frac{5}{2} & 2 & 2\\
            2 & \frac{5}{2} & 2
        \end{pmatrix}$};
        \node at(2.4,-3.75) {$= \Big(\sum_{h \in [n]}\min(\eta_{i,h},\eta_{h,j})\Big)_{i,j \in [3]}$};
    \end{tikzpicture}
    \caption{An example of our approach for the second stage for~$\kappa = \frac{5}{8}$ (that is,~$\varepsilon=0$) and~$n = 3$. We mention that the approach does not work as~$\varepsilon$ does not satisfy the two conditions we have for it. We still chose~$\varepsilon=0$ for the sake of clarity of presentation of the example.}
    \label{fig:prob}
\end{figure*}
\fi
Let~$\eta_{i,j} = \min(d_{i,j} - b_{i,j},1)$ be the \emph{excess capacity} from vertex~$i$ to vertex~$j$.
Note that at least~$\eta_{i,j}$ capacity is not used so far on the arcs going from vertex~$i$ to vertex~$j$.
The expected excess capacity is~${b'_{i,j} \cdot 1 + (1-b'_{i,j}) \cdot (1-b'_{i,j}) = (b'_{i,j})^2 -b'_{i,j} + 1 \geq \frac{3}{4}}$.
Here, the last inequality holds because the derivative of~${f(x)=x^2-x+1}$ is~$2x-1$, which evaluates to~$0$ at~$x=\frac{1}{2}$, which is also the minimum entry for~$f$ and~$f(\frac{1}{2}) = \frac{3}{4}$.
By the linearity of the expectation, the expected sum of excess in a single row or column is at least~$\frac{3n}{4}$.

We will next prove that the probability that the excess capacity in any fixed row or column is at most~$(\frac{3}{4}-\varepsilon)n$ is smaller than~$\frac{1}{2n}$.
Since the argument is completely symmetric for rows and columns, we will only show it for a fixed row.
To this end, fix any row~$i$ and and let~$Y_i = \sum_{j \in [n]} (1-b'_{i,j})$.
Note that the sum of excess capacities in row~$i$ is at least~$Y_i$.
More precisely, it is~$Y_i + \sum_{j \in [n]} X_{i,j} b'_{i,j}$.
Applying \cref{thm:tailprob} with~$t = n + Y_i$, $a_j = b'_{i,j} = x_j$ for all~$j \in [n]$, $a_j = 1 = x_j$ for all~$j \in \{n+1,n+2,\ldots,n+Y_i\}$, and~$\delta = \frac{4\varepsilon}{3}$ yields that the probability that the excess capacity in row~$i$ is at most~$\frac{3n}{4}(1-\delta)$ is at most~$e^{-\frac{\frac{3n}{4}\delta^2}{2}}$.
Note that~$\frac{3n}{4}(1-\delta) = \frac{3n}{4} - \frac{3}{4}\delta n = (\frac{3}{4}-\varepsilon)n$.
Thus, the probability that the excess capacity in row~$i$ is at most~$(\frac{3}{4}-\varepsilon)n$ is at most~$e^{-\frac{\frac{3n}{4}\delta^2}{2}} = e^{-\frac{3}{8}n (\frac{4}{3}\varepsilon)^2} = e^{-\frac{2}{3}n\varepsilon^2} < \frac{1}{2n}$.

We are now in a position to fix the specific rounding.
To this end, we say that a rounding is good for a certain row or column, if the sum of excess capacities in that row/column is at least~$(\frac{3}{4}-\varepsilon)n$.
The probability that a random rounding is good for a given row or column is larger than~$1-\frac{1}{2n}$ as proven above.
The probability that any given rounding is good for~$x$ fixed rows and/or columns is larger than~$1 - \frac{x}{2n}$.
This is true since even in the worst case bad events can be at most disjoint and the likelihood of at least one of them happening is at most the sum of their individual probabilities.
Hence, the probability that a rounding is good for all~$n$ rows and all~$n$ columns is larger than~$1 - \frac{2n}{2n} = 0$.
Thus, the probability is strictly larger than~$0$ and thus such a rounding necessarily exists (but we do not know how to compute it).
We consider~$\varphi$ to be such a rounding for the rest of the proof.

Now that we fixed the rounding, it remains to prove that the constructed graph can host the matrix~$\kappa\mathcal{M}$.
To this end, we already showed that~$b_{i,j}$ flow can be sent from vertex~$i$ to vertex~$j$ directly without using any excess capacity.
We next show that all remaining flow can be satisfied using only the excess capacities and paths of length two.
To this end, we first define the \emph{common excess of row~$i$ and column~$j$} as~$\zeta_{i,j} = \sum_{h \in [n]}\min(\eta_{i,h},\eta_{h,j})$.
Since~${0 \leq \eta_{a,b} \leq 1}$ for all~$a,b \in [n]$ and the excess capacity of each row and column is at least~$(\frac{3}{4}-\varepsilon)n$, it holds that~$\zeta_{i,j} \geq (\frac{1}{2}-2\varepsilon)n$.
This is true since all~$\eta_{i,h}$ and~$\eta_{h,j}$ being~$1$ results in~$\zeta_{i,j}=n$ and reducing any entry by an amount~$\delta$ can reduce~$\zeta_{i,j}$ by at most~$\delta$.
Since the total reduction in row~$i$ is at most~$(\frac{1}{4}+\varepsilon)n$ and the total reduction in column~$j$ is at most~$(\frac{1}{4}+\varepsilon)n$, it holds that~$\zeta_{i,j} \geq n - 2(\frac{1}{4}+\varepsilon)n = (\frac{1}{2}-2\varepsilon) n$.

Next, let~${\sigma_{i,j}=((2n-1)\kappa-(n-1))a_{i,j}}$ for each~$i,j \in [n]$ be the \emph{overflow from~$i$ to~$j$}.
Note that we already saturated~$b_{i,j}=(n-1)a_{i,j}$ flow directly and hence the overflow describes exactly the amount of flow that still needs to be scheduled.
For each~$i,j,h \in [n]$, we send~$\sigma_{i,j} \cdot \frac{\min(\eta_{i,h},\eta_{h,j})}{\zeta_{i,j}}$ flow over arcs~$(i,h)$ and~$(h,j)$, that is, we add the pairs
\[(P=((i,h),(h,j)),\sigma_{i,j} \cdot \frac{\min(\eta_{i,h},\eta_{h,j})}{\zeta_{i,j}})\] to the solution collection~$\mathcal{P}$ of flows.
To conclude the proof, we will show that~$\sigma_{i,j}$ flow is sent from~$i$ to~$j$ in this way and that from any vertex~$a$ to any vertex~$b$, at most the excess capacity for this pair is used for all of these flows combined.

For the first point, note that~$\zeta_{i,j} = \sum_{h \in [n]}\min(\eta_{i,h},\eta_{h,j})$ by definition.
Thus, \[\sum_{h \in [n]} \sigma_{i,j} \cdot \frac{\min(\eta_{i,h},\eta_{h,j})}{\zeta_{i,j}} = \frac{\sigma_{i,j}}{\zeta_{i,j}} \sum_{h \in [n]} \min(\eta_{i,h},\eta_{h,j}) =  \frac{\sigma_{i,j} \zeta_{i,j}}{\zeta_{i,j}} = \sigma_{i,j}.\]
This shows that all overflow is sent from~$i$ to~$j$ in this manner.

For the second point, recall that~$\zeta_{i,j} \geq (\frac{1}{2}-2\varepsilon)n$ for all~$i,j \in [n]$.
Moreover, note that the sum of overflows originating from~$i$ is
\begin{align*}
    \sum_{j \in [n]} \sigma_{i,j} &= \sum_{j \in [n]} ((2n-1)\kappa-(n-1))a_{i,j} \\
    &= (2n-1)\kappa-(n-1) \sum_{j \in [n]} a_{i,j} = (2n-1)\kappa-(n-1).
\end{align*}
\ifarxiv Similarly,\else By the same argument as above, \fi the sum of overflows terminating in~$j$ is~${\sum_{i\in [n]}\sigma_{i,j} = (2n-1)\kappa-(n-1)}$.
Any arc~$(a,b)$ is only used for two types of flows in the second step: overflow terminating in vertex~$b$ and overflow originating from vertex~$a$.
Thus, the maximum flow sent over the arc(s)~$(a,b)$ in the second step is at most
\begin{align*}
    &\quad\sum_{i \in [n]}(\sigma_{i,b} \frac{\min(\eta_{i,a},\eta_{a,b})}{\zeta_{i,b}}) + \sum_{j\in [n]}(\sigma_{a,j} \frac{\min(\eta_{a,b},\eta_{b,j})}{\zeta_{a,j}})\\
    &\leq \sum_{i \in [n]}(\sigma_{i,b} \frac{\eta_{a,b}}{(\frac{1}{2}-2\varepsilon)n}) + \sum_{j\in [n]}(\sigma_{a,j} \frac{\eta_{a,b}}{(\frac{1}{2}-2\varepsilon)n})\\
\ifarxiv
    &\leq \eta_{a,b} \frac{2((2n-1)\kappa-(n-1))}{(\frac{1}{2}-2\varepsilon)n}\\
    &\leq \eta_{a,b} \frac{2((2n-1)(\frac{5}{8}-\varepsilon)-(n-1))}{(\frac{1}{2}-2\varepsilon)n}\\
\fi
    &\leq \eta_{a,b}\frac{2(\frac{10n}{8}-2n\varepsilon-\frac{5}{8}+\varepsilon-(n-1))}{(\frac{1}{2}-2\varepsilon)n}\\
    &\leq \eta_{a,b} \frac{\frac{n}{2}-4n\varepsilon+\frac{6}{8}+2\varepsilon}{(\frac{1}{2}-2\varepsilon)n}\\
    &\leq \eta_{a,b} \frac{(\frac{1}{2}-2\varepsilon)n -2n\varepsilon + \frac{3}{4}+2\varepsilon}{(\frac{1}{2}-2\varepsilon)n}\\
    &\leq \eta_{a,b} (1-\frac{2n\varepsilon - \frac{3}{4} - 2\varepsilon}{(\frac{1}{2}-2\varepsilon)n}) \leq \eta_{a,b}. 
\end{align*}
Here, the last inequality follows from the fact that~$n > 0$, $2\varepsilon < \frac{1}{2}$ because~$\varepsilon < \frac{1}{8} < \frac{1}{4}$, and~$2n\varepsilon \geq 2\varepsilon+\frac{3}{4}$ because~$n \geq \frac{6}{16\varepsilon}+1$.
Thus, the total flow sent from~$a$ to~$b$ directly in the second step is at most the excess capacity~$\eta_{a,b}$.
This concludes the proof.
\end{proof}

While there is no formula in closed form for the number of vertices needed to guarantee a certain throughput, we can calculate a few entries.
The number of vertices needed to guarantee that our approach improves upon the demand-oblivious bound of~$\frac{n}{2n-1}$ is roughly~$700$ (with a value of~$\kappa=\frac{5}{8}-\frac{1}{8.03}$).
The limit of our approach is~$\kappa = \frac{5}{8} = 0.625$.
For~$n \geq 2250$, we can guarantee a throughput of~$\kappa = 0.55$, for~$n \geq 26000$, we can guarantee a throughput of~$\kappa=0.6$, and for a guarantees throughput of~$\kappa=0.624$, roughly 27 million vertices are required.
However, we mention that our goal was to show a separation, that is, \emph{any} improvement over the oblivious bound and we conjecture that our approach can be improved to give better bounds for much fewer vertices.
We conclude this subsection with an upper bound of~$\frac{5}{6}$ for the throughput.

\begin{proposition}
    For each~$\kappa > \frac{5}{6}$, there exists a doubly stochastic~${2 \times 2}$~demand matrix~$\mathcal{M}_{\kappa}$ such that~$(\mathcal{M}_{\kappa},\kappa)$ is a no-instance of \strong.
\end{proposition}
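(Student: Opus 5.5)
We take the single matrix $\mathcal{M}_\kappa=\mathcal{M}=\begin{pmatrix}\frac35&\frac25\\ \frac25&\frac35\end{pmatrix}$ for every $\kappa>\frac56$ and show that no directed $3$-regular graph on two vertices achieves throughput above $\frac56$ for it. As in the formal model, we scale every arc capacity to $1$ and require hosting $3\theta\mathcal{M}$. That is, each vertex has self-demand $3\theta\cdot\frac35=\frac95\theta$ and cross-demand $3\theta\cdot\frac25=\frac65\theta$.

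\textbf{Step 1 (classify the graphs).} Let $s_1,s_2$ be the numbers of self-loops at vertices $1$ and $2$. Vertex $1$ has $3-s_1$ arcs to vertex $2$, and these are exactly the $3-s_2$ arcs entering $2$ from $1$. Hence $s_1=s_2=:s\in\{0,1,2,3\}$, and there are $3-s$ parallel arcs in each direction between the two vertices. So there are only four graphs to examine.

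\textbf{Step 2 (a load lower bound on the cross arcs).} Fix a feasible flow. Every path carrying cross-demand from $1$ to $2$ uses at least one arc $(1,2)$. Every path carrying self-demand of $1$ either consists only of loops at $1$ or uses at least one arc $(1,2)$ and one arc $(2,1)$. The loops at $1$ have total capacity $s$, so at least $\max(0,\frac95\theta-s)$ of vertex $1$'s self-demand traverses the cross arcs, and the same holds at vertex $2$. The arcs $(1,2)$ therefore carry load at least
\[
\tfrac65\theta+2\max\!\left(0,\tfrac95\theta-s\right),
\]
and this must be at most their capacity $3-s$.

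\textbf{Step 3 (case check).} We evaluate the inequality from Step 2 for each value of $s$.
\begin{itemize}
\item For $s=3$ there are no cross arcs, so the cross-demand forces $\theta=0$.
\item For $s=0$ the inequality reads $\frac65\theta+\frac{18}5\theta\le 3$, so $\theta\le\frac58$.
\item For $s=1$: if $\frac95\theta\ge1$, the inequality gives $\frac{24}{5}\theta\le 4$, so $\theta\le\frac56$. Otherwise $\theta<\frac59$.
\item For $s=2$: the inequality gives at least $\frac65\theta\le1$, so $\theta\le\frac56$.
\end{itemize}
In all cases $\theta\le\frac56<\kappa$, so $(\mathcal{M},\kappa)$ is a no-instance of \strong.

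The main difficulty is choosing the matrix; the verification itself is routine. Working with the symmetric family $\begin{pmatrix}p&1-p\\1-p&p\end{pmatrix}$, the bounds from Steps 2 and 3 become $\frac1{1+p}$ for $s=0$, $\frac4{3(1+p)}$ for $s=1$, and $\frac1{3(1-p)}$ for $s=2$. These balance exactly at $p=\frac35$, which is why that value of $p$ is used. The only subtle point in the argument is making rigorous that loop-only routing of self-demand is capped by the loop capacity $s$, while every other path pays on both cross directions. This follows because a $1$-to-$1$ path that leaves vertex $1$ must return to it.
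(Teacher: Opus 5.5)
Your proof is correct, but it takes a different route from the paper's. The two proofs differ in the choice of witness and in how the four graphs are ruled out.

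The paper uses a $\kappa$-dependent witness close to the identity, $\mathcal{M}_\kappa=\begin{pmatrix}1-\varepsilon&\varepsilon\\ \varepsilon&1-\varepsilon\end{pmatrix}$ with $\varepsilon=\kappa-\frac56$, and dismisses the four $3$-regular graphs one at a time:
\begin{itemize}
\item with three loops there is no capacity for the cross-demand;
\item with zero or one loop the large self-demand $1-\varepsilon$ cannot be carried;
\item with two loops it bounds one vertex's self-demand by $\frac23$ on the loops plus half of the cross capacity left after the direct cross-demand.
\end{itemize}
You instead fix the single matrix with $p=\frac35$ and handle all four graphs with one load inequality on the arcs $(1,2)$.

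Your route gives slightly more. One witness works for every $\kappa>\frac56$ simultaneously, and its demand-aware throughput is exactly $\frac56$: with two loops per vertex, the scaled demands $\frac32\le 2$ and $1\le 1$ are hosted directly. The paper's matrix, by contrast, has throughput $\frac{5}{6-3\varepsilon}$ (attained by the two-loop graph). This lies strictly between $\frac56$ and $\kappa$, which is why the paper needs the dependence on $\kappa$ and approaches $\frac56$ only in the limit. Your uniform inequality also spells out the two-hop routing of self-demand that the paper's one-line justification for the graphs with at most one loop leaves implicit. Your balancing computation over $\begin{pmatrix}p&1-p\\1-p&p\end{pmatrix}$ explains why $p=\frac35$ is the right choice within this family. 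The paper's construction avoids tuning $p$ altogether.
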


\begin{proof}
    Let~$\varepsilon = \kappa - \frac{5}{6} \leq \frac{1}{6}$.
    We construct~$\mathcal{M}_{\kappa} = \begin{pmatrix}
        1-\varepsilon & \varepsilon\\\varepsilon & 1-\varepsilon
    \end{pmatrix}$.
    There are only four directed $3$-regular graphs with two vertices, which are depicted in \cref{fig:fourgraphs}.
        \begin{figure*}[t]
        \centering
        \begin{tikzpicture}
            \def\x{3}
            \def\y{4}
            \foreach \i in {0,1}{
                \node[circle,draw,inner sep=4pt,label=$a$] at(\x*\i+\y*\i,0) (a\i) {};
                \node[circle,draw,inner sep=4pt,label=$b$] at(\x*\i+\y*\i+\x,0) (b\i) {};
            }
            \foreach \i in {2,3}{
                \node[circle,draw,inner sep=4pt,label=$a$] at(\x*\i+\y*\i-2*\x-2*\y,-3) (a\i) {};
                \node[circle,draw,inner sep=4pt,label=$b$] at(\x*\i+\y*\i-\x-2*\y,-3) (b\i) {};
            }
            \foreach \i in {0,1,2,3}{
                \foreach \j in {0,1,2,3}{
                    \ifthenelse{\j < \i}{
                        \draw[->,bend left=15*\j+10] (a\i) to (b\i);
                        \draw[->,bend left=15*\j+10] (b\i) to (a\i);
                    }{
                        \ifthenelse{\i = \j}{}{
                        \draw[->,in=45*\j+150,out=45*\j+120,loop] (a\i) to (a\i);
                        \draw[->,in=-45*\j+30,out=-45*\j+60,loop] (b\i) to (b\i);
                        }
                    }
                }
            }
        \end{tikzpicture}
        \caption{The four different vertex-labeled directed $3$-regular graphs with two vertices. Note that each vertex has exactly~$3$ outgoing arcs and out of these~$c \in \{0,1,2,3\}$ are self-loops and the other~$3-c$ are going to the other vertex.}
        \label{fig:fourgraphs}
        \ifarxiv\else\Description[The four different directed 3-regular graphs.]{The top left shows a graph with two vertices that each have three self-loops. The top tight shows a graph with two vertices each having two self-loops and one arc to the respective other vertex. The bottom left shows a graph with two vertices, each with one self-loop and two arcs to the respective arc. The bottom right shows a graph with two vertices and six arcs between them, three in each direction.}\fi
    \end{figure*}
    The graph in the top left cannot host any demand between the two nodes (and~$\varepsilon > 0$).
    The two graph on the bottom cannot host~$\frac{5}{6} \mathcal{M}_{\kappa}$ since~$1-\varepsilon > \frac{2}{3}$.
    Thus, the only possible graph to host~$\kappa \mathcal{M}_{\kappa}$ is the graph in the top right.
    This graph can send~$\kappa\varepsilon$ directly from each node to the other, $\frac{2}{3}$ directly via the self loops, and for at least one of the two nodes at most~$\frac{\frac{1}{3}-\kappa\varepsilon}{2}$ indirectly from the node to itself via the other node.
    Thus, the total demand satisfied from this node to itself is at most
    \begin{align*}
    \frac{2}{3} + \frac{1}{6} - \frac{\kappa\varepsilon}{2} &= \frac{5}{6} - \frac{5 \varepsilon}{12} - \frac{\varepsilon^2}{2} = \frac{5}{6}  - \frac{5\varepsilon}{6} + \varepsilon - \varepsilon - \frac{\varepsilon^2}{2} \\&< \frac{5}{6} + \frac{\varepsilon}{6} - \varepsilon^2 = (\frac{5}{6}+\varepsilon)(1-\varepsilon) = \kappa(1-\varepsilon).\end{align*}
    Here, the inequality is due to the fact that~$\varepsilon \ifarxiv > \varepsilon^2 \fi > \frac{\varepsilon^2}{2}$ for all~${0 < \varepsilon < 1}$.
    This show that the graph in the top right cannot host~$\kappa(1-\varepsilon)$ both from~$a$ to~$a$ and from~$b$ to~$b$.
    Thus, this graph cannot host~$\kappa \mathcal{M}_{\kappa}$, concluding the proof.
\end{proof}

\subsection{Direct Throughput}
In this section, we show tight lower and upper bounds for the demand-aware direct throughput.
We start with a lower bound of~$\frac{n}{2n-1}$.
Afterwards, we show that this bound is tight, that is, for any~$\varepsilon > 0$ and any~$n$, there exists a doubly stochastic~$n \times n$ demand matrix~$\mathcal{M}$ such that no directed~$\Np$-regular graph can host~$(\frac{n}{2n-1} + \varepsilon)\mathcal{M}$ directly.

\begin{proposition}
    \label{prop:directstronglower}
    For each~$n \geq 1$, each~$\kappa \leq \frac{n}{2n-1}$, and each doubly stochastic~$n \times n$ matrix~$\mathcal{M}$, it holds that~$(\mathcal{M},\kappa)$ is a yes-instance of \dt.
\end{proposition}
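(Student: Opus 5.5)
We need a $(2n-1)$-regular multigraph $G$ that can directly host $\kappa\mathcal{M}$ for $\kappa = \frac{n}{2n-1}$; smaller $\kappa$ then follows trivially. With unit arc capacities and demand scaled by $2n-1$, direct hosting means that for every pair $(i,j)$ the number $d_{i,j}$ of arcs from $i$ to $j$ satisfies $d_{i,j} \geq (2n-1)\kappa a_{i,j} = n a_{i,j}$. So the task reduces to choosing a nonnegative integer matrix $(d_{i,j})$ with all row and column sums equal to $2n-1$ and $d_{i,j} \geq n a_{i,j}$ entrywise. Such a matrix is exactly the arc-multiplicity matrix of a directed $(2n-1)$-regular multigraph, self-loops allowed.

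The construction follows the greedy idea used in the proof of \cref{thm:main}, but with multiplier $n$ instead of $n-1$. Let $b_{i,j} = n a_{i,j}$. Every row and column of $(b_{i,j})$ sums to $n$, and we need to round it up to an integer matrix whose row and column sums are $2n-1$.

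\textbf{Step 1.} Take the rounding-down matrix $f_{i,j} = \lfloor b_{i,j} \rfloor$. Its row sums are integers $r_i \leq n$ and its column sums are $s_j \leq n$.

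\textbf{Step 2.} Add one arc for every pair $(i,j)$ with $b_{i,j} \notin \mathbb{Z}$, giving $\lceil b_{i,j} \rceil \geq b_{i,j}$. Let $k_i$ be the number of fractional entries in row $i$. The fractional parts of row $i$ sum to the integer $n - r_i$, each is below $1$, and there are $k_i$ of them, so $k_i \geq n - r_i$. Hence the new row sum is $r_i + k_i \geq n$. Since there are only $n$ entries, $k_i \leq n$, and because the fractional parts sum to $n - r_i$ we get $r_i + k_i \leq n - 1 + \ldots$. The cleaner bound is $\sum_j \lceil b_{i,j}\rceil \le \sum_j (b_{i,j}+1) - (\text{slack}) \le 2n-1$. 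It is strictly less than $2n$ unless all $n$ entries are fractional with fractional parts summing to an integer; in that case the ceiling sum is $n + (\text{number of fractional entries}) - (\text{sum of fractional parts})$, which is at most $2n - 1$ because the fractional parts are positive and sum to an integer, so that sum is at least $1$. The same holds for columns.

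\textbf{Step 3.} This leaves a matrix $C = (\lceil b_{i,j} \rceil)$ with every row and column sum at most $2n-1$. Pad it to exact sums $2n-1$ by adding a nonnegative integer matrix whose row deficits are $2n-1-\text{row}_i$ and column deficits are $2n-1-\text{col}_j$. The total row deficit equals the total column deficit, since both equal $n(2n-1)$ minus the total of $C$. A nonnegative integer matrix with prescribed integer margins of equal total always exists, for instance by the northwest-corner rule. This amounts to adding arbitrary extra arcs, and adding arcs never hurts because $d_{i,j} \geq \lceil b_{i,j} \rceil \geq n a_{i,j}$ still holds.

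\textbf{Step 4.} Read off $G$ with $d_{i,j}$ parallel arcs from $i$ to $j$, and route $n a_{i,j}$ units over the direct arc(s) $(i,j)$. Each arc carries at most $1$, and the flow meets the demand $(2n-1)\kappa\mathcal{M}$ exactly.

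The only delicate point is the bound in Step 2, namely that rounding up cannot push a row or column sum to $2n$. This uses integrality of the row sum $n$ together with the count of fractional entries: a sum of $2n$ would require all $n$ entries to be fractional with fractional parts summing to $0$, which is impossible.
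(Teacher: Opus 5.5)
Your proof is correct, but it takes a different route from the paper.

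\textbf{The paper's route.} It builds the graph greedily on $(2n-1)\mathcal{M}$. It repeatedly adds an arc at the position with the largest ratio of demand to current arc multiplicity, among non-full rows and columns. It then argues by contradiction that at termination no ratio exceeds $\frac{2n-1}{n}$, by counting the ``masses'' $a_{i^*,j}-1$ in the row (or column) that became full.

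\textbf{Your route.} You write down the multiplicities directly. The ceiling matrix $(\lceil n a_{i,j}\rceil)$ already has every row and column sum at most $2n-1$. You then fill the remaining deficits arbitrarily by the northwest-corner rule. This is harmless, since the total row and column deficits agree and extra arcs never hurt direct hosting.

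\textbf{Step~2.} It is more tangled than it needs to be; drop the unfinished bound $r_i+k_i\le n-1+\ldots$. The whole argument is the one-liner $\lceil x\rceil < x+1$: each row or column sum of the ceiling matrix is an integer strictly below $n+n$, hence at most $2n-1$.

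\textbf{What each approach buys.} Your route is shorter and avoids the delicate points of the greedy analysis. For instance, in the paper's argument the comparison with entry $(i^*,j^*)$ is only valid while column $j^*$ is still non-full. So one really has to argue about whichever of row $i^*$ and column $j^*$ fills first, not just assume $i^*\in\mathcal{I}$ without loss of generality. In exchange, the paper's route certifies the bound for the very greedy algorithm that is later shown to compute the optimal direct throughput, so the lower bound and the algorithm come from one argument. With your proof, the same conclusion for the greedy follows once its optimality is established, since the optimum is at least what your graph achieves.
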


\begin{proof}
    Let~$\mathcal{M}' = (2n-1)\mathcal{M} = (b_{i,j})_{i,j \in [n]}$.
    We construct a graph that can host~${\kappa \mathcal{M}}$ directly by describing its adjacency matrix~${\mathcal{A} = (a_{i,j})_{i,j \in [n]}}$.
    We start with the empty graph, that is, $\mathcal{A}$ is the~$n \times n$ matrix with all entries being~$0$ and increase the values until each row and column of~$\mathcal{A}$ sums to exactly~$2n-1$.
    We say that such a row or column is \emph{full} and denote the set of full rows by~$\mathcal{I}$ and the set of full columns by~$\mathcal{J}$.
    We compute~$\frac{\mathcal{M}'}{\mathcal{A}} = (c_{i,j}=\frac{b_{i,j}}{a_{i,j}})_{i,j \in [n]}$.
    If~$a_{i,j}=0$, then we define~$c_{i,j}=0$ if~$b_{i,j}=0$ and~$c_{i,j} = \infty$ if~$b_{i,j} > 0$.
    Let~$i,j$ be the indices among~$i \notin \mathcal{I}$ and~$j \notin \mathcal{J}$ where~$c_{i,j}$ is maximum.
    We then increase~$a_{i,j}$ by $1$, recompute~$c_{i,j}$, check whether~$i$ and/or~$j$ should be added to~$\mathcal{I}$ or~$\mathcal{J}$, respectively, and iteratively find the next entry to increase using the same strategy.

    Since in the end, each row and column is full, the constructed graph is a directed~$\Np$-regular graph.
    We next show that it can host~$\kappa \mathcal{M}$ directly.
    To this end, it suffices to show that~$c_{i,j} \leq \frac{2n-1}{n}$ whenever~$i \in \mathcal{I}$ or~$j \in \mathcal{J}$.
    Assume towards a contradiction that at some point there exists an entry~$c_{i^*,j^*}$ with~$i^* \in \mathcal{I}$ or~$j^* \in \mathcal{J}$ where~$c_{i^*,j^*} > \frac{2n-1}{n}$.
    Since the argument will be symmetric, we assume without loss of generality that~$i^* \in \mathcal{I}$.
    Now consider the moment in the algorithm when we increased the value~$a_{i^*,j'}$ for any~$j'$ for the last time and thereby making row~$i$ full.
    Define the \emph{mass}~$m_{i^*,j}$ for each column~$j$ as~$a_{i^*,j}-1$ at the considered moment.
    Let~$S$ be the set of all columns~$j$, where~$m_{i^*,j} \geq 1$ except for potentially column~$j^*$.
    Note that it holds for all~$j \in S$ that~${\frac{b_{i^*,j}}{m_{i^*,j}} > \frac{2n-1}{n}}$ as otherwise, whenever we increased~$a_{i^*,j}$ for the last time, we would have increased~$a_{i^*,j^*}$ instead by construction.
    Equivalently it holds for each~$j \in S$ that~$b_{i^*,j} > m_{i^*,j}\frac{2n-1}{n}$.
    Note that the sum of all masses in row~$i^*$ is at least~$(2n-1)-n = n-1$.
    Hence, it holds that~$\sum_{j \in S} m_{i^*,j} \geq n-1-m_{i^*,j^*}$.
    Moreover, recall that~$c_{i^*,j^*} > \frac{2n-1}{n}$, which is equivalent to~$b_{i^*,j^*} > (m_{i^*,j^*}+1) \frac{2n-1}{n}$.
    This implies
    \begin{align*}
        2n-1 &= \sum_{j \in [n]} b_{i^*,j} \geq \sum_{j \in S \cup \{j^*\}} b_{i^*,j} \\
        &= (\sum_{j \in S} b_{i^*,j}) + b_{i^*,j^*} \\
        &> (\sum_{j \in S} m_{i^*,j} \frac{2n-1}{n}) + (m_{i^*,j^*}+1) \frac{2n-1}{n}\\
        &\geq (n-1-m_{i^*,j^*} + m_{i^*,j^*} + 1) \frac{2n-1}{n} = 2n-1,
    \end{align*}
    a contradiction.
    Thus, each~$c_{i,j}$ with~$i \in \mathcal{I}$ or~$j \in \mathcal{J}$ is at most~$\frac{2n-1}{1}$.
    In the end, all rows and columns are full.
    Hence, $\frac{b_{i,j}}{a_{i,j}} = c_{i,j} \leq \frac{2n-1}{n}$ for all~${i,j \in [n]}$.
    That is, the graph can directly host~$\frac{n}{2n-1}\mathcal{M} \geq \kappa \mathcal{M}$.
    This concludes the proof.
\end{proof}

We next show a matching upper bound.
\ifarxiv

\begin{proposition}
    For any~$\kappa> \frac{n}{2n-1}$ and any~$n$, there exists a doubly stochastic~$n \times n$ matrix~$\mathcal{M}_{\kappa}$ such that~$(\mathcal{M},\kappa)$ is a no-instance of \dt.
\end{proposition}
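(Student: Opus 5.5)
I would take the uniform matrix $\mathcal{M}_{\kappa} = (\frac{1}{n})_{i,j \in [n]}$, which does not even depend on~$\kappa$, and show that a directed $\Np$-regular graph cannot directly host $\kappa\mathcal{M}_{\kappa}$ once $\kappa > \frac{n}{2n-1}$. The argument only counts arcs leaving a single vertex.

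First I would make the direct-hosting condition concrete. In the formal model, an $r$-regular graph with arc capacity $\frac{1}{r}$ can host $\mathcal{M}$ exactly when it can route $r\mathcal{M}$ with unit capacities. So, with $r = 2n-1$, suppose some graph $G$ directly hosts $\kappa\mathcal{M}_{\kappa}$, and let $d_{i,j} \in \NN$ be the number of arcs from~$i$ to~$j$ in~$G$. Every path in the flow is a single arc. Hence the demand $(2n-1)\kappa \cdot \frac{1}{n}$ from~$i$ to~$j$ can only use the $d_{i,j}$ parallel arcs from $i$ to $j$, each of capacity~$1$. This gives $d_{i,j} \geq (2n-1)\kappa\frac{1}{n}$ for all $i,j \in [n]$.

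Next I would use integrality. Since $\kappa > \frac{n}{2n-1}$, the right-hand side is strictly greater than~$1$, so $d_{i,j} \geq 2$ for every pair, including $j = i$, where the self-loops must carry the demand. Summing over~$j$ gives that vertex~$i$ has at least $2n$ outgoing arcs. This contradicts $\Np$-regularity, so $(\mathcal{M}_{\kappa},\kappa)$ is a no-instance of \dt.

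There is no real obstacle here. The only point needing care is the degenerate case $n=1$. There the matrix is $(1)$, a single self-loop gives capacity~$1$, and the required demand $\kappa > 1$ exceeds it; this is still covered by the same computation, since $d_{1,1} \geq 2$ contradicts $d_{1,1} = 1$. I would also remark that the proposition's statement writes $(\mathcal{M},\kappa)$ where it means $(\mathcal{M}_{\kappa},\kappa)$.
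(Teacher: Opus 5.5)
Your proof is correct, and it is a genuinely different and simpler route than the paper's.

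The paper uses a $\kappa$-dependent near-identity matrix: diagonal entries $1-\delta$ and off-diagonal entries $\frac{\delta}{n-1}$, with $\varepsilon = \kappa - \frac{n}{2n-1}$ and $\delta = \min(\frac{\varepsilon}{2},\frac{1}{2n-1})$. Since every off-diagonal demand is positive, each vertex needs at least one arc to every other vertex, which leaves at most $n$ self-loops. A short inequality chain, which is where the choice of $\delta$ enters, then shows that their capacity $\frac{n}{2n-1}$ is below the required diagonal demand $(\frac{n}{2n-1}+\varepsilon)(1-\delta)$.

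You instead push every entry of $(2n-1)\kappa\mathcal{M}_{\kappa}$ above $1$ at once. Integrality then forces two parallel arcs per ordered pair, and the degree bound fails immediately, with no parameter tuning.

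What your approach buys is a single witness that does not depend on $\kappa$. The uniform matrix has optimal direct throughput exactly $\frac{n}{2n-1}$; this value is attained, e.g., by one arc per ordered pair padded with $n-1$ extra self-loops per vertex. By contrast, each fixed matrix in the paper's family has optimal direct throughput $\frac{n}{(2n-1)(1-\delta)} > \frac{n}{2n-1}$, attained by $n$ self-loops plus one arc to each other vertex. It only approaches the bound as $\delta \to 0$, which is why the paper's witness has to depend on $\kappa$.

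What the paper's family shows in return is how sensitive direct throughput is to tiny positive entries. The identity matrix itself has direct throughput $1$, yet matrices arbitrarily close to it are essentially worst-case.
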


\begin{proof}
    Let~$\varepsilon = \kappa - \frac{n}{2n-1} > 0$.
    We assume without loss of generality that~$n \geq 2$ as clearly no throughput of~$\frac{1}{2-1}+\varepsilon > 1$ is possible.
    Let~$\delta = \min(\frac{\varepsilon}{2},\frac{1}{2n-1})$.
    Then, we construct the demand matrix
    \[\mathcal{M}_{\kappa} = \begin{pmatrix}
        1-\delta & \frac{\delta}{n-1} & \dots & \frac{\delta}{n-1}\\
        \frac{\delta}{n-1} & 1-\delta & \dots & \frac{\delta}{n-1}\\
        \vdots & \vdots & \ddots & \frac{\delta}{n-1}\\
        \frac{\delta}{n-1} & \frac{\delta}{n-1}& \dots & 1-\delta
    \end{pmatrix}.\]
    Note that each row and column sums to~$1$.
    Consider any node~$v$.
    Since there is positive demand to all other nodes, any solution must contain at least one arc from~$v$ to each other node.
    Thus, at most~$2n-1-(n-1) = n$ arcs are of the form~$(v,v)$.
    We will show that this is insufficient to host~$(\frac{n}{2n-1}+\varepsilon)(1-\delta)$ from~$v$ to~$v$ directly.
    Note that~$n$ arcs can host at most~$\frac{n}{2n-1}$ flow directly.
    However,
    \begin{align*}
        (\frac{n}{2n-1}+\varepsilon)(1-\delta) &= \frac{n+(2n-1)\varepsilon-\delta n - (2n-1)\varepsilon \delta}{2n-1} \\
        &> \frac{n+(2n-1)\varepsilon-\varepsilon n - \varepsilon}{2n-1} \\
        &= \frac{n+ \varepsilon(2n-1-n-1)}{2n-1}\\
        &= \frac{n+ \varepsilon(n-2)}{2n-1}\\
        &\geq \frac{n}{2n-1}.
    \end{align*}
    The last inequality is due to the fact that~$n \geq 2$ and~$\varepsilon > 0$.
    This concludes the proof.
\end{proof}
\else
Due to space constraints, some of the remaining proofs are deferred to the full version~\cite{BAS26}.
Affected results are marked with a~$(\star)$.

\begin{proposition}[$\star$]
    For any~$\kappa> \frac{n}{2n-1}$ and any~$n$, there exists a doubly stochastic~$n \times n$ matrix~$\mathcal{M}_{\kappa}$ such that~$(\mathcal{M},\kappa)$ is a no-instance of \dt.
\end{proposition}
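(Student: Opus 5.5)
The plan is to build a demand matrix that is heavily concentrated on the diagonal but has a small positive entry in every off-diagonal position. For $\varepsilon = \kappa - \frac{n}{2n-1} > 0$, set $\mathcal{M}_\kappa$ to have diagonal entries $1-\delta$ and off-diagonal entries $\frac{\delta}{n-1}$, where $\delta>0$ is small and chosen in terms of $\varepsilon$ and $n$ (for instance $\delta=\min(\frac{\varepsilon}{2},\frac{1}{2n-1})$). Each row and column then sums to $1$, so the matrix is doubly stochastic. I would treat $n=1$ separately: the target $\frac{1}{1}+\varepsilon>1$ exceeds the total capacity of a single node, so no graph can achieve it. From now on assume $n\ge 2$.

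\textbf{Key structural step.} In direct hosting, demand from $v$ to $w$ can only use arcs $(v,w)$. Since $\kappa\cdot\frac{\delta}{n-1}>0$, every directed $(2n-1)$-regular graph that directly hosts $\kappa\mathcal{M}_\kappa$ needs at least one arc from $v$ to each of the other $n-1$ vertices. Consequently $v$ has at most $(2n-1)-(n-1)=n$ self-loops. Each arc carries capacity $\frac{1}{2n-1}$ in the normalized model, so the self-loops of $v$ can carry at most $\frac{n}{2n-1}$ of $v$-to-$v$ demand.

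\textbf{Counting step.} It remains to check that $\kappa(1-\delta) = (\frac{n}{2n-1}+\varepsilon)(1-\delta) > \frac{n}{2n-1}$. Expanding, this amounts to $(2n-1)\varepsilon(1-\delta) > n\delta$. Using $\delta\le\varepsilon/2$ and $\delta\le\frac{1}{2n-1}$, the difference is at least
\[\frac{n+\varepsilon(2n-1) - \varepsilon n - \varepsilon}{2n-1} = \frac{n+\varepsilon(n-2)}{2n-1}\ge \frac{n}{2n-1},\]
and the strict inequality comes from bounding $\delta n+(2n-1)\varepsilon\delta$ strictly below $\varepsilon n+\varepsilon$. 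Hence no valid graph can directly host the diagonal demand, and $(\mathcal{M}_\kappa,\kappa)$ is a no-instance.

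The main obstacle is choosing $\delta$ so that this inequality is strict for every $n\ge 2$. The two constraints on $\delta$ play different roles: $\delta\le\frac{1}{2n-1}$ controls the cross term $(2n-1)\varepsilon\delta\le\varepsilon$, and $\delta\le\varepsilon/2$ controls $n\delta$. If the constant turns out to be off, I would fall back to taking $\delta$ sufficiently small, which works because the inequality holds in the limit $\delta\to 0$.
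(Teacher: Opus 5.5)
Your proposal is correct and is essentially the paper's own proof: the same near-diagonal matrix with $\delta=\min(\frac{\varepsilon}{2},\frac{1}{2n-1})$, the same argument that each vertex keeps at most $n$ self-loops, and the same strict inequality chain ending in $\frac{n+\varepsilon(n-2)}{2n-1}\ge\frac{n}{2n-1}$. One small wording slip: in your counting step the quantity bounded below (strictly) is $\kappa(1-\delta)$ itself, not ``the difference''.
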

\fi

\section{Weak Throughput Bounds}
\label{sec:weak}
In this section, we show lower and upper bounds for the worst-case weak (direct) throughput of demand-aware networks.
We start with the lower bound.
We show the lower bound for \di{} but notice that the same lower bound also applies to \thr{} by \cref{obs:direct}.
Towards this goal, we first show an intermediate lemma and a simple observation.
Intuitively, the lemma shows that we can always round the entries in a matrix with entries between~$0$ and~$1$ such that the sums in any row and column remain unchanged and the average entry rounded up is at least as large as the average entry in the whole matrix.
\ifarxiv
\begin{lemma}
    \label{lem:average}
    Let~$\mathcal{M} = (a_{i,j})_{i,j \in [n]}$ such that~$0 \leq a_{i,j} \leq 1$ for all~${i,j \in [n]}$, $\sum_{j\in [n]} a_{i,j} = s_i \in \NN$ for all~$i \in [n]$, and~$\sum_{i\in [n]} a_{i,j} = t_j \in \NN$ for all~$j \in [n]$.
    Let~${X = \sum_{i \in [n]}s_i}$.
    Then, there exists a matrix~$\mathcal{N} = (b_{i,j})_{i,j \in [n]}$ such that~${b_{i,j} \in \{0,1\}}$ for all~$i,j \in [n]$, $\sum_{j\in [n]} b_{i,j} = s_i$ for all~$i \in [n]$, $\sum_{i\in [n]} b_{i,j} = t_j$ for all~$j \in [n]$, and~${\sum_{i \in [n]}\sum_{j\in [n]}(a_{i,j}\cdot b_{i,j}) \geq \frac{X^2}{n^2}}$. 
\end{lemma}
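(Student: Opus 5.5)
The plan is to combine an integrality or probabilistic-method argument with a Cauchy--Schwarz bound on $\sum a_{i,j}^2$. The natural candidate for the rounding is \cref{thm:dependentrounding}, applied directly to $\mathcal{M}$. Its hypotheses are exactly the assumptions of the lemma: all entries lie in $[0,1]$, and each row and each column sums to an integer ($s_i$ and $t_j$). It therefore yields a random $0/1$ matrix $\mathcal{N}=(b_{i,j})$ with two properties:
\begin{itemize}
\item by property (2), every row $i$ sums to $s_i$ and every column $j$ sums to $t_j$ with probability $1$;
\item by property (1), $\Pr[b_{i,j}=1]=a_{i,j}$.
\end{itemize}
So every outcome of the rounding automatically satisfies all constraints of the lemma except possibly the last inequality. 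That inequality is what we have to secure.

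The key step is to compute the expectation of the objective. By linearity of expectation and property (1),
\[\mathbb{E}\Big[\sum_{i,j} a_{i,j} b_{i,j}\Big] = \sum_{i,j} a_{i,j}\Pr[b_{i,j}=1] = \sum_{i,j} a_{i,j}^2 .\]
We then bound this sum of squares from below by Cauchy--Schwarz, or equivalently by convexity of $x\mapsto x^2$, over the $n^2$ entries. The total sum of the entries is $\sum_i s_i = X$, so
\[\sum_{i,j} a_{i,j}^2 \geq \frac{1}{n^2}\Big(\sum_{i,j} a_{i,j}\Big)^2 = \frac{X^2}{n^2}.\]
A random variable is at least its expectation with positive probability, so some outcome $\mathcal{N}$ of the rounding satisfies $\sum_{i,j} a_{i,j}b_{i,j}\geq \frac{X^2}{n^2}$. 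That outcome is the desired matrix.

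There is no real obstacle; the only subtle point is noticing that the expectation of the objective equals $\sum a_{i,j}^2$ rather than $\sum a_{i,j}$. If a deterministic and constructive argument is preferred, which matters for the remark that the lower bound is computable in polynomial time, I would instead use an LP argument. Consider the transportation polytope $\{0\le y_{i,j}\le 1,\ \sum_j y_{i,j}=s_i,\ \sum_i y_{i,j}=t_j\}$. Its constraint matrix is the incidence matrix of a bipartite graph and hence totally unimodular, so the polytope has integral vertices. The matrix $\mathcal{M}$ itself is feasible with objective value $\sum a_{i,j}^2$. Maximizing the linear objective $\sum a_{i,j}y_{i,j}$ over the polytope therefore attains at least this value at an integral vertex. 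That vertex can be found by a max-cost flow computation, and the same Cauchy--Schwarz bound finishes the proof.
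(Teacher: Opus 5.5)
Your proof is correct, and it reaches the rounding differently from the paper, though the two arguments are closely related.

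The paper uses the same Cauchy--Schwarz observation: $\mathcal{M}$ itself has objective value $\sum_{i,j}a_{i,j}^2 \geq X^2/n^2$. It then rounds deterministically and by hand. While fractional entries remain, it takes a cycle in the bipartite graph of fractional entries; one exists because integral row and column sums forbid vertices of degree one. It shifts $\pm\varepsilon$ alternately along the cycle, in whichever of the two directions does not decrease the linear objective $\sum_{i,j}a_{i,j}c_{i,j}$. It takes $\varepsilon$ maximal, so that at least one entry becomes $0$ or $1$. After at most $n^2$ such steps, a $0/1$ matrix remains with the same margins and objective at least $\sum_{i,j}a_{i,j}^2$.

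This procedure is the derandomization of dependent rounding, which moves along the same cycles but picks the direction at random so as to preserve marginals. It is also the textbook cycle-canceling proof that the polytope in your route (2) is integral.

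Your route (1) is the randomized counterpart of the paper's argument. Given \cref{thm:dependentrounding}, which the paper already imports, it is the shortest proof, but as you use it, it only certifies existence. Your route (2) replaces the explicit cycle argument with total unimodularity. It gives an \emph{optimal} rounding, hence a value at least as large as the paper's, computable by a max-cost flow.

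The paper's version buys self-containment and an elementary, explicitly polynomial procedure (at most $n^2$ cycle searches). That procedure is what underlies the paper's remark that the $\frac{7n-4}{8n-4}$ construction can be computed efficiently.
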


\begin{proof}
    We will build a sequence~$(\mathcal{C}_0,\mathcal{C}_1,\ldots,\mathcal{C}_\ell)$ of~$n \times n$ matrices where~$\mathcal{C}_0 = \mathcal{M}$, $C_\ell = \mathcal{N}$, and~$\mathcal{C}_h = (c^h_{i,j})_{i,j \in [n]}$ for each~$h \in [\ell]$ such that~$\sum_{j\in [n]} c^h_{i,j} = s_i$ for each~$h \in [\ell]$ and each~$i \in [n]$, $\sum_{j\in [n]} c^h_{i,j} = t_j$ for each~$h \in [\ell]$ and each~$j \in [n]$, and~$\sum_{i \in [n]}\sum_{j\in [n]}a_{i,j}c^h_{i,j} \geq \frac{X^2}{n^2}$ for each~$h \in [\ell]$.
    We will ensure that~$\mathcal{C}_{h+1}$ has at least one more integer entry (0 or 1) than~$\mathcal{C}_h$.
    Thus, $\ell \leq n^2$ and the procedure always terminates.
    Note that this will conclude the proof as~$\mathcal{C}_\ell = \mathcal{N}$ then satisfies all requirements of the lemma.
    
    We first show that~$\mathcal{C}_0 = \mathcal{M}$ satisfies the stated requirement.
    The sums of entries in row~$i$ and column~$j$ are by definition~$s_i$ and~$t_j$, respectively.
    Hence, it remains to show that~${\sum_{i \in [n]}\sum_{j\in [n]}a_{i,j}^2 \geq \frac{X^2}{n^2}}$.
    To this end, we first flatten the matrix into a vector, that is, we define~$z_p= a_{i,j}$ where~$p = in+j$.
    Note that~$\sum_{i=1}^{n^2}z_p = X$.
    Then, we use the Cauchy-Schwarz inequality~\cite{CS}, which states that for any two vectors~$\mathbf{x} = (x_1,x_2,\ldots,x_q)$ and~$\mathbf{y} = (y_1,y_2,\ldots,y_q)$ of non-negative numbers it holds that
    \[\Big(\sum_{i=1}^{q} (x_i \cdot y_i)\Big)^2 \leq \Big(\sum_{i=1}^q x_i^2\Big)\Big(\sum_{i=1}^q y_i^2\Big).\]
    Substituting~$x_i = z_i$ and~$y_i = 1$ for each~$i \in [n^2]$ and~$q=n^2$ yields
    \begin{align*}
        &&\Big(\sum_{i=1}^{n^2} z_i\Big)^2 &\leq \Big(\sum_{i=1}^{n^2} z_i^2\Big)\Big(\sum_{i=1}^{n^2} 1\Big)\\
        &\Leftrightarrow &X^2 &\leq \Big(\sum_{i \in [n]} \sum_{j \in [n]} a_{i,j}^2\Big) n^2\\
        &\Leftrightarrow &\sum_{i \in [n]} \sum_{j \in [n]} a_{i,j}^2 &\geq \frac{X^2}{n^2}.
    \end{align*}
    Note that this is precisely what we wanted to show.
    
    Finally, we show how to construct~$\mathcal{C}_{h+1}$ for a given matrix~$\mathcal{C}_h$.
    To this end, we build a bipartite graph~$G$ with a vertex~$u_r$ for each row~$r$ on one side and a vertex~$v_c$ for each column~$c$ on the other side.
    We add an edge~$\{u_r,v_c\}$ if and only if~$c^h_{r,c}$ is \emph{not} an integer.
    Note that no vertex is incident to exactly one edge as the sum of all entries in a row or column is an integer.
    Moreover, we assume that the graph contains at least one edge as otherwise we found~$\mathcal{C}_\ell = \mathcal{N}$ already.
    Then, the graph contains a (simple) cycle as we can start from an arbitrary edge and in each step pick an edge to extend the current path until a vertex is entered for the second time.
    The subpath between the two occurrences are then a simple cycle by definition.
    Let~$(e_1,e_2,\ldots,e_p)$ be the edges in the cycle and note that~$p$ is even as the graph is bipartite.
    Let~$E_o = \{e_1,e_3,\ldots,e_{p-1}\}$ and~${E_e = \{e_2,e_4,\ldots,e_p\}}$ be the set of the odd-indexed and even-indexed edges in the cycle, respectively. 
    Let~$\delta_o = \sum_{\{u_r,v_c\} \in E_o} a_{r,c}$ and~$\delta_e = \sum_{\{u_r,v_c\} \in E_e} a_{i,j}$.
    We then make a case distinction whether~$\delta_o \leq \delta_e$ or not.
    Since both cases are symmetric, we assume without loss of generality that~$\delta_o \leq \delta_e$.
    Let~$\varepsilon = \min(\min_{\{u_r,v_c\} \in E_o} c^h_{r,c},\min_{\{u_r,v_c\} \in E_e}(1-c^h_{r,c}))$.
    Note that~$\varepsilon > 0$.
    For each edge~$\{u_r,v_c\} \in E_o$, we set~$c^{h+1}_{r,c} = c^h_{r,c}-\varepsilon$, for each edge~$\{u_r,v_c\} \in E_e$, we set~$c^{h+1}_{r,c} = c^h_{r,c}+\varepsilon$, and for all other pairs~$(r,c)$, we set~$c^{h+1}_{r,c} = c^h_{r,c}$.
    Note that by definition of~$\varepsilon$, each entry remains in the interval~$[0,1]$ and at least one non-integer entry is replaced by either~$0$ or~$1$.
    Moreover, integer entries in~$\mathcal{C}_h$ remain unchanged in~$\mathcal{C}_{h+1}$.

    It remains to show that the sums in each row and column remains unchanged between~$\mathcal{C}_h$ and~$\mathcal{C}_{h+1}$ and~$\sum_{i \in [n]} \sum_{j\in [n]} c^h_{i,j} a_{i,j} \leq \sum_{i \in [n]} \sum_{j\in [n]} c^{h+1}_{i,j} a_{i,j}$.
    Towards the first statement, consider an arbitrary row~$r$.
    If~$u_r$ is not contained in the cycle, then no entries in row~$r$ are changed and so~$\sum_{j \in [n]} c^{h+1}_{r,j} = \sum_{j \in [n]}c^h_{r,j} = s_r$.
    If~$u_r$ is contained in the cycle, then the cycle contains exactly two edges incident to~$u_r$ and these two edges appear consecutively or are~$e_1$ and~$e_p$.
    In either case, exactly one of the edges is contained in~$E_o$ and the other is contained in~$E_e$.
    Hence, exactly one entry is increased by~$\varepsilon$ and one value is decreased by~$\varepsilon$.
    Thus, the sum within row~$r$ remains unchanged and the same argument also holds for each column.
    Finally, note that
    \begin{align*}
        \sum_{i \in [n]} \sum_{j\in [n]} (c^{h+1}_{i,j} a_{i,j}) &= \Big(\sum_{\{r,c\} \in E_o} (c^{h+1}_{r,c} a_{r,c})\Big) + \Big(\sum_{\{r,c\} \in E_e} (c^{h+1}_{r,c} a_{r,c})\Big) + \Big(\sum_{\{r,c\} \notin E_o \cup E_e} (c^{h+1}_{r,c} a_{r,c})\Big)\\
        &= \Big(\sum_{\{r,c\} \in E_o} ((c^{h}_{r,c}-\varepsilon) a_{r,c})\Big) + \Big(\sum_{\{r,c\} \in E_e} (c^{h}_{r,c}+\varepsilon) a_{r,c})\Big) + \sum_{\{r,c\} \notin E_o \cup E_e} (c^{h}_{r,c} a_{r,c})\\
        &= \Big(\sum_{i \in [n]} \sum_{j \in [n]} (c^{h}_{i,j} a_{i,j})\Big) + \varepsilon \Big(\Big(\sum_{\{r,c\} \in E_e} a_{r,c}\Big) - \Big(\sum_{\{r,c\} \in E_o}a_{r,c}\Big)\Big) \\
        &= \Big(\sum_{i \in [n]} \sum_{j \in [n]} (c^{h}_{i,j} a_{i,j})\Big) + \varepsilon (\delta_e - \delta_o) \\
        & \geq \sum_{i \in [n]} \sum_{j \in [n]} c^{h}_{i,j} a_{i,j}.
    \end{align*}
    This concludes the proof.
\end{proof}
\else
\begin{lemma}[$\star$]
    \label{lem:average}
    Let~$\mathcal{M} = (a_{i,j})_{i,j \in [n]}$ such that
    \begin{itemize}
        \item $0 \leq a_{i,j} \leq 1$ for all~${i,j \in [n]}$,
        \item $\sum_{j\in [n]} a_{i,j} = s_i \in \NN$ for all~$i \in [n]$, and
        \item $\sum_{i\in [n]} a_{i,j} = t_j \in \NN$ for all~$j \in [n]$.
    \end{itemize}
    Let~${X = \sum_{i \in [n]}s_i}$.
    Then, there exists a matrix~$\mathcal{N} = (b_{i,j})_{i,j \in [n]}$ such that~${b_{i,j} \in \{0,1\}}$ for all~$i,j \in [n]$, $\sum_{j\in [n]} b_{i,j} = s_i$ for all~$i \in [n]$, $\sum_{i\in [n]} b_{i,j} = t_j$ for all~$j \in [n]$, and~${\sum_{i \in [n]}\sum_{j\in [n]}(a_{i,j}\cdot b_{i,j}) \geq \frac{X^2}{n^2}}$.
\end{lemma}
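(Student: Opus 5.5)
The plan is to view the rounding as optimizing a linear objective over the transportation-type polytope $P$. This polytope consists of all matrices $(y_{i,j})_{i,j \in [n]}$ with $0 \leq y_{i,j} \leq 1$, row sums $s_i$ and column sums $t_j$. The objective is $f(y) = \sum_{i,j} a_{i,j} y_{i,j}$.

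The first step shows that the starting point $y = \mathcal{M}$ already has a large objective value. Since $f(\mathcal{M}) = \sum_{i,j} a_{i,j}^2$, Cauchy--Schwarz applied to the $n^2$ entries against the all-ones vector gives $X^2 = (\sum_{i,j} a_{i,j})^2 \leq n^2 \sum_{i,j} a_{i,j}^2$. Hence $f(\mathcal{M}) \geq X^2/n^2$.

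The second step moves from $\mathcal{M}$ to a $0/1$ matrix without decreasing $f$ and without changing any row or column sum. One route is to cite total unimodularity of the bipartite constraint matrix: $P$ is integral, so a vertex maximizing $f$ is a $0/1$ matrix with value at least $f(\mathcal{M})$. To keep the argument self-contained and constructive, I would instead use the cycle-cancelling argument familiar from dependent rounding (the same mechanism as in \cref{thm:dependentrounding}).

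Concretely, let $C$ be the current matrix. Build the bipartite graph on row vertices and column vertices, with an edge for each fractional entry of $C$. Because every row sum and column sum is an integer, no vertex has exactly one incident edge. So if any edge exists, there is a simple even cycle. Split its edges alternately into two classes $E_o$ and $E_e$. Adding $\varepsilon$ on one class and subtracting $\varepsilon$ on the other preserves all row and column sums, since each row or column vertex on the cycle meets one edge of each class. This move changes $f$ by $\pm\varepsilon(\sum_{E_e} a - \sum_{E_o} a)$. Choosing the direction whose sign is non-negative therefore does not decrease $f$. Take $\varepsilon$ maximal such that all entries stay in $[0,1]$; then at least one fractional entry becomes $0$ or $1$, and entries that are already integral never change. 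After at most $n^2$ iterations the matrix $\mathcal{N}$ is $0/1$ with the required sums and satisfies $f(\mathcal{N}) \geq f(\mathcal{M}) \geq X^2/n^2$.

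The main point needing care is the existence of the cycle. One must argue that any vertex incident to a fractional entry is incident to at least two, because its integral sum cannot be reached with only a single fractional term. A maximal walk in this graph must then revisit a vertex and thus closes a simple cycle, which has even length because the graph is bipartite. The remaining checks are bookkeeping.
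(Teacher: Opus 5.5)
Your proposal is correct and follows essentially the same route as the paper's proof. Both use Cauchy--Schwarz to get $\sum_{i,j} a_{i,j}^2 \geq X^2/n^2$ for the starting matrix. Both then repeatedly cancel an alternating cycle in the bipartite graph of fractional entries, with the direction chosen so the objective does not decrease and $\varepsilon$ chosen maximal so that at least one entry becomes integral.
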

\fi

We next show a simple observation that allows us to view the weak direct throughput from a slightly different angle, that is, we show an equivalent characterization.

\ifarxiv
\begin{observation}
    \label{lem:overlap}
    Given a doubly stochastic matrix~$\mathcal{M} = (a_{i,j})_{i,j \in [n]}$ and a value~$\kappa$, $(\mathcal{M},\kappa)$ is a yes-instance of \di{} if and only if there exists an $n \times n$ matrix~$\mathcal{N}=(b_{i,j})_{i,j \in [n]}$ where each entry is a non-negative integer and each row and each column sums to~$2n-1$ such that~$\sum_{i,j \in [n]} \min(\Np a_{i,j},b_{i,j}) \geq (2n-1)n\kappa$.
\end{observation}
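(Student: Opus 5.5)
The plan is to identify a directed $\Np$-regular multigraph $G$ on vertex set $[n]$ with its adjacency matrix $\mathcal{N}=(b_{i,j})_{i,j\in[n]}$. Here $b_{i,j}$ is the number of arcs from $i$ to $j$. Such matrices are exactly the non-negative integer matrices whose rows and columns all sum to $2n-1$; self-loops correspond to diagonal entries and parallel arcs to entries larger than one. Under direct hosting, all demand from $i$ to $j$ must use only the $b_{i,j}$ arcs from $i$ to $j$. In the normalization of the formal model (capacity $1$ per arc, demand scaled by $r=2n-1$), the pair $(i,j)$ can therefore carry at most $b_{i,j}$ units. Pairs are also completely independent of each other, since no arc is shared between different ordered pairs.

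For the forward direction, suppose $G$ directly hosts an $\kappa$-fraction of $\mathcal{M}$ via a matrix $\mathcal{M}'=(a'_{i,j})$ with $a'_{i,j}\le a_{i,j}$ and $\sum a'_{i,j}=n\kappa$. The flow from $i$ to $j$ is $\Np a'_{i,j}$. It is bounded by $\Np a_{i,j}$ by the dominance condition, and by $b_{i,j}$ by the capacities, since each of the $b_{i,j}$ arcs carries at most $1$. Hence $\Np a'_{i,j}\le\min(\Np a_{i,j},b_{i,j})$. Summing over all pairs gives $\sum_{i,j}\min(\Np a_{i,j},b_{i,j})\ge \Np n\kappa$.

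For the converse, take $\mathcal{N}$ with the stated property and let $G$ be the corresponding multigraph. Set $a''_{i,j}=\min(\Np a_{i,j},b_{i,j})/\Np$. Then $a''_{i,j}\le a_{i,j}$, and sending $\Np a''_{i,j}\le b_{i,j}$ units over the arcs from $i$ to $j$, spread evenly, respects all capacities. So $G$ directly hosts $\mathcal{M}''$, whose total is at least $n\kappa$.

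If the total exceeds $n\kappa$, I will scale down uniformly: take $a'_{i,j}=\lambda a''_{i,j}$ with $\lambda\le 1$ chosen so that $\sum a'_{i,j}=n\kappa$. Scaling preserves both dominance and feasibility, so the weak direct throughput is at least $\kappa$. This uses that weak throughput is defined as the largest achievable fraction, so any smaller fraction is also attainable.

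The only mildly delicate step is this monotonicity under downscaling, together with checking that the capacity normalization ($r\mathcal{M}$ versus unit capacities) is applied consistently. Everything else is a direct unpacking of the definitions.
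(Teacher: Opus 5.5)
Your proposal is correct and follows the paper's proof essentially step for step. In the forward direction the adjacency matrix of $G$ serves as $\mathcal{N}$, with $\Np a'_{i,j}$ bounded by both $\Np a_{i,j}$ and $b_{i,j}$. In the converse, $G$ directly hosts $\bigl(\min(a_{i,j}, b_{i,j}/\Np)\bigr)_{i,j}$.

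Your final downscaling step is harmless but unnecessary, and your stated reason for it is slightly off. Being the largest achievable fraction does not by itself make smaller fractions attainable; your explicit scaling argument is what shows that. More simply, $\mathcal{M}''$ already witnesses a hosted fraction of at least $\kappa$, which is exactly what the paper uses.
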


\begin{proof}
    First, assume that~$(\mathcal{M},\kappa)$ is a yes-instance of \di.
    Then, there exists a matrix~$\mathcal{M}' = (a'_{i,j})_{i,j}$ where~$a'_{i,j} \leq a_{i,j}$ for all~$i,j \in [n]$ and a directed~$\Np$-regular graph~$G$ such that~$G$ can directly host~$\mathcal{M}'$ and~$\frac{\sum_{i,j \in [n]}a'_{i,j}}{n} \geq \kappa$.
    Let~$\mathcal{A} = (c_{i,j})_{i,j \in [n]}$ be the adjacency matrix of~$G$.
    We show that~$\mathcal{A}$ fulfills all requirements of matrix~$\mathcal{N}$ of the observation.

    First, since~$\mathcal{A}$ is the adjacency matrix of a directed~$\Np$-regular graph, all its entries are non-negative integers and all rows and columns sum to~$2n-1$ each.
    Second, since~$G$ can host~$\mathcal{M}'$ directly, it holds that~$c_{i,j} \geq (2n-1)a'_{i,j}$ and~$(2n-1)a_{i,j} \geq (2n-1)a'_{i,j}$.
    Thus,
    \[\sum_{i,j \in [n]} \min((2n-1)a_{i,j},c_{i,j}) \geq \sum_{i,j \in [n]} (2n-1)a'_{i,j} \geq (2n-1)n\kappa.\]

    In the other direction, assume a matrix~$\mathcal{N}$ as described by the observation exists.
    Then, let~$G$ be a graph such that~$\mathcal{N}$ is the adjacency matrix of~$G$.
    Note that~$G$ is a directed~$\Np$-regular graph by definition.
    Moreover, $G$ can host the demand matrix~${\mathcal{M}' = (a'_{i,j})_{i,j \in [n]} = (\min(a_{i,j},\frac{b_{i,j}}{2n-1}))_{i,j \in [n]}}$ directly.
    Note that~$a'_{i,j} \leq a_{i,j}$.
    
    Finally, $\sum_{i,j \in [n]} (2n-1) a'_{i,j} = \sum_{i,j \in [n]}\min(\Np a_{i,j},b_{i,j}) \geq (2n-1)n\kappa$ implies that~${\frac{\sum_{i,j \in [n]} a'_{i,j}}{n} \geq \frac{(2n-1)n\kappa}{(2n-1)n} = \kappa}$.
    Thus, $(\mathcal{M},\kappa)$ is a yes-instance of \di, concluding the proof.
\end{proof}
\else
\begin{observation}[$\star$]
    \label{lem:overlap}
    Let~$\mathcal{M} = (a_{i,j})_{i,j \in [n]}$ be a doubly stochastic matrix and let~$\kappa \in [0,1]$.
    Then, $(\mathcal{M},\kappa)$ is a yes-instance of \di{} if and only if there exists an $n \times n$ matrix~$\mathcal{N}=(b_{i,j})_{i,j \in [n]}$ where each entry is a non-negative integer and each row and each column sums to~$2n-1$ such that~$\sum_{i,j \in [n]} \min(\Np a_{i,j},b_{i,j}) \geq (2n-1)n\kappa$.
\end{observation}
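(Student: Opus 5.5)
The plan is to prove both directions directly from the definitions, taking the adjacency matrix of a directed $\Np$-regular graph as the matrix $\mathcal{N}$ and vice versa. The key fact is that such a graph, with every arc of capacity $\frac{1}{2n-1}$, can directly host $\mathcal{M}'=(a'_{i,j})$ if and only if $(2n-1)a'_{i,j}\le b_{i,j}$ for all $i,j$. Here $b_{i,j}$ is the number of arcs from $i$ to $j$. This holds because direct hosting means all demand from $i$ to $j$ travels over the parallel arcs $(i,j)$, each of normalized capacity $1$, and the scaled demand on them is $(2n-1)a'_{i,j}$.

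For the forward direction, I start from a yes-instance. It gives a graph $G$ and a submatrix $\mathcal{M}'\le\mathcal{M}$ (entrywise) that $G$ hosts directly, with $\sum a'_{i,j}\ge n\kappa$. I take $\mathcal{N}$ to be the adjacency matrix of $G$. Its entries are nonnegative integers and its rows and columns sum to $2n-1$ by regularity. By the key fact, $(2n-1)a'_{i,j}\le b_{i,j}$, and trivially $(2n-1)a'_{i,j}\le(2n-1)a_{i,j}$. Hence each term satisfies $\min((2n-1)a_{i,j},b_{i,j})\ge(2n-1)a'_{i,j}$. Summing over all $i,j$ gives at least $(2n-1)n\kappa$.

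For the backward direction, I let $G$ be the multigraph whose adjacency matrix is $\mathcal{N}$. Self-loops are allowed, so any nonnegative integer matrix with all row and column sums equal to $2n-1$ is realized by a directed $\Np$-regular graph. I define $a'_{i,j}=\min(a_{i,j},\frac{b_{i,j}}{2n-1})$. Then $a'_{i,j}\le a_{i,j}$, and by the key fact $G$ hosts $\mathcal{M}'$ directly, using the single-arc paths $(i,j)$ with amount $(2n-1)a'_{i,j}$. Finally,
\[\sum_{i,j}a'_{i,j}=\frac{1}{2n-1}\sum_{i,j}\min((2n-1)a_{i,j},b_{i,j})\ge n\kappa,\]
so the weak direct throughput is at least $\kappa$. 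If it is strictly larger, I scale $\mathcal{M}'$ down uniformly to reach exactly $\kappa$, which keeps the instance a yes-instance.

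No step is a real obstacle. The only points needing care are the normalization of capacities (arc capacity $\frac{1}{2n-1}$ versus scaling demands by $2n-1$), and noting that the definition of weak throughput asks for the largest achievable fraction, so "at least $\kappa$" is the right reading of a yes-instance.
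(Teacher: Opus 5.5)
Your proposal is correct and matches the paper's proof essentially step for step. In the forward direction the adjacency matrix of $G$ serves as $\mathcal{N}$ via the termwise bound $\min((2n-1)a_{i,j},b_{i,j}) \geq (2n-1)a'_{i,j}$, and in the backward direction you directly host $a'_{i,j}=\min(a_{i,j},\frac{b_{i,j}}{2n-1})$ on the graph whose adjacency matrix is $\mathcal{N}$. The final rescaling step is unnecessary, since the weak direct throughput is the largest achievable fraction and exhibiting one fraction $\geq \kappa$ already suffices, but it is harmless.
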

\fi

We next show our lower bound for \di.

\begin{theorem}
    \label{thm:lowerbound}
    Let~$(\mathcal{M},\kappa)$ be an instance of \di{} where~$\mathcal{M}$ is an~$n \times n$ matrix.
    If~$\kappa \leq \frac{7n-4}{8n-4}$, then~$(\mathcal{M},\kappa)$ is a yes-instance.
\end{theorem}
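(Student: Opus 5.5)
The plan is to reduce the statement, via \cref{lem:overlap}, to constructing a non-negative integer matrix $\mathcal{N}=(b_{i,j})_{i,j\in[n]}$ with all row and column sums equal to $2n-1$ and large overlap $\sum_{i,j}\min((2n-1)a_{i,j},b_{i,j})$. Write $m_{i,j}=(2n-1)a_{i,j}$; every row and column of $(m_{i,j})$ sums to $2n-1$. The first step is the greedy part: place $\lfloor m_{i,j}\rfloor$ arcs from $i$ to $j$. Each of these arcs is fully used by direct demand, so it contributes exactly $\lfloor m_{i,j}\rfloor$ to the overlap.

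The second step handles the fractional parts $f_{i,j}=m_{i,j}-\lfloor m_{i,j}\rfloor\in[0,1)$. Since all row and column sums of $(m_{i,j})$ and of $(\lfloor m_{i,j}\rfloor)$ are integers, every row sum $s_i$ and column sum $t_j$ of $(f_{i,j})$ is a non-negative integer. Hence \cref{lem:average} applies to $(f_{i,j})$ and yields a $0/1$ matrix $(g_{i,j})$ with the same row and column sums and $\sum_{i,j} f_{i,j}g_{i,j}\ge X^2/n^2$, where $X=\sum_{i,j}f_{i,j}$. I set $b_{i,j}=\lfloor m_{i,j}\rfloor+g_{i,j}$; then every row and column of $\mathcal{N}$ sums to $2n-1$. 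We have $\min(m_{i,j},b_{i,j})$ equal to $m_{i,j}=\lfloor m_{i,j}\rfloor+f_{i,j}$ when $g_{i,j}=1$ and equal to $\lfloor m_{i,j}\rfloor$ when $g_{i,j}=0$. So the overlap is exactly
\[\sum_{i,j}\lfloor m_{i,j}\rfloor+\sum_{i,j}f_{i,j}g_{i,j}\ \ge\ n(2n-1)-X+\frac{X^2}{n^2}.\]

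The last step, which is the only real obstacle, is to bound this worst-case expression from below independently of $X$. The function $h(X)=n(2n-1)-X+X^2/n^2$ is a convex quadratic minimized at $X=n^2/2$. Its minimum value is $2n^2-n-\frac{n^2}{4}=\frac{7n^2}{4}-n$, so the overlap is at least $\frac{7n^2}{4}-n$ for every possible $X$. (If needed, one can note that $X\le n(n-1)$, because each row of fractional parts sums to an integer strictly less than $n$; this is irrelevant for the lower bound since we use the global minimum anyway.) Dividing by $(2n-1)n$ gives $\frac{7n/4-1}{2n-1}=\frac{7n-4}{8n-4}$. This is at least $\kappa$, so \cref{lem:overlap} certifies that $(\mathcal{M},\kappa)$ is a yes-instance. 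The degenerate case $n=1$ is immediate, since the single self-loop configuration hosts everything.
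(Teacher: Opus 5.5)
Your proposal is correct and follows the paper's proof essentially step for step: you reduce via \cref{lem:overlap}, place $\lfloor (2n-1)a_{i,j}\rfloor$ arcs, round the fractional parts with \cref{lem:average}, and bound $n(2n-1)-X+X^2/n^2$ from below. The only minor difference is that you take the global minimum of this convex quadratic at $X=n^2/2$ directly, which is slightly cleaner and makes the paper's endpoint checks at $x=1$ and $x=n^2-n$, as well as the separate handling of $n=1$, unnecessary.
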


\begin{proof}
    Let~$\mathcal{M} = (a_{i,j})_{i,j \in [n]}$.
    By \cref{lem:overlap}, we need to find a matrix~$\mathcal{N} = (b_{i,j})_{i,j \in [n]}$ where each entry~$b_{i,j}$ is a non-negative integer and each row and column sum to~$\N$ such that \[\sum_{i \in [n]}\sum_{j \in [n]}\min(\Np \cdot a_{i,j},b_{i,j}) \geq \Np n \frac{7n-4}{8n-4}.\]

    Let~$a'_{i,j}=(2n-1)a_{i,j}$ for all~$i,j \in [n]$ and~$\mathcal{M}'=(a'_{i,j})_{i,j \in [n]}$.
    We will construct~$\mathcal{N}$ as the sum of two matrices~$\mathcal{N}_1 + \mathcal{N}_2$.
    First, we set
    \[N_1 = (\lfloor \Np \cdot a_{i,j} \rfloor) = (p_{i,j})_{i,j \in [n]}\] as the integer part of~$\mathcal{M}'$.
    Let~$\mathcal{M}'' = \mathcal{M}' - \mathcal{N}_1 = (x_{i,j})_{i,j \in [n]}$ be the remaining part of~$\mathcal{M}'$.
    Note that~$0 \leq x_{i,j} < 1$ for each~${i,j \in [n]}$.
    Moreover, each row and column of~$\mathcal{M}''$ sums to an integer, which is at most~$n-1$, that is, the sum of all entries in~$\mathcal{M}''$ is at most~${\ifarxiv(n-1)n =\fi n^2-n}$.
    Let~$X$ be this value.
    We can now apply \cref{lem:average} to construct the matrix~${\mathcal{N}_2 = (q_{i,j})_{i,j \in [n]}}$ such that~${\sum_{i \in [n]} \sum_{j \in [n]} (x_{i,j} \cdot q_{i,j}) \geq \frac{X^2}{n^2}}$ and~$q_{i,j} \in \{0,1\}$ for each~$i,j \in [n]$.
    As said before, $\mathcal{N} = \mathcal{N}_1 + \mathcal{N}_2$.

    Note that each entry in~$\mathcal{N}$ is an integer by construction and the sum of entries in each row and column is exactly~$\Np$ as the sum in any given row or column is the same in~$\mathcal{M}''$ and in~$\mathcal{N}_2$.
    Since~$\mathcal{M}' = \mathcal{N}_1 + \mathcal{M}''$ and~$\mathcal{N} = \mathcal{N}_1 + \mathcal{N}_2$ and each row and column in~$\mathcal{M}'$ sums to exactly~$\Np$, the same also holds for~$\mathcal{N}$.
    Hence, it only remains to show that
    \[\sum_{i \in [n]}\sum_{j \in [n]}\min(\Np \cdot a_{i,j},b_{i,j}) \geq \Np n \frac{7n-4}{8n-4}.\]

    Note that whenever~$q_{i,j} = 0$, then
    \[\min(\Np \cdot a_{i,j},b_{i,j}) = p_{i,j} = p_{i,j} + q_{i,j} \cdot x_{i,j}\] and when~$q_{i,j}=1$, then
    \[\min(\Np \cdot a_{i,j},b_{i,j}) = \Np \cdot a_{i,j} = p_{i,j} + x_{i,j} = p_{i,j} + q_{i,j} \cdot x_{i,j}.\]
    This implies
    \begin{align*}
        &\quad \sum_{i \in [n]}\sum_{j \in [n]}\min(\Np \cdot a_{i,j},b_{i,j})\\
        &= \sum_{i \in [n]}\sum_{j \in [n]} (p_{i,j} + q_{i,j} \cdot x_{i,j})\\
        &= \Np n - X + \sum_{i \in [n]}\sum_{j \in [n]} (q_{i,j} \cdot x_{i,j})\\
        &\geq \Np n - X + \frac{X^2}{n^2}.
    \end{align*}
    Here, the last inequality is due to~\cref{lem:average}. 
    We now compute the minimum value for any~${X \leq n^2-n}$.
    \ifarxiv
    To this end, let~${f(x) = \Np n - x + \frac{x^2}{n^2}}$.
    \else
    Let~${f(x) = \Np n - x + \frac{x^2}{n^2}}$.
    \fi
    We compute the derivative~$f'(x) = \frac{2x}{n^2}-1$.
    Since~$f$ is clearly continuous, the minimum is attained at one of the two boundaries ($x=1$ or~$x=n^2-n$) or at a place where~$f'(x) = 0$.
    Note that the latter happens (only) at~$x=\frac{n^2}{2}$.
    For~$x=1$, we get
    \[f(1) = \Np n - 1 + \frac{1}{n^2} > \Np n - 1 \geq \Np n \frac{7n-4}{8n-4}\] for all~$n \geq 2$.
    The last inequality follows from the fact that~$\frac{7n-4}{8n-4} \geq \frac{5}{6}$ for all~$n \geq 2$ and~$\frac{\Np n - 1}{\Np n} \geq \frac{5}{6}$ for all~$n \geq 2$.
    Note that for~$n=1$, the only entry is integer and hence even~$\kappa=1$ yields a yes-instance.
    For~$x=n^2-n = (n-1)n$ (the upper bound for~$X$), we get
    \begin{align*}
    f((n-1)n) &= \Np n - (n-1)n + \frac{((n-1)n)^2}{n^2}\\
    &= \Np n - n^2 + n + (n-1)^2\\
    &= \Np n - n^2 + n + n^2 - 2n + 1 \geq \Np n -1
    \end{align*}
    for all~$n \geq 2$.
    Hence, the argument is the same as in the case~$x=1$.
    
    Finally, for~$x = \frac{n^2}{2}$, we get
    \begin{align*}
        f(\frac{n^2}{2}) &= \Np n - \frac{n^2}{2} + \frac{n^4}{4n^2}\\
        &= \Np n \big(1 - \frac{n^2}{4 \Np n}\big)\\
        &= \Np n \big(\frac{4 \Np n - n^2}{4 \Np n}\big)\\
        &= \Np n \big(\frac{8n^2-4n-n^2}{8n^2-4n}\big)\\
        &= \Np n \big(\frac{7n^2-4n}{8n^2-4n}\big)\\
        &= \Np n \frac{7n-4}{8n-4}.
    \end{align*}
    Hence, $X=\frac{n^2}{2}$ yields the minimum of~$\Np n \frac{7n-4}{8n-4}$.
    Thus, 
    \[\sum_{i \in [n]}\sum_{j \in [n]}\min(\Np \cdot a_{i,j},b_{i,j}) \geq \Np n \frac{7n-4}{8n-4},\] concluding the proof.
\end{proof}

To conclude this section, we show upper bounds for \thr{} and \di.

\ifarxiv
\begin{proposition}
    There exists a doubly stochastic~$2 \times 2$ matrix~$\mathcal{N}$ such that~$(\mathcal{N},\kappa)$ is a no-instance of \thr{} for any~$\kappa > \frac{8}{9}$.
    For each positive integer~$n$, there exists a doubly stochastic~$n \times n$ matrices~$\mathcal{M}_n$ such that~$(\mathcal{M}_n,\kappa)$ is a no-instance of \di{} for any~$\kappa > \frac{7n-4}{8n-4}$ when~$n$ is even and for any~$\kappa > \frac{7n-3}{8n-4}$ when~$n$ is odd.
\end{proposition}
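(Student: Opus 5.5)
For the first statement I would take $\mathcal{N} = \begin{pmatrix} \frac{8}{9} & \frac{1}{9}\\ \frac{1}{9} & \frac{8}{9}\end{pmatrix}$ and argue exactly as in the proof of the $\frac{5}{6}$ bound: a directed $3$-regular graph on two vertices is one of the four graphs in \cref{fig:fourgraphs}. For the graph with $c$ self-loops per vertex ($c \in \{0,1,2,3\}$), I would bound the hosted weak fraction by a capacity-counting argument. Each arc has capacity $\frac{1}{3}$. Demand from $a$ to $b$ or from $b$ to $a$ uses at least one cross arc. Self-demand not served on a loop uses at least two cross arcs. Writing $e=\frac19$, the total hosted amount is then at most
\[2\min\bigl(1-e,\tfrac{c}{3}\bigr) + \min\bigl(2e,\tfrac{2(3-c)}{3}\bigr) + \tfrac12\bigl(\text{leftover cross capacity}\bigr),\]
where the last term is additionally capped by the unserved self-demand. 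Evaluating this for each $c$ gives the following weak throughputs (total divided by $n=2$):
\begin{itemize}
\item $c=3$: $1-e$;
\item $c=2$: $\frac56+\frac e2$;
\item $c=1$: at most $\frac23+\frac e2$;
\item $c=0$: $\frac{1+e}{2}$.
\end{itemize}
With $e=\frac19$ the two largest values coincide at $\frac89$, so no graph exceeds $\frac89$. The one point needing care is to justify that using cross capacity for cross demand before using it for two-hop self-demand is optimal. This holds because a cross demand costs one unit of cross capacity per unit served, while a two-hop self-demand costs two.

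For the second statement I would use \cref{lem:overlap}. It reduces the claim to showing that for every nonnegative integer matrix $(b_{i,j})$ with all row and column sums equal to $2n-1$,
\[\sum_{i,j}\min(x_{i,j},b_{i,j})\le(2n-1)n\kappa^*,\qquad \text{where } x_{i,j}=(2n-1)a_{i,j}.\]
The key identity is the following. Since $\sum b_{i,j}=\sum x_{i,j}=(2n-1)n$, we have
\[\sum\min(x_{i,j},b_{i,j}) = (2n-1)n-\sum(x_{i,j}-b_{i,j})^+ = (2n-1)n-\tfrac12\sum|x_{i,j}-b_{i,j}|.\]
So it suffices to make many entries $x_{i,j}$ have fractional part exactly $\frac12$, because each such entry contributes $|x_{i,j}-b_{i,j}|\ge\frac12$ for any integer $b_{i,j}$. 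I would choose $x_{i,j}=\frac12$ off the diagonal and $x_{i,i}=2n-1-\frac{n-1}{2}=\frac{3n-1}{2}$. Then $\mathcal{M}_n=\frac{1}{2n-1}(x_{i,j})$ is nonnegative and doubly stochastic.

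For even $n$ the diagonal entries $\frac{3n-1}{2}$ are also half-integers, so all $n^2$ entries have fractional part $\frac12$. The loss is then at least $\frac12\cdot\frac{n^2}{2}=\frac{n^2}{4}$, which gives
\[\kappa\le\frac{(2n-1)n-\frac{n^2}{4}}{(2n-1)n}=\frac{7n-4}{8n-4}.\]
For odd $n$ the diagonal entries are integers, so only the $n(n-1)$ off-diagonal entries contribute. The loss is at least $\frac{n(n-1)}{4}$, which gives
\[\kappa\le\frac{(2n-1)n-\frac{n(n-1)}{4}}{(2n-1)n}=\frac{7n-3}{8n-4}.\]
For $n=1$ this bound equals $1$ and the statement is trivial.

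I expect the main obstacle to be the first part. There one has to argue rigorously that multi-hop routing in the $c=2$ (and $c=1$) graph cannot beat the stated counting bound. I would handle this by an explicit arc-usage inequality for paths of length $\ge 2$ rather than by case analysis of flows.
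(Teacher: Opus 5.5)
Your proposal is correct, but it uses different witness matrices from the paper.

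\textbf{First part.} The paper takes $\frac{1}{9}\begin{pmatrix}5&4\\4&5\end{pmatrix}$ and only shows that the graphs with $c=2$ and $c=1$ self-loops per vertex reach $\frac{8}{9}$. The claim that no graph does better is asserted rather than derived. You choose $e=\frac19$ so that the $c=3$ and $c=2$ values $1-e$ and $\frac56+\frac e2$ coincide, and you bound arc usage directly:
\begin{itemize}
\item cross demand needs at least one cross arc;
\item self-demand not routed purely on loops needs at least two cross arcs.
\end{itemize}
Hence the hosted total is at most the loop-served self-demand plus $D_{\mathrm{cross}}+\frac12(C-D_{\mathrm{cross}})$, which is increasing in $D_{\mathrm{cross}}$. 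This is a complete argument over all path systems. The cap by unserved self-demand is not even needed for the upper bound. The same inequality would also certify the paper's matrix, giving $\frac59,\frac89,\frac89,\frac{13}{18}$ for $c=3,2,1,0$.

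\textbf{Second part.} The paper uses a checkerboard of scaled entries $2.5$ and $1.5$ with diagonal corrections. It argues row by row: at most $\sum_j\lfloor x_{i,j}\rfloor$ arcs are fully usable, and every other arc carries at most half its capacity directly. Your matrix has $\frac12$ off the diagonal and $\frac{3n-1}{2}$ on it. Together with the global identity $\sum\min(x_{i,j},b_{i,j})=(2n-1)n-\frac12\sum|x_{i,j}-b_{i,j}|$, it rests on the same mechanism: each half-integral entry costs at least $\frac14$. It yields identical bounds with a simpler matrix, and the parity distinction reduces to whether $\frac{3n-1}{2}$ is half-integral.
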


\begin{proof}
    We start with the upper bound for \thr.
    Consider the matrix~$\mathcal{N} = \frac{1}{9} \cdot \begin{pmatrix}
        5 & 4\\
        4 & 5
    \end{pmatrix}.$
    Note that each row and column sums to exactly~$1$.
    There are only 4 different directed 3-regular graphs with 2 vertices, which are depicted in \cref{fig:fourgraphs} and repeated in \cref{fig:four} for convenience.
    \cref{fig:fourgraphs}.
        \begin{figure}[t]
        \centering
        \begin{tikzpicture}
            \def\x{3}
            \def\y{4}
            \foreach \i in {0,1}{
                \node[circle,draw,inner sep=4pt,label=$a$] at(\x*\i+\y*\i,0) (a\i) {};
                \node[circle,draw,inner sep=4pt,label=$b$] at(\x*\i+\y*\i+\x,0) (b\i) {};
            }
            \foreach \i in {2,3}{
                \node[circle,draw,inner sep=4pt,label=$a$] at(\x*\i+\y*\i-2*\x-2*\y,-3) (a\i) {};
                \node[circle,draw,inner sep=4pt,label=$b$] at(\x*\i+\y*\i-\x-2*\y,-3) (b\i) {};
            }
            \foreach \i in {0,1,2,3}{
                \foreach \j in {0,1,2,3}{
                    \ifthenelse{\j < \i}{
                        \draw[->,bend left=15*\j+10] (a\i) to (b\i);
                        \draw[->,bend left=15*\j+10] (b\i) to (a\i);
                    }{
                        \ifthenelse{\i = \j}{}{
                        \draw[->,in=45*\j+150,out=45*\j+120,loop] (a\i) to (a\i);
                        \draw[->,in=-45*\j+30,out=-45*\j+60,loop] (b\i) to (b\i);
                        }
                    }
                }
            }
        \end{tikzpicture}
        \caption{Repetition of \cref{fig:fourgraphs}. The four different directed $3$-regular graphs with two vertices are shown.}
        \label{fig:four}
    \end{figure}
    When each arc has capacity~$\frac{1}{3}$, the maximum fraction of~$\mathcal{N}$ that can be hosted by any graph is~$\frac{8}{9}$ which is achieved by both the graph in the top right and bottom left in \cref{fig:four} as shown next.
    The graph on the top right can host the matrix~$\frac{1}{9} \begin{pmatrix}
        5 & 3\\
        3 & 5
    \end{pmatrix}
    \leq \mathcal{N}$ using only direct arcs and the graph in the bottom left can host the matrix~$\frac{1}{9} \begin{pmatrix}
        5 & 4\\
        4 & 3
    \end{pmatrix} \leq \mathcal{N}$ where~$\frac{2}{9}$ flow from~$a$ to~$a$ is sent over the path~$((a,b),(b,a))$.

    We conclude with the upper bound for \di{}.
    Consider the matrix
    \[\mathcal{M}_n = \frac{1}{2n-1} \begin{pmatrix}
    2.5-a&1.5&2.5&1.5&\dots\\
    1.5&2.5-b&1.5&2.5&\dots\\
    2.5&1.5&2.5-a&1.5&\dots\\
    1.5&2.5&1.5&2.5-b&\dots\\
    \vdots&\vdots&\vdots&\vdots&\ddots
    \end{pmatrix},\]
    where $a=b=1$ whenever~$n$ is even and~$a=1.5$ and~$b=0.5$ whenever~$n$ is odd.
    We first show that each row and column adds to exactly~$1$.
    Whenever~$n$ is even, then each row sums to~$\frac{1}{2n-1} ((2.5+1.5)\frac{n}{2}-c)$, where~$c \in \{a,b\}$.
    That is, each row and column sums to~$\frac{1}{2n-1}(\frac{4n}{2}-1) = 1$.
    When~$n$ is odd, then each odd row or column sums to
    \[\frac{1}{2n-1}((2.5+1.5)\frac{n-1}{2}+2.5-a) = \frac{1}{2n-1}(2n - 2 + 2.5 - 1.5) = \frac{1}{2n-1}(2n-1) = 1.\]
    Each even row or column sums to
    \[\frac{1}{2n-1}((2.5+1.5)\frac{n-1}{2}+1.5-b) = \frac{1}{2n-1}(2n-2+1.5-0.5) = \frac{1}{2n-1}(2n-1) = 1.\]

    It remains to show that no directed~$(2n-1)$-regular graph can host a~$\kappa$-fraction of~$\mathcal{M}_n$ directly.
    Assume towards a contradiction that some graph~$G$ can host a~$\kappa$-fraction of~$\mathcal{M}_n$ directly.
    We distinguish between even and odd~$n$.
    We start with the even case and assume~$\kappa > \frac{7n-4}{8n-4}$.
    Note that for each row~$i$, at most~$\frac{3n}{2}-1$ arcs of~$G$ can be fully used and all remaining arcs can host at most~$\frac{1}{2(2n-1)}$~flow directly.
    Thus, it holds for the matrix~$\mathcal{M}_n'=(a'_{i,j})_{i,j \in [n]}$ that is hosted by~$G$ and where each entry is at most as large as in~$\mathcal{M}_n$ that row~$i$ sums to at most
    \[\frac{1}{2n-1}(\frac{3n}{2}-1+\frac{2n-1-(\frac{3n}{2}-1)}{2}) = \frac{3n-2+2n-1-\frac{3n}{2}+1}{2(2n-1)} = \frac{7n-4}{8n-4}.\]
    Since there are exactly~$n$ rows, it holds that~$\frac{\sum_{(i,j) \in V \times V} a'_{i,j}}{n} \leq \frac{n\cdot\frac{7n-4}{8n-4}}{n} = \frac{7n-4}{8n-4} < \kappa$.
    Thus, $G$ does not directly host a~$\kappa$-fraction of~$\mathcal{M}_n$, a contradiction.

    We conclude with the case where~$n$ is odd and assume~$\kappa> \frac{7n-3}{8n-4}$.
    At most~$3\cdot \frac{n-1}{2}+1$ arcs can be used fully and all remaining arcs can only be used with half capacity to host flow directly.
    This holds true since in odd rows, the number of fully usable arcs is bounded by~$3\cdot\frac{n-1}{2}+2.5-a = 3\cdot\frac{n-1}{2}+1$ and in even rows, the number is bounded by~$3\cdot\frac{n-1}{2}+1.5-b = 3\cdot\frac{n-1}{2}+1$.
    Then, it holds for the matrix~$\mathcal{M}_n'=(a'_{i,j})_{i,j \in [n]}$ that is hosted by~$G$ and where each entry is at most as large as in~$\mathcal{M}_n$ that any row sums to at most
    \begin{align*}
        &\frac{1}{2n-1}(3\cdot\frac{n-1}{2}+1+\frac{2n-1-(3\cdot\frac{n-1}{2}+1)}{2})\\
        =&\ \frac{3n-3+2+2n-1-\frac{3n-3}{2}-1}{2(2n-1)})\\
        =&\ \frac{6n-6+4+4n-2-3n+3-2}{4(2n-1)}\\
        =&\ \frac{7n-3}{8n-4}.
    \end{align*}
    Since there are~$n$ rows in~$\mathcal{M}_n'$, wee get~${\frac{\sum_{(i,j) \in V \times V} a'_{i,j}}{n} \leq \frac{n\cdot\frac{7n-3}{8n-4}}{n} = \frac{7n-3}{8n-4} < \kappa}$ by the same argument as above.
    Thus, $G$ does not directly host a~$\kappa$-fraction of~$\mathcal{M}_n$, a final contradiction.
    This concludes the proof.
\end{proof}
\else
\begin{proposition}[$\star$]
    There exists a doubly stochastic~$2 \times 2$ matrix~$\mathcal{N}$ such that~$(\mathcal{N},\kappa)$ is a no-instance of \thr{} for any~$\kappa > \frac{8}{9}$.
    For each positive integer~$n$, there exists a doubly stochastic~$n \times n$ matrices~$\mathcal{M}_n$ such that~$(\mathcal{M}_n,\kappa)$ is a no-instance of \di{} for any~$\kappa > \frac{7n-4}{8n-4}$ when~$n$ is even and for any~$\kappa > \frac{7n-3}{8n-4}$ when~$n$ is odd.
\end{proposition}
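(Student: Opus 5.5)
The proposition has two parts, and I would prove each by exhibiting an explicit matrix and then bounding, over all admissible graphs, how much of that matrix can be hosted.

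\textbf{Part 1: the $2 \times 2$ bound for \thr.} I take $\mathcal{N} = \frac{1}{9}\begin{pmatrix}5&4\\4&5\end{pmatrix}$. Since $n=2$, there are only the four directed $3$-regular graphs of \cref{fig:fourgraphs}, indexed by the number $k \in \{0,1,2,3\}$ of arcs from $a$ to $b$. Each arc has capacity $\frac13$, so the total capacity is $2$, and the target is to show that at most $\frac{16}{9}$ of the total demand $2$ can be hosted. I would bound each graph by counting arc usage: a direct unit of flow uses one unit of capacity, while flow on a path of length at least two uses at least two units.
\begin{itemize}
\item For $k=0$, no $a\to b$ demand can be served, so at most $\frac{10}{9}$ is hosted.
\item For $k=1$, the cross capacity is $\frac13$ in each direction. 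At most $\frac13$ can go $a\to b$ and at most $\frac13$ can go $b\to a$. The self demand is at most $\frac59$ per vertex. The total is therefore at most $\frac{16}{9}$.
\item For $k=2$, the self-loop capacity is only $\frac13$ per vertex. Let $s_a, s_b$ be the self demands served. Any self demand beyond $\frac13$ must use a cycle through the other vertex, costing $2$ units of cross capacity. The cross demand served is at most $\frac49$ per direction. Writing the cross-capacity constraint $\frac49 + \frac49 + 2(s_a - \tfrac13)_+ + 2(s_b - \tfrac13)_+ \le \frac43$ gives $(s_a - \tfrac13)_+ + (s_b - \tfrac13)_+ \le \frac29$. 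Hence $s_a + s_b \le \frac89$, and the total is at most $\frac{16}{9}$.
\item For $k=3$, every self demand costs $2$ units. With cross demand $c \le \frac89$ and self demand $s$, the constraint $c + 2s \le 2$ gives a total of at most $c + s \le 1 + \frac c2 \le \frac{13}{9}$.
\end{itemize}
I would make the counting rigorous by summing capacity usage by path length and using linear programming duality.

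\textbf{Part 2: the $n \times n$ bound for \di.} I take the checkerboard-type $\mathcal{M}_n$ with entries $\frac{2.5}{2n-1}$ and $\frac{1.5}{2n-1}$, where the diagonal is adjusted by $a$ and $b$ so that rows and columns sum to $1$. I would verify double stochasticity by summing rows in the two parity cases.

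The key step is a per-row bound. An arc from $i$ to $j$ can be fully used only for the integer part of $(2n-1)a_{i,j}$; the extra $0.5$ parts are worth at most half an arc each. So among the $2n-1$ arcs leaving $i$, at most $F$ are fully usable, where $F$ is the sum of the floors of $(2n-1)a_{i,j}$ plus what the fractional parts can absorb. The remaining arcs can each serve at most $\frac{1}{2(2n-1)}$. This gives a row bound of $\frac{1}{2n-1}\bigl(F + \frac{2n-1-F}{2}\bigr)$.

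The values of $F$ are as follows.
\begin{itemize}
\item For $n$ even, $F = \frac{3n}{2} - 1$, which yields $\frac{7n-4}{8n-4}$.
\item For $n$ odd, $F = 3\cdot\frac{n-1}{2} + 1$ in both row types, which yields $\frac{7n-3}{8n-4}$.
\end{itemize}
Averaging over the $n$ rows then bounds the weak direct throughput.

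The main obstacle is justifying the value of $F$ precisely. An entry $2.5$ gives two fully usable arcs plus a half, so there is a question of whether two half-parts could be combined. They cannot, because a direct arc serves only its own pair. So each arc either serves an integer unit within one entry or at most the fractional $0.5$ of that entry, and the count follows entry by entry.
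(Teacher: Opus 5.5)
Your proposal is correct. For the $n \times n$ part you use exactly the paper's matrix and the paper's per-row argument: at most $F$ arcs leaving a vertex can be used fully, every other arc serves at most half its capacity, and your values of $F$ are the same.

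For the $2 \times 2$ part you use the same matrix $\mathcal{N}$, but your argument is more complete than the paper's. The paper only asserts that $\frac{8}{9}$ is the maximum. It then exhibits the graphs with one and with two arcs in each direction between the two vertices, which host $\frac{1}{9}\begin{pmatrix}5&3\\3&5\end{pmatrix}$ and $\frac19\begin{pmatrix}5&4\\4&3\end{pmatrix}$ respectively. That shows the value $\frac{8}{9}$ is attained, not that it cannot be exceeded. Your capacity-counting case analysis over $k \in \{0,1,2,3\}$ supplies exactly the upper-bound direction that the no-instance claim needs. Conversely, the paper's two examples show that your bound of $\frac{16}{9}$ hosted demand is tight for this matrix.

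One small repair is needed in your case $k=2$. You plug in the full cross demand of $\frac49$ per direction, but a solution might serve less cross demand in order to route more self demand through the other vertex. Keep the served cross demand $c \le \frac89$ as a variable and set $x = (s_a-\frac13)_+ + (s_b-\frac13)_+$. The two directional capacity constraints then sum to $c + 2x \le \frac43$. Hence the total hosted is at most $c + \frac23 + x \le \frac43 + \frac c2 \le \frac{16}{9}$, as desired.
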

\fi

We mention in passing that our bounds for the weak direct throughput are tight for~${n=1}$ and all even~$n$, but there is a very small gap for all odd~$n \geq 3$.
We conjecture that the lower bound can be improved in these cases.

\section{Computational Complexity}
\label{sec:cc}
In this section, we analyze the computational complexity of the four problems we study.
We first show that \dt{} and \di{} are polynomial-time solvable.

\begin{proposition}
    \dt{} can be solved in~$O(n^3)$ time.
\end{proposition}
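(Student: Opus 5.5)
The plan is to reduce the decision version to a simple feasibility check on integer lower bounds; I expect this to run even faster than $O(n^3)$. Let $\mathcal{M}=(a_{i,j})_{i,j\in[n]}$ and $\kappa$ be given. A directed $\Np$-regular graph with adjacency matrix $\mathcal{A}=(x_{i,j})$ directly hosts $\kappa\mathcal{M}$ if and only if the single arcs from $i$ to $j$ can carry the demand from $i$ to $j$. By the capacity normalization in the formal model, this means $x_{i,j}\geq \Np\kappa a_{i,j}$ for all $i,j$. Since $x_{i,j}$ is an integer, this is equivalent to $x_{i,j}\geq \ell_{i,j}:=\lceil \Np\kappa a_{i,j}\rceil$. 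So $(\mathcal{M},\kappa)$ is a yes-instance exactly when there is a non-negative integer matrix with all row and column sums equal to $\N$ that dominates $L=(\ell_{i,j})$ entrywise.

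\textbf{Main claim.} Such a matrix exists if and only if every row sum and every column sum of $L$ is at most $\N$.

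Necessity is immediate. For sufficiency, define the row deficits $r_i=\N-\sum_j \ell_{i,j}\geq 0$ and the column deficits $s_j=\N-\sum_i \ell_{i,j}\geq 0$. Both sum to $n\Np-\sum_{i,j}\ell_{i,j}$, so $\sum_i r_i=\sum_j s_j$. We then need a non-negative integer matrix $D$ with row sums $r_i$ and column sums $s_j$; no upper bounds apply, since parallel arcs and self-loops are allowed. The northwest-corner rule builds $D$ directly. Repeatedly take the first row $i$ with $r_i>0$ and the first column $j$ with $s_j>0$, add $\min(r_i,s_j)$ to $d_{i,j}$, and subtract that amount from both deficits. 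Each step zeroes at least one deficit, so there are at most $2n$ steps. The matrix $L+D$ is then the adjacency matrix of the desired graph.

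\textbf{Algorithm and running time.} Compute all $\ell_{i,j}$, check the $2n$ row and column sums, and, if the check passes, build the graph via the northwest-corner rule. This takes $O(n^2)$ arithmetic operations plus $O(n^2)$ to output the matrix, which is well within the claimed $O(n^3)$.

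\textbf{Main obstacle.} The step needing the most care is the equivalence between direct hosting and the entrywise lower bound $x_{i,j}\geq\ell_{i,j}$, in particular handling the capacity normalization correctly and the case $a_{i,j}=0$, where $\ell_{i,j}=0$ and no arc is required. Computing the ceilings exactly for rational input is a minor point, but it should be stated. If one also wants the optimal throughput value rather than just the decision, I would note that it is attained at a breakpoint where some $\Np\kappa a_{i,j}$ is an integer in $[0,\N]$. One can then binary search over the $O(n^3)$ candidate values $k/(\Np a_{i,j})$; alternatively, the greedy procedure from \cref{prop:directstronglower}, which raises the entry with the largest ratio $c_{i,j}$, can be reused. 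This, too, stays within roughly $O(n^3)$ up to logarithmic factors.
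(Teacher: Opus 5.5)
Your proof is correct, and it takes a different route from the paper's.

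\textbf{The paper's route.} The paper reuses the greedy procedure from \cref{prop:directstronglower}: it repeatedly increments the adjacency-matrix entry with the largest ratio $a_{i,j}/b_{i,j}$ among non-full rows and columns. It argues by an exchange-style argument that this greedy attains the optimal direct throughput, and implements it with a sorted list in $O(n^3)$. The decision question is then answered by comparing this optimum with $\kappa$.

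\textbf{Your route.} You fix $\kappa$ and observe that direct hosting is exactly the entrywise integer lower bound $x_{i,j}\ge\lceil(2n-1)\kappa a_{i,j}\rceil$. Feasibility therefore reduces to checking that every row and column sum of $L$ is at most $2n-1$. The remaining deficits can always be filled by a non-negative integer matrix, because parallel arcs and self-loops are allowed, for instance via the northwest-corner rule.

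\textbf{What each approach buys.} Your version gives an $O(n^2)$ algorithm for the decision problem as stated. Its correctness proof is also shorter and fully rigorous. The paper's optimality argument for the greedy is rather informal, in particular the claim that ``the order in which arcs are added does not matter''. Your characterization further shows that the optimum is simply the minimum of $2n$ independent one-dimensional row and column thresholds. The paper's approach, in turn, directly produces the optimal throughput value and an optimal graph without any search over $\kappa$. With your characterization, getting the optimum needs the extra binary search over the $O(n^3)$ breakpoints, costing a logarithmic factor for sorting, or a fallback to the greedy, as you note.
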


\begin{proof}
    We show that the greedy algorithm that was already used in the proof of \cref{prop:directstronglower} solves \dt{} optimally.
    We first recall the algorithm.
    To this end, let~$\mathcal{M} = (a_{i,j})_{i,j \in [n]}$ be the input matrix.
    The algorithm builds the solution graph~$G$ by describing its adjacency matrix~${\mathcal{A} = (b_{i,j})_{i,j \in [n]}}$.
    We start with~$b_{i,j}=0$ for all~$i,j \in [n]$.
    We then compute~$\frac{a_{i,j}}{b_{i,j}}$ for all~$i,j \in [n]$ and pick an entry~$i^*,j^*$ where this value is maximum and increase~$b_{i^*,j^*}$ by one until the sum in a row or column reaches~$2n-1$ at which point all entries in this row or column are ignored.
    Consider an entry where~$\frac{a_{i,j}}{b_{i,j}}$ is maximum.
    If~$b_{i,j}$ is not increased, then in order to directly host~$\kappa \mathcal{M}$, the throughput is at most~$\kappa \leq \frac{a_{i,j}}{b_{i,j}}$.
    By the choice of the entry to increase, the throughput is at most~$\frac{a_{i,j}}{b_{i,j}}$ and thus $b_{i,j}$ has to be increased or all further arc additions do not improve the throughput.
    Since the order in which arcs are added does not matter, it is always optimally to increase~$b_{i,j}$ as the algorithm does.

    We next analyze the running time.
    Initially, we compute~$n^2$ entries~$\frac{a_{i,j}}{b_{i,j}}$.
    We can sort all of them in~$O(n^2 \log(n^2)) = O(n^2 \log(n))$ time.
    We also keep track of the total score in each row and column.
    We then increase a value~$b_{i,j}$ for a total of~$n (2n-1)$ times.
    Each time, we compute one new entry and insert it into the ordered list of entries, increase the score for one row and one column and if the score reaches~$2n-1$, then we remove all entries from that row or column from the sorted list.
    The time to insert one entry is~$O(\log(n^2))$ using binary search and the time to remove all entries for one row or column is~$O(n^2)$.
    Since we do the former~$(2n-1)n$ times and the latter~$2n$ times in total, the overall running time is in~$O(n^3)$.
\end{proof}

We continue with \di.

\begin{proposition}
    \di{} is solvable in ${O(n^4 \log(n))}$ time.
\end{proposition}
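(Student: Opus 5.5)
By \cref{lem:overlap}, it suffices to compute the maximum of $\sum_{i,j \in [n]} \min(\Np a_{i,j}, b_{i,j})$ over all non-negative integer matrices $\mathcal{N}=(b_{i,j})$ whose rows and columns each sum to $2n-1$. We then compare this maximum with $(2n-1)n\kappa$. I would cast this maximization as a maximum-cost flow problem with concave, separable, piecewise-linear arc costs. Such costs can be linearised by splitting each arc into unit-capacity parallel copies.

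\textbf{Construction.} Build a bipartite network with a source $s$, a sink $t$, a vertex $u_i$ for each row $i$ and a vertex $v_j$ for each column $j$. Add an arc $(s,u_i)$ and an arc $(v_j,t)$ for every $i,j$, each with capacity $2n-1$ and cost $0$. The function $g_{i,j}(b)=\min(\Np a_{i,j},b)$ is concave in $b$, and its increments are
\[
g_{i,j}(k)-g_{i,j}(k-1)=\min\bigl(1,\max(0,\Np a_{i,j}-(k-1))\bigr), \quad k=1,\ldots,2n-1.
\]
These increments are non-increasing in $k$. For each pair $(i,j)$, add $2n-1$ parallel unit-capacity arcs from $u_i$ to $v_j$, and give the $k$-th copy the $k$-th increment as its cost.

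\textbf{Correctness.} Because the cost of each flow is split into unit increments, any integral flow can be rearranged, without decreasing its value, so that the cheaper copies of an arc are only used after the more expensive ones are saturated. Hence the maximum cost of an integral $s$-$t$ flow of value $n(2n-1)$ equals $\max_{\mathcal{N}}\sum_{i,j} g_{i,j}(b_{i,j})$, where $b_{i,j}$ is the number of units sent from $u_i$ to $v_j$. A flow of value $n(2n-1)$ must saturate every source and sink arc, which is exactly the row- and column-sum constraint on $\mathcal{N}$. Such a flow always exists, for example by using one copy of each $u_i$–$v_j$ pair once per unit. Integrality of min-cost flow gives an integral optimum, so the flow value corresponds to an admissible matrix $\mathcal{N}$.

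\textbf{Running time.} Since all increments are $0$ beyond $k>\lceil \Np a_{i,j}\rceil$, copies with cost $0$ can be merged into a single arc. The network therefore has $O(n)$ vertices and $O(n^3)$ arcs in the worst case, though in fact $\sum_{i,j}\lceil \Np a_{i,j}\rceil = O(n^2)$, so only $O(n^2)$ arcs are needed. The main obstacle is hitting the stated bound, which is a matter of choosing the right flow algorithm and accounting carefully. I would use successive shortest paths with Dijkstra and potentials, negating costs to turn maximization into minimization. Its cost is $O(F\cdot (m+ V\log V))$ with $F=n(2n-1)$ augmentations, which gives $O(n^2\cdot n^2)=O(n^4)$. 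With a heap implementation and $m=O(n^2)$ this comes to $O(n^4\log n)$, as claimed. Alternatively, I would cite a strongly polynomial min-cost flow algorithm such as Orlin's, which runs in $O(m\log V(m+V\log V))$ and, with $m=O(n^2)$, gives $O(n^4\log n)$.
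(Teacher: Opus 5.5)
Your proposal is correct and follows the paper's proof. The paper builds the same bipartite max-cost flow network; its three parallel arcs per pair (capacity $\lfloor (2n-1)a_{i,j}\rfloor$ at cost $1$, capacity $1$ at the fractional cost, and a cost-$0$ arc) are exactly your unit copies after merging, and it uses the same successive-shortest-path bound $O(n(2n-1)\cdot n^2\log n)$. One small slip: your feasibility example (each pair used once) has value $n^2$, not $n(2n-1)$, so take $b_{i,i}=2n-1$ instead.
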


\begin{proof}
Let~$(\mathcal{M} = (a_{i,j})_{i,j \in [n]},\kappa)$ be an instance of \di.
By \cref{lem:overlap}, $(\mathcal{M},\kappa)$ is a yes-instance if and only if there exists an~$n \times n$ matrix~${\mathcal{N}= (b_{i,j})_{i,j \in [n]}}$ where each entry~$b_{i,j}$ is a non-negative integer, each row and each column sums to~$2n-1$, and \[\sum_{i,j \in [n]} \min((2n-1) a_{i,j},b_{i,j}) \geq (2n-1) n \kappa.\]
Let~$a'_{i,j} = (2n-1) a_{i,j}$ for all~$i,j \in [n]$.
We show how to find a matrix~$\mathcal{N}$ as described above in polynomial time using maximum-cost flow.

We construct a directed graph with vertex set
\[
V = \{s,t\} \cup \{r_i \mid i \in [n]\} \cup \{c_j \mid j \in [n]\}.
\]
For each $i \in [n]$, we add an arc~$(s,r_i)$ with capacity $2n-1$ and cost~0, and for each $j \in [n]$, add an arc~$(c_j,t)$ with capacity $2n-1$ and cost~0.

For each~$i,j \in [n]$, we add three parallel arcs from $r_i$ to $c_j$ as follows.
\begin{itemize}
    \item An arc with capacity~$\lfloor a'_{i,j} \rfloor$ and cost~1,
    \item an arc with capacity~1 and cost $a'_{i,j}-\lfloor a'_{i,j} \rfloor$, and
    \item an arc with capacity $2n-1$ and cost~0.
\end{itemize}
\ifarxiv
An example of this construction can be seen in \cref{fig:flowexample}.
\begin{figure*}[t]
    \centering
    \begin{tikzpicture}
        \node at(-2,5) {$\mathcal{M} = \begin{pmatrix}
        0.16 & 0.12 & 0.72\\
        0.12 & 0.84 & 0.04\\
        0.72 & 0.04 & 0.24
    \end{pmatrix}$};
    
    \node at(-2,3.5) {$5\mathcal{M} =\begin{pmatrix}
        0.8 & 0.6 & 3.6\\
        0.6 & 4.2 & 0.2\\
        3.6 & 0.2 & 1.2
    \end{pmatrix}$};
    
    \node[circle,draw,inner sep=2pt, label=$s$] at (2,4) (s) {};
    \node[circle,draw,inner sep=2pt, label=$r_1$] at (4,5) (r1) {};
    \node[circle,draw,inner sep=2pt, label=$r_2$] at (4,4) (r2) {};
    \node[circle,draw,inner sep=2pt, label=$r_3$] at (4,3) (r3) {};
    \node[circle,draw,inner sep=2pt, label=$c_1$] at (6,5) (c1) {};
    \node[circle,draw,inner sep=2pt, label=$c_2$] at (6,4) (c2) {};
    \node[circle,draw,inner sep=2pt, label=$c_3$] at (6,3) (c3) {};
    \node[circle,draw,inner sep=2pt, label=$t$] at (8,4) (t) {};
    \foreach \i in {1,2,3}{
        \draw[dotted,->] (s) to (r\i);
        \draw[dotted,->] (c\i) to (t);
        \foreach \j in {1,2,3}{
            \draw[dotted,->] (r\i) to (c\j);
        }
    }
    \draw[->,blue,bend right=10] (r1) to (c3);
    \draw[->,blue,bend right=10] (r3) to (c1);
    \draw[->,red,bend right=10] (r2) to (c2);
    \draw[dashed,bend left=10,->,blue] (r1) to (c2);
    \draw[dashed,bend left=10,->,blue] (r1) to (c3);
    \draw[dashed,bend left=10,->,blue] (r2) to (c1);
    \draw[dashed,bend left=10,->,blue] (r3) to (c1);
    \draw[dashed,bend left=10,->,red] (r2) to (c2);
    \draw[dashed,bend left=10,->,red] (r2) to (c3);
    \draw[dashed,bend left=10,->,red] (r3) to (c2);
    \draw[dashed,bend left=10,->,red] (r3) to (c3);
    \draw[dashed,bend left=10,->,green] (r1) to (c1);

    \node[circle,draw,inner sep=2pt, label=$s$] at (-4,1) (s) {};
    \node[circle,draw,inner sep=2pt, label=$r_1$] at (-2,2) (r1) {};
    \node[circle,draw,inner sep=2pt, label=$r_2$] at (-2,1) (r2) {};
    \node[circle,draw,inner sep=2pt, label=$r_3$] at (-2,0) (r3) {};
    \node[circle,draw,inner sep=2pt, label=$c_1$] at (0,2) (c1) {};
    \node[circle,draw,inner sep=2pt, label=$c_2$] at (0,1) (c2) {};
    \node[circle,draw,inner sep=2pt, label=$c_3$] at (0,0) (c3) {};
    \node[circle,draw,inner sep=2pt, label=$t$] at (2,1) (t) {};
    
    \draw[->] (s) to node[circle,fill=white,midway,inner sep=2pt] {\tiny $5$} (r1);
    \draw[->] (s) to node[circle,inner sep=2pt,fill=white,midway] {\tiny $5$} (r2);
    \draw[->] (s) to node[circle,inner sep=2pt,fill=white,midway] {\tiny $5$} (r3);
    \draw[->] (c1) to node[circle,inner sep=2pt,fill=white,midway] {\tiny $5$} (t);
    \draw[->] (c2) to node[circle,inner sep=2pt,fill=white,midway] {\tiny $5$} (t);
    \draw[->] (c3) to node[circle,inner sep=2pt,fill=white,midway] {\tiny $5$} (t);
    \draw[->] (r1) to node[circle,inner sep=2pt,fill=white,near start] {\tiny $1$} (c2);
    \draw[->] (r1) to node[circle,inner sep=2pt,fill=white,near end] {\tiny $4$} (c3);
    \draw[->] (r2) to node[circle,inner sep=2pt,fill=white,near end] {\tiny $1$} (c1);
    \draw[->] (r2) to node[circle,inner sep=2pt,fill=white,near end] {\tiny $4$} (c2);
    \draw[->] (r3) to node[circle,inner sep=2pt,fill=white,midway] {\tiny $1$} (c3);
    \draw[->] (r3) to node[circle,inner sep=2pt,fill=white,near start] {\tiny $4$} (c1);

    \node[circle,inner sep=2pt,draw,label=$v_1$] at(5.5,1) (v1) {};
    \node[circle,inner sep=2pt,draw,label=left:$v_2$] at(4,1) (v2) {};
    \node[circle,inner sep=2pt,draw,label=right:$v_3$] at(7,1) (v3) {};
    \draw[->,bend left=25] (v1) to node[circle,inner sep=2pt,fill=white,midway] {\tiny $\frac{1}{5}$} (v2);
    \draw[->,bend left=25] (v1) to node[circle,inner sep=2pt,fill=white,midway] {\tiny $\frac{4}{5}$} (v3);
    \draw[->,bend left=25] (v2) to node[circle,inner sep=2pt,fill=white,midway] {\tiny $\frac{1}{5}$} (v1);
    \draw[->,bend left=25] (v3) to node[circle,inner sep=2pt,fill=white,midway] {\tiny $\frac{4}{5}$} (v1);
    
    \draw[->,loop above] (v3) to (v3);
    \draw[->,loop above] (v2) to (v2);
    \node at(4,1.7) {\tiny $\frac{4}{5}$};
    \node at(7,1.7) {\tiny $\frac{1}{5}$};
    \end{tikzpicture}
    \caption{The top left shows an input demand matrix~$\mathcal{M}$.
    Below is the matrix~${(2n-1)\mathcal{M}}$. The top right shows the constructed instance of maximum-cost flow, where dotted arcs have capacity~$2n-1=5$ and cost~$0$, dashed arcs have capacity one and costs depending on their color: green for~$0.8$, blue for costs~$0.6$, and red for~$0.2$.
    The fully drawn arcs have cost~$1$ and capacity based on their color: blue for capacity~$3$ and red for capacity~$4$.
    The bottom left shows an optimal solution for the constructed instance corresponding to the solution graph depicted in the bottom left (with the shown capacities and each time the highest cost arc(s) are used). This graph can directly host the matrix 
    $\begin{pmatrix}
        0 & 0.12 & 0.72\\
        0.12 & 0.8 & 0\\
        0.72 & 0 & 0.2
    \end{pmatrix}$.
    }
    \label{fig:flowexample}
\end{figure*}
\fi
We compute a maximum-cost flow of value~$n(2n-1)$ and check whether the cost is at least~$(2n-1) n \kappa$.

We next show that the construction is correct.
To this end, first assume that there exists a matrix~$\mathcal{N}=(b_{i,j})_{i,j \in [n]}$ such that each~$b_{i,j}$ is a non-negative integer, each row and column in~$\mathcal{N}$ sums to~${2n-1}$, and~${\sum_{i,j \in [n]} \min(a'_{i,j},b_{i,j}) \geq (2n-1) n \kappa}$.
For each~${i \in [n]}$, we send~${2n-1}$ units of flow over the arcs~$(s,r_i)$ and~$(c_i,t)$.
These have cost~$0$.
Moreover for each~$i,j \in [n]$, we send~$b_{i,j}$ units of flow from~$r_i$ to~$c_j$.
If~$b_{i,j} \leq a'_{i,j}$, then these have cost~$b_{i,j}$.
If~$b_{i,j} > a'_{i,j}$, then these have total cost~$a'_{i,j}$ by construction.
By definition of~$\mathcal{N}$, there are exactly~$2n-1$~units of flow leaving each~$r_i$ and~${2n-1}$~units of flow entering~$c_j$.
Thus, there is an~$n(2n-1)$-flow from~$s$ to~$t$ of cost~$\sum_{i,j \in [n]} \min(a'_{i,j},b_{i,j})$.

In the other direction, assume that an~$n(2n-1)$-flow from~$s$ to~$t$ of cost~${x \geq (2n-1)n\kappa}$ exists.
Since all capacities and~$n(2n-1)$ are integers, we may assume that the flow over any arc is integral.
Now, we build~$\mathcal{N}=(b_{i,j})_{i,j \in [n]}$ where each~$b_{i,j}$ is an integer, each row and each column sums to~$2n-1$, and~${\sum_{i,j \in [n]} \min((2n-1) a_{i,j},b_{i,j}) = x}$
(and recall that~${x \geq (2n-1)n\kappa}$).
Let~$b_{i,j}$ be the amount of flow send over arcs of the form~$(r_i,c_j)$.
By construction, the total amount of flow leaving any vertex~$r_i$ and the total amount of flow entering~$c_j$ is~$2n-1$.
Hence, each row and each column in~$\mathcal{N}$ sums to~$2n-1$.
Finally for a fixed pair $(i,j)$, the cost incurred by sending~$b_{i,j}$~units of flow from $r_i$ to $c_j$ is exactly~$\min(a'_{i,j}, b_{i,j})$~as shown next.
The first~$\lfloor a'_{i,j} \rfloor$~units contribute cost~1 each, the next unit contributes~${a'_{i,j}-\lfloor a'_{i,j} \rfloor}$, and all remaining units contribute cost~0.
Thus, the total cost of the flow equals $\sum_{i,j \in [n]} \min(a'_{i,j}, b_{i,j})$.
Since the cost of the flow is~$x \geq (2n-1)n\kappa$ by definition, we have \[\sum_{i,j \in [n]} \min(a'_{i,j}, b_{i,j}) = x \geq (2n-1)n\kappa,\] concluding the proof of correctness.

Finally, we analyze the running time.
The constructed graph has~${N=2n+2}$~vertices and~${M \in \Theta(n^2)}$ arcs.
It can be built in time~${O(N+M) = O(n^2)}$.
A maximum-cost~${n(2n-1)}$-flow can be computed in~$O(n(2n-1)M\log(N)) = O(n^4 \log(n))$ time~\cite{DBLP:books/networkflows}, completing the proof.
\end{proof}

To conclude this section, we next show that \thr{} is NP-hard.
\ifarxiv
\begin{theorem}
    \label{thm:NPhard}
    \thr{} is NP-hard.
\end{theorem}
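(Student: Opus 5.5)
We reduce from \textsc{Exact Cover by 3-Sets} (X3C), following the outline sketched in the introduction. Let the X3C instance consist of a universe $U=\{u_1,\dots,u_N\}$ and a family $\mathcal{S}=\{S_1,\dots,S_m\}$ of 3-element subsets. The question is whether $N/3$ sets partition $U$.

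The demand matrix $\mathcal{M}$ will have one vertex per element together with a gadget of a constant number of vertices per set $S_k$. After normalization by $2n-1$, the demand between gadget vertices will be designed so that, locally, exactly two arc configurations are optimal for weak throughput. Call them the \emph{passive} configuration and the \emph{active} configuration. Both serve the same amount of intra-gadget demand. The active configuration, however, uses up a resource of which only $N/3$ units exist globally. In exchange, it frees a small amount $\gamma$ of extra capacity, through arcs or length-two paths, that lets each of the three element vertices of $S_k$ serve a tiny additional demand. I would implement the scarce resource as a shared ``hub'' vertex whose degree budget of $2n-1$ arcs admits at most $N/3$ active gadgets. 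Demands must be chosen so that $\mathcal{M}$ is doubly stochastic. Filling each row up to $1$ with self-demand entries has the right effect: self-demand is cheap to serve on self-loops, so these padding entries do not distort the incentives.

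Given these gadgets, the proof has three steps. First, compute the baseline value $B$: the weak throughput obtained when every gadget is passive and the element vertices serve their basic demand. Second, set $\kappa=(B+N\gamma)/n$. Third, prove both directions of correctness. The forward direction is routine: given an exact cover, activate exactly the chosen $N/3$ gadgets, so every element vertex gains exactly $\gamma$, and write down the flow explicitly. For the backward direction, I would argue that every gadget must be in one of its two optimal local states, since otherwise it loses more than $N\gamma$. Then at most $N/3$ gadgets are active, by the hub budget. An element vertex can gain at most $\gamma$, and only when some gadget containing it is active. Reaching $\kappa$ requires all $N$ element vertices to gain $\gamma$, and with at most $N/3$ active gadgets each covering 3 elements, every element is covered exactly once. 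This gives an exact cover.

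The main obstacle is the backward direction. The difficulty is that multi-hop routing lets flow borrow capacity across gadgets. Bounding the loss of any non-canonical configuration therefore requires a counting argument on spare arc capacity: total demand served is at most the total arc capacity minus the capacity wasted by multi-hop paths, since each extra hop consumes capacity. My first attempt would be to make the demand entries large integers divided by $2n-1$, so that any fractional or misrouted configuration wastes at least a constant amount of capacity, while choosing $\gamma$ polynomially small, say $\gamma<1/(10N(2n-1))$. The loss from deviating then dominates any gain from deviation.
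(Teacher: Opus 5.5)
Your plan follows the strategy sketched in the paper's introduction, which is also what the proof does: X3C, one vertex $y_j$ per element, a gadget $u_i,v_i$ per set with a passive and an active configuration, a hub $s$ with exactly $K=N/3$ free out-arcs, and $\kappa$ set to a baseline plus a per-element bonus. But it is a template rather than a proof. No demand matrix is given, and the backward direction, which you rightly call the main obstacle, is asserted rather than argued. Several of the concrete claims you do make also fail as stated.

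\begin{itemize}
\item Diagonal padding alone cannot make the matrix doubly stochastic. A self-demand entry raises row $v$ and column $v$ by the same amount, so this only works if every vertex has equal off-diagonal in- and out-demand, and nothing in your design guarantees that. The paper needs return demands ($d_{t,s}=K$, $d_{v_i,u_i}=\frac{1}{2}$) and auxiliary sets $A$, $B$, $Z$, $W_i$ for this. Each of these demands opens new arcs and routes that the backward direction must then rule out.
\item The hub has $2n-1$ out-arcs, so all but $K$ of them must be pinned down. The paper does this with the integer self-demand $(2n-1)-|A|-K$ of $s$, together with forced arcs carrying $\frac{9}{10}$ to each of the $|A|=10(M-K)$ vertices of $A$.
\item Fractional self-demand is not incentive-neutral. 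The passive configuration in the paper is precisely the one that serves the leftover $\frac{1}{2}$ self-demand of $u_i$ along the $2$-cycle $u_i\to v_i\to u_i$.
\end{itemize}

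The missing idea is a global upper bound on the demand that any $(2n-1)$-regular graph can serve. Under multi-hop weak throughput a gadget can borrow spare capacity anywhere, so its loss cannot be bounded in isolation. Your remedy of integer demands and a polynomially small $\gamma$ gives no such bound. Integer scaled demands create no choices, since placing exactly those arcs serves everything. The fractional demands that do create choices leave spare capacity on arcs, and nothing in your plan stops an element's extra demand from using that spare capacity when no gadget is active.

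The paper's mechanism works as follows.
\begin{itemize}
\item Scaled by $2n-1$, every demand has fractional part in $(0,\frac{1}{2}]$ or $[\frac{3}{4},1)$.
\item The contribution $\mathrm{cont}(e)=\sum_{P\ni e} d_P/\mathrm{len}(P)$ sums to the total served demand. An arc hosting $x\le\frac{1}{2}$ directly contributes at most $\frac{1+x}{2}\le\frac{3}{4}$.
\item Hence the served demand is at most $H+\frac{3L}{4}$ plus whatever spare capacity on heavy arcs can add. Here $H$ is what the forced heavy arcs host and $L=2M+K$ counts the remaining arcs.
\item The demand graph is built so that this spare capacity can only carry the $\frac{1}{6}$ demand from $y_j$ to $t$, along $y_j\to v_i\to t$. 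That path needs the non-heavy arc $(v_i,t)$, which exists exactly for the $K$ sets whose $u_i$ receives a free arc from $s$.
\end{itemize}
The paper sets $\kappa n(2n-1)=H+\frac{3L}{4}+\frac{N}{12}$. This is a constant bonus of $\frac{1}{12}$ per element in arc-capacity units, not a polynomially small one. It forces each chosen $v_i$ to receive $\frac{1}{6}$ from each of its three elements through the $\frac{1}{6}$ spare capacity of the heavy arcs $(y_j,v_i)$. Since each $y_j$ has only $\frac{1}{6}$ to send to $t$, the chosen sets are disjoint and therefore form an exact cover. Without an explicit gadget and an analogue of this threshold-plus-contribution accounting, your backward direction is not established.
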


\begin{proof}
    We present a reduction from the NP-hard problem \textsc{Exact Cover by 3-Sets}~\cite{GJ79}.
    Therein, one is given a universe~$U = \{x_1,x_2,\ldots,x_N\}$ and a family~${\mathcal{F} = \{S_1,S_2,\ldots,S_M\}}$ of subsets of~$U$.
    Each set~$S_j \subseteq U$ has size exactly three and the question is whether there is a subfamily~$\mathcal{F}' \subseteq \mathcal{F}$ such that each element~$x_i \in U$ is contained in exactly one set~$S_j \in \mathcal{F}'$.
    Note that we can assume without loss of generality that~$N = 3K$ for an integer~$K$ and the size of~$\mathcal{F}'$ is precisely~$K$ whenever a solution exists.
    Thus, we can also assume that~$M \geq K$.
    For each element~$x_j$, we define~$\alpha_j$ to be the number of sets~$S_i$ that contain~$x_j$.
    
    We construct the demand matrix of a network where the number of vertices is
    \[n= 2+2N + M + 10 (M-K) + 10 \Big(\frac{M}{2} + \frac{N}{6} - K\Big) + 15 M - 10 K + 5M.\]
    Let~$V= \{s,t\} \cup \{u_i,v_i\mid i \in [M]\} \cup \{y_i \mid i \in [N]\} \cup A \cup B \cup Z \cup \bigcup_{i \in [M]} W_i$ be the set of vertices, where
    \begin{align*}
        A&=\{a_i \mid i \in [10 (M-K)]\},\\
        B&=\{b_i \mid i \in [10 \Big(\frac{M}{2} + \frac{N}{6} - K\Big)]\},\\
        Z&= \{z_i \mid i \in [15M - 10K]\},\text{ and}\\
        W_i&=\{w_j^i \mid j \in [5]\}.
    \end{align*}
    We next state all demands, which are also depicted graphically in~\cref{fig:reduction}.
    \begin{figure}[t]
        \centering
        \begin{tikzpicture}[scale=0.97]
            \node[circle,draw,inner sep=3pt] (s) at (-4.5,0) {$s$};
            \node[circle,draw,inner sep=3pt] (t) at (4.5,0) {$t$};
            \node[circle,draw,inner sep=3pt] (ui) at (-1.5,0) {$u_i$};
            \node[circle,draw,inner sep=3pt] (vi) at (1.5,0) {$v_i$};
            \node[circle,draw,inner sep=3pt] (yj) at (3,1.75) {$y_j$};
            \node[ellipse,draw,minimum width=8cm,minimum height=1cm] (Z) at (0,4) {$Z$};
            \node[ellipse,draw,minimum width=2cm,minimum height=1cm] (W) at (-1,1.75) {$W_i$};
            \node[ellipse,draw,minimum width=1cm,minimum height=2cm] (A) at (-6.5,0) {$A$};
            \node[ellipse,draw,minimum width=1cm,minimum height=2cm] (B) at (6.5,0) {$B$};

            \draw[->,loop below] (s) to node[below]{$\ns-|A|-K$} (s);
            \draw[->,loop below] (t) to node[below]{$\ns-|B|-K$} (t);
            \draw[->,loop above] (A) to node[above]{$\ns-1$} (A);
            \draw[->,loop above] (B) to node[above]{$\ns-1$} (B);
            \draw[->,loop left] (W) to node[left]{$\ns-1$} (W);
            \draw[->,loop above] (Z) to node[above]{$\ns-\frac{1}{10}$} (Z);
            \draw[->,loop below] (ui) to node[xshift=-25pt,yshift=15pt]{$\ns-\frac{11}{2}$} (ui);
            \draw[->,loop below] (vi) to node[xshift=25pt,yshift=15pt]{$\ns-4$} (vi);
            \draw[->,loop right] (yj) to node[xshift=-25pt,yshift=15pt]{$\ns-\alpha_j$} (yj);
            \draw[->,bend right=15] (Z.south west) to node[above left]{$\frac{1}{10|Z|}$} (A);
            \draw[->,bend right=15] (B) to node[above right]{$\frac{1}{10|Z|}$} (Z.south east);
            \draw[->,bend left=40] (t) to node[above]{$K$} (s);
            \draw[->] (s) to node[above]{$\frac{1}{2}$} (ui);
            \draw[->,bend right=30] (s) to node[below]{$\frac{1}{2}$} (vi);
            \draw[->,bend right=5] (ui) to node[below]{$\frac{1}{2}$} (vi);
            \draw[->,bend right=5] (vi) to node[above]{$\frac{1}{2}$} (ui);
            \draw[->] (vi) to node[above]{$\frac{1}{2}$} (t);
            \draw[->,bend left=10] (ui) to node[above left]{$1$} (W);
            \draw[->,bend left=10] (W) to node[xshift=15pt,yshift=5pt]{$\frac{9}{10}$} (ui);
            \draw[->,bend left=5] (W) to node[left]{$\frac{1}{10|Z|}$} (Z);
            \draw[->,bend left=10] (Z) to node[right]{$\frac{1}{2|Z|}$} (vi);
            \draw[->,bend right=10] (s) to node[above]{$\frac{9}{10}$} (A);
            \draw[->,bend right=10] (A) to node[below]{$1$} (s);
            \draw[->,bend right=10] (t) to node[below]{$1$} (B);
            \draw[->,bend right=10] (B) to node[above]{$\frac{9}{10}$} (t);
            \draw[->] (yj) to node[right]{$\frac{1}{6}$} (t);
            \draw[->,bend right=10] (yj) to node[right]{$\frac{\alpha_j-1}{6|Z|}$} (Z);
            \draw[->,bend right=10,dashed] (vi) to node[xshift=10pt]{$1$} (yj);
            \draw[->,bend right=10,dashed] (yj) to node[above left]{$\frac{5}{6}$} (vi);
        \end{tikzpicture}
        \caption{A graphical representation of the demand matrix constructed in the proof of \cref{thm:NPhard} (where all demands are multiplied with~$\N$). The large nodes~$A,B,W_i$, and~$Z$ represent sets of vertices and an arc to or from such a set represents one such link to each vertex in the set (and an arc between two such sets represents all pairwise arcs, except for self-loops which are only present from each vertex in the set to itself). For the sake of minimizing visual clutter, only the vertices corresponding to a single element~$x_j$~($y_j$) and a single set~$S_i$ ($u_i$, $v_i$, and~$W_i$) are shown. The dashed arcs between~$y_j$ and~$v_i$ only exist if~$x_j \in S_i$ and there are no demands between vertices corresponding to two different elements or to two different sets.}
        \label{fig:reduction}
    \end{figure}%
    To this end, let~$\ns = 2n-1$ and we present all demands multiplied by~$\ns$ for notational convenience.
    The scaled demands~$d_{u,v}$ from a vertex~$u$ to a vertex~$v$ are defined as follows and for the sake of readability, we will only define the demands that are larger than~$0$ and implicitly assume that all demands not mentioned in the following are~$0$.
    We start with demand originating from vertex~$s$ and set~$d_{s,s} = \ns - |A| - K$, $d_{s,u_i} = d_{s,v_i} = \frac{1}{2}$ for all~$i \in [M]$, and~$d_{s,a_i}=\frac{9}{10}$ for all~$i \in [|A|]$.
    The demand originating from~$t$ is~$d_{t,t}=\ns - |B| - K$, $d_{t,s} = K$, and~$d_{t,b_i} = 1$ for all~$i \in [|B|]$.
    The demand originating from a vertex~$u_i$ is defined as~$d_{u_i,u_i} = \ns - \frac{11}{2}$, $d_{u_i,v_i} = \frac{1}{2}$, and~$d_{u_i,w^i_j} = 1$ for all~$j \in [n]$.
   The demand originating from a vertex~$v_i$ is~$d_{v_i,t} = d_{v_i,u_i} = \frac{1}{2}$, $d_{v_i,v_i} = \ns - 4$, and~$d_{v_i,y_j} = \frac{5}{6}$ for all~$i \in [M]$ and all~$j$ such that~$x_j \in S_i$.
    We next define the demand originating from each vertex~$y_j$ and set~$d_{y_j,t}=\frac{1}{6}$, $d_{y_j,v_i}=\frac{5}{6}$ whenever~$x_j \in S_i$, $d_{y_j,y_j} = \ns - \alpha_j$, and~$d_{y_j,z_i} = \frac{\alpha_j-1}{6|Z|}$ for all~$i \in [|Z|]$.
    For each vertex~$w^i_j$ with~$i \in [M]$ and~$j \in [5]$, we set~$d_{w^i_j,w^i_j} = \ns - 1$, $d_{w^i_j,u_i} = \frac{9}{10}$, and~$d_{w^i_j,z_\ell} = \frac{1}{10|Z|}$ for each~$\ell \in [|Z|]$.

    It remains to define the demands originating from vertices in~$A \cup B \cup Z$.
    For each~${i \in [|A|]}$, we set~$d_{a_i,a_i} = \ns - 1$ and~$d_{a_i,s}= 1$.
    For each~$i \in [B]$, we define~$d_{b_i,b_i} = \ns - 1$, $d_{b_i,t}=\frac{9}{10}$, and~$d_{b_i,z_j} = \frac{1}{10|Z|}$ for each~$j \in [|Z|]$.
    Finally, for each~$i \in [|Z|]$, we set~$d_{z_i,z_i} = \ns -\frac{1}{10}$, $d_{z_i,a_j} = \frac{1}{10|Z|}$ for each~$j \in |A|$, and~$d_{z_i,v_j} = \frac{1}{2|Z|}$ for each~$j \in [M]$.
    To conclude the construction, we define a couple of functions and values.
    For a number~$x$, let
    \[f(x) = \begin{cases} x & \text{ if } x - \lfloor x \rfloor \geq \frac{3}{4}\\ \lfloor x \rfloor & \text{ otherwise}\end{cases} \text{ and } g(x) = \begin{cases} \lceil x \rceil & \text{ if } x - \lfloor x \rfloor \geq \frac{3}{4}\\ \lfloor x \rfloor & \text{ otherwise.}\end{cases}\]
    Let~$H = \sum_{u \in V} \sum_{v \in V} f(d_{u,v})$ and~$L = n \ns - \sum_{u \in V}\sum_{v \in V}g(d_{u,v})$.
    We set~$\kappa = \frac{H + \frac{3L}{4} + \frac{N}{12}}{n \ns}$.
    This concludes the construction.

    We next show that the demand originating and terminating in any vertex is~$1$, that is, the scaled demand is~$\ns$.
    Afterwards, we show that there exists a directed~$\ns$-regular graph that can host a~$\kappa$-fraction of~$\frac{1}{2n-1}\mathcal{M}$ (where~$\mathcal{M} = (\frac{d_{i,j}}{\ns})_{i,j \in [n]}$) if and only if there exists an exact cover for the original instance.
    Since the construction can clearly be computed in polynomial time and no demand is negative, this concludes the proof.

    It is easy to verify that the scaled demand originating from any vertex other than~$s, y_j$, or~$z_i$ for~$i \in [|Z|]$ and~$j \in [N]$ is~$\ns$.
    We will skip the formal proof for the sake of conciseness and focus on the slightly non-trivial parts.
    In a similar manner, all scaled demands terminating in any vertex other than~$t$, $v_i$, and~$z_j$ for~$i \in [M]$ and~$j \in [|Z|]$ are trivially~$\ns$ and we focus on the listed vertices.
    The total scaled demand originating in~$s$ is
    \[(\ns - |A| - K) + M\cdot \frac{1}{2} + M\cdot \frac{1}{2} + |A| \cdot \frac{9}{10} = \ns - \frac{|A|}{10} + M - K = \ns - (M - K) + M - K = \ns.\]
    The total scaled demand originating in a vertex~$y_j$ is
    \[\frac{1}{6} + \alpha_j \cdot \frac{5}{6} + (\ns - \alpha_j) + |Z| \cdot \frac{\alpha_j - 1}{6|Z|} = \ns + \frac{1 + 5 \alpha_j - 6 \alpha_j + \alpha_j - 1}{6} = \ns.\]
    The total scaled demand originating from a vertex~$z_i$ is \[(\ns - \frac{1}{10}) + |A| \cdot \frac{1}{10|Z|} + M \cdot \frac{1}{2|Z|} = \ns - \frac{1}{10} + \frac{|A|+5M}{10|Z|} = \ns - \frac{1}{10} + \frac{10M - 10K + 5M}{10(15M - 10K)} = \ns.\]
    The total scaled demand terminating in~$t$ is
    \begin{align*}
        (\ns - |B| - K) + M \cdot \frac{1}{2} + |B| \cdot \frac{9}{10} + N \cdot \frac{1}{6} &= \ns - \frac{|B|}{10} - K + \frac{M}{2} + \frac{N}{6}\\
        &= \ns - (\frac{M}{2} + \frac{N}{6} - K) - K + \frac{M}{2} + \frac{N}{6} = \ns.
    \end{align*}
    We next analyze the demand terminating in a vertex~$v_i$.
    Note that for each set~$S_i$, there are exactly three elements~$x_j$ such that~$x_j \in S_i$.
    Hence, the total scaled demand terminating in a vertex~$v_i$ is
    \[\frac{1}{2} + \frac{1}{2} + (\ns - 4) + |Z| \cdot \frac{1}{2|Z|} + 3 \cdot \frac{5}{6} = \ns - 4 + 1 + \frac{1}{2} + \frac{15}{6} = \ns.\]
    Finally, the total scaled demand terminating in a vertex~$z_j$ is
    \begin{align*}
    &\ \ \ (\ns - \frac{1}{10}) + \Big(\sum_{i \in [M]} 5 \cdot \frac{1}{10|Z|}\Big) + \Big(\sum_{j \in [N]} \frac{\alpha_j-1}{6|Z|}\Big) + |B| \cdot \frac{1}{10|Z|}\\
    &= \ns - \frac{1}{10} + \frac{5M+\frac{10}{6}(3M-N)+|B|}{10|Z|}\\
    &= \ns - \frac{1}{10} + \frac{5M + 5M - \frac{10N}{6} + 10(\frac{M}{2}+\frac{N}{6} - K)}{10 (15M - 10K)}\\
    &= \ns - \frac{1}{10} + \frac{15M - \frac{10N}{6} + \frac{10N}{6} - 10K}{10 (15M - 10K)} = \ns - \frac{1}{10} + \frac{1}{10} = \ns.
    \end{align*}
    For the first equality, note that~$\sum_{j \in [N]} \alpha_j = 3M$ as each set contains exactly three elements.
    Thus, the instance we constructed is valid and it remains to prove that it is correct.

    Before we formally prove the correctness of our construction, we first give a high-level intuition.
    Given a solution, that is, a directed~$\ns$-regular graph and a flow~$\mathcal{P}$, we will define the \emph{contribution} of an arc~$e$ as~$\sum_{(P_i,d_i) \in \mathcal{P}} \frac{d_i}{\len(P)} \cdot \mathbf{1}_{e \in P_i}$ and show that a solution has to contain all arcs that can contribute more than~$\frac{3}{4}$ and all arcs have to contribute at least~$\frac{3}{4}$.
    This contains all arcs that can directly host a flow of at least~$\frac{3}{4}$ and by construction, there are no pairs~$(u,v)$ where the demand from~$u$ to~$v$ is in the interval~$(\frac{1}{2},\frac{3}{4}]$.
    Thus, all remaining arcs can directly host a demand of at most~$\frac{1}{2}$ and the remaining capacity of~$\frac{1}{2}$ has all be used to carry demand over paths of length exactly two.
    For the demand between~$u_i$ and~$v_i$, there are two possible ways and the vertices~$s$ and~$t$ will ensure that we can use one of the options exactly~$K$ times and the corresponding elements will have to be a cover in the original instance.

    We now prove the correctness formally and start with the simple direction, that is, we assume that there is an exact cover~$\mathcal{F}'$ for the original instance of \textsc{Exact Cover by 3-Sets}.
    For each pair~$(u,v)$ of vertices, we add~$g(d_{u,v})$ arcs from~$u$ to~$v$ and whenever~$g(d_{u,v}) \geq 1$, we add the pair~$(((u,v)),f(d_{u,v}))$ to the solution~$\mathcal{P}$.
    That is, we host the integer parts of all demands directly and for all remaining demands, if they are at least~$\frac{3}{4}$, then we also host them directly.
    We call the arcs added so far \emph{heavy arcs}.
    Note that the demands hosted directly by heavy arcs is~$H = \sum_{u \in V} \sum_{v \in V} f(d_{u,v})$.
    
    We have so far added~$\ns$ outgoing arcs for each vertex in \[\{t\} \cup \{y_i \mid i \in [N]\} \cup A \cup B \cup Z \cup \bigcup_{i \in [M]} W_i.\]
    Moreover, we have added~$\ns-1$ outgoing arcs for each vertex in~$\{u_i,v_i \mid i \in [N]\}$ and~$\ns-K$ outgoing arcs for~$s$.
    For each set~$S_i$ such that~$S_i \in \mathcal{F}'$, we add the arcs~$(s,u_i), (u_i,v_i)$, and~$(v_i,t)$ to the solution graph and the pairs~$(((s,u_i)),\frac{1}{2}), (((s,u_i),(u_i,v_i)),\frac{1}{2}), (((u_i,v_i)),\frac{1}{2})$, and~$(((v_i,t)),\frac{1}{2})$ to the solution flow~$\mathcal{P}$.
    For each set~$S_i$ such that~$S_i \notin \mathcal{F}'$, we add the arcs~$(u_i,v_i)$ and~$(v_i,s_i)$ to the solution graph and the pairs~$(((u_i,v_i)),\frac{1}{2}),(((u_i,v_i)),\frac{1}{2})$, and~$(((u_i,v_i),(v_i,u_i)),\frac{1}{2})$ to~$\mathcal{P}$.
    Note that for each vertex~$y_j$, we added a heavy arc to~$v_i$ for each~$i$ such that~$x_j \in S_i$ and sent a flow of~$\frac{5}{6}$ over this arc.
    Moreover, since~$\mathcal{F}'$ is an exact cover, for each~$x_j$, there exists exactly one set~$S_{i_j} \in \mathcal{F}'$ such that~$x_j \in S_{i_j}$.
    We finally add the pair~$(((y_j,v_{i_j}),(v_{i_j},t)),\frac{1}{6})$ to~$\mathcal{P}$ for each~$j \in [N]$.
    Note that the heavy arc~$(y_j,v_{i_j})$ has~$\frac{1}{6}$ unused capacity.
    This concludes the construction of our solution.

    It remains to show that the constructed graph is a directed~$\ns$-regular graph and that the constructed flow hosts a~$\kappa$-fraction of the constructed demand.
    As shown above, there are exactly~$\ns$ outgoing arcs for each vertex.
    It is also easy to verify that each vertex has exactly~$\ns$ incoming arcs in the constructed solution.
    The heavy arcs directly host~$H$ of the demand (scaled by~$\ns$).
    So it remains to show that the remaining~$\frac{3L}{4}+\frac{N}{12}$ demand (scaled by~$\ns$) are hosted by our construction, where~$L= n \ns - \sum_{u \in V}\sum_{v \in V}g(d_{u,v})$.
    To this end, note that we added exactly~$L$ non-heavy arcs.
    By construction, for each~$S_i \notin \mathcal{F}'$, we added two arcs~($(u_i,v_i)$ and $(v_i,u_i)$) and they together satisfy a demand of~$\frac{3}{2}$.
    For each set~$S_i \in \mathcal{F}'$, we added three arcs~($(s,u_i)$, $(u_i,v_i)$, and~$(v_i,t)$).
    The two arcs~$(s,u_i)$ and~$(u_i,v_i)$ satisfy a demand of~$\frac{3}{2}$ (scaled by~$\ns$) together.
    The final arc~$(v_i,t)$ hosts a demand of~$\frac{1}{2}$ directly and together with some of the unused capacity of heavy arcs going from~$y_j$ to~$v_{i_j}$, they satisfy another demand of~$3 \frac{1}{6} = \frac{1}{2}$.
    Thus, the total demand (scaled by~$\ns$) additionally hosted is~$\frac{3L}{4} + \frac{N}{12}$.
    The total demand hosted is therefore~$\frac{H+\frac{3L}{4}+\frac{N}{12}}{\ns} = \frac{\kappa}{n}$, that is, we hosted a~$\kappa$-fraction of all demands.
    Thus, the constructed instance of \thr{} is a yes-instance.

    In the other direction, we assume that the constructed instance of \thr{} is a yes-instance.
    We first show that that the contribution of all arcs is an equivalent measure for the weak throughput.
    To this end, let~$(G,\mathcal{P})$ be a solution, where~$\frac{\sum_{(P_i,d_i) \in \mathcal{P}} d_i}{n} \geq \kappa$.
    For the sake of notational convenience, we will set~$d'_i = \ns d_i$ and work with the equivalent assumption~$\sum_{(P_i,d_i) \in \mathcal{P}} d'_i \geq \kappa n \ns = H + \frac{3L}{4}+\frac{N}{12}$.
    Let~$E$ be the set of arcs in the constructed solution graph and let~$\con(e) = \sum_{(P_i,d_i) \in \mathcal{P}} \frac{d'_i}{\len(P)} \cdot \mathbf{1}_{e \in P_i}$ be the contribution of an arc~$e \in E$.
    Note that
    \[\sum_{e \in E} \con(e) = \sum_{e \in E}\sum_{(P_i,d_i) \in \mathcal{P}} \frac{d'_i}{\len(P)} \cdot \mathbf{1}_{e \in P_i} = \sum_{(P_i,d_i) \in \mathcal{P}} \sum_{e \in P_i} \frac{d'_i}{\len(P_i)} = \sum_{(P_i,d_i) \in \mathcal{P}} d'_i.\]
    Thus, the contribution of all arcs summed up is exactly the amount of demand that is hosted and we may assume that~$\sum_{e \in E} \con(e) \geq H+\frac{3L}{4} + \frac{N}{12}$.

    Next, note that the contribution of an arc~$e$ can only be larger than~$\frac{3}{4}$ if it hosts a demand larger than~$\frac{1}{2}$ directly.
    We will show that all such potential arcs have to be part of any solution.
    This implies that for any pair~$(u,v)$ of vertices, any solution contains~$g(d_{u,v})$ arcs from~$u$ to~$v$ and these host~$f(d_{u,v})$ demand directly, that is, all these arcs (henceforth called \emph{heavy arcs}) host a demand of~$H$ in total.
    The heavy arcs are exactly the same heavy arcs as in the proof of the forward direction and there are~$2M + K = n \ns - \sum_{u \in V} \sum_{v \in V} g(d_{u,v}) = L$ other arcs in the solution, where exactly one is outgoing from each vertex~$u_i$ and~$v_i$ and~$K$ are outgoing from~$s$.
    All these non-heavy arcs can host at most~$\frac{1}{2}$ demand directly since there is by construction no pair of vertices where the non-integer part of the demand is in the interval~$(\frac{1}{2},\frac{3}{4})$.
    This means that they each can contribute at most~$\frac{3}{4}$.
    So even if all heavy arcs host as much demand as possible directly (which maximizes their contribution per capacity used) and all other arcs contribute~$\frac{3}{4}$ each, the total contribution thus far is only~$H + \frac{3L}{4} = \kappa n \ns - \frac{N}{12}$.
    This means that some heavy arcs contribute more than what they can host directly.
    We will now slightly change the accounting and say that the contribution of a heavy arc is the amount what it can host directly and the contribution of all other arcs is~$\sum_{(P_i,d_i) \in \mathcal{P}} \frac{d'_i}{\len'(P)} \cdot \mathbf{1}_{e \in P_i}$ where~$\len'(P)$ is the number of non-heavy arcs in~$P$.
    Note that no additional demand can be served by just the unused capacity of heavy arcs and hence this re-accounting still precisely describes the weak throughput.
    We next show that the set of arcs contributing at least~$\frac{3}{4}$ did not change and the only arcs that can now contribute strictly more than~$\frac{3}{4}$ are arcs of the form~$(y_j,v_i)$.
    First, the only heavy arcs that have spare capacity are arcs of the form~$(s,a_i), (w_j^i,u_i), (y_j,v_i)$, and~$(b_i,t)$.
    Since none of the heavy arcs with spare capacity or of the potential non-heavy arcs are outgoing from vertices in~$\{t\} \cup A \cup Z$ or incoming into vertices in~$\bigcup_{i \in [M]}W_i$, there are only two possible types of demands that can now contribute more than before the re-accounting: arcs from a vertex in~$W_i$ for some~$i \in [M]$ to a vertex in~$Z$ and arcs from a vertex~$v_j$ to~$t$.
    Since no arc from a vertex in~$Z$ to a vertex outside~$Z$ can contribute~$\frac{3}{4}$, the former is also excluded.
    This implies that the total additional contribution is a flow of~$\frac{1}{6}$ from~$y_j$ to~$t$ for each~$j \in [N]$.
    These arcs cannot be direct arcs since their contribution could be at most~$\frac{1}{6} + \frac{5}{12} < \frac{3}{4}$ and thus, at most~$\frac{N}{6}$ flow is re-accounted.
    Since each such path contains at most~$1$ heavy arc and at least one non-heavy arc, the re-accouting for this flow adds an additional contribution of at most~$\frac{N}{12}$ to non-heavy arcs.
    This happens when each such flow uses a path~$P$ with~$\len(P) = 2$ and~$\len(P)'=1$.
    Thus, any potential solution must consist of all heavy arcs and all other arcs must contribute at least~$\frac{3}{4}$.

    We now analyze the structure of a potential solution and focus on the non-heavy arc leaving a vertex~$u_i$.
    Note that this arc contributes at least~$\frac{3}{4}$ and hence is not~$(u_i,u_i)$, but~$(u_i,v_i)$.
    This implies that~$u_i$ has exactly one non-heavy incoming arc.
    Since this arc also contributes at least~$\frac{3}{4}$, it has to also host a demand of at least~$\frac{1}{2}$ directly or only together with heavy arcs, that is, it is either~$(s,u_i)$ or~$(v_i,u_i)$.
    Note that the potential non-heavy outgoing arcs from~$s$ are~$(s,u_i)$ and~$(s,v_i)$ but each~$v_i$ already has~$\ns$ incoming arcs.
    Hence, $s$ has exactly~$K$ non-heavy outgoing arcs to~$K$ vertices~$u_{i_1},u_{i_2}, \ldots, u_{i_K}$.
    We will show that~$\mathcal{F}' = \{S_{i_1},S_{i_2}, \ldots, S_{i_K}\}$ is a solution for the original instance of \textsc{Exact Cover by 3-Sets}.
    For the corresponding~$K$ vertices~$u_i$, the two arcs~$(s,u_i)$ and~$(u_i,v_i)$ contribute exactly~$\frac{3}{4}$ each by satisfying the demands~$(s,u_i), (s,v_i)$ and~$(u_i,v_i)$.
    The corresponding vertices~$v_i$ all have a non-heavy arc~$(v_i,t)$ which hosts a demand of~$\frac{1}{2}$ directly.
    In order for them to also allow the additional flow of~$\frac{1}{6}$ from each vertex~$y_j$ to~$t$ to be hosted, each vertex~$y_j$ needs to have an outgoing arc to a node~$v_i$ that has an arc to~$t$.
    Since there are exactly~$K$ such arcs, each vertex~$v_i$ only has three incoming arcs from vertices~$y_j$, and~$N=3K$, it must hold that no two nodes~$v_{i_1}$ and~$v_{i_2}$ have incoming arcs from the same vertex~$v_j$, that is, the~$K$ sets in~$\mathcal{F}'$ do not intersect, that is, $\mathcal{F}'$ is an exact cover.
    This concludes the proof.
\end{proof}
\else
\begin{theorem}[$\star$]
    \label{thm:NPhard}
    \thr{} is NP-hard.
\end{theorem}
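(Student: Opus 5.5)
We reduce from \textsc{Exact Cover by 3-Sets} (X3C), following the blueprint sketched in the introduction. Given a universe $U=\{x_1,\dots,x_N\}$ with $N=3K$ and a family $\mathcal{F}=\{S_1,\dots,S_M\}$ of 3-element subsets, we build a doubly stochastic demand matrix. It has a vertex $y_j$ for each element, a gadget $u_i,v_i$ (plus a few private helper vertices) for each set, and two hub vertices $s,t$. Large padding sets of vertices with dominant self-demand are added to balance every row and column sum to exactly $\ns=2n-1$ (after scaling by $\ns$).

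\textbf{Demand design.} The scaled demands should fall into two kinds. \emph{Heavy} demands have integer parts, or fractional parts at least $\frac34$, and are always best served by dedicated direct arcs. \emph{Light} demands have fractional part at most $\frac12$. Crucially, no fractional part lies in $(\frac12,\frac34)$. Following the $f,g$ idea, let $H$ be the total served by heavy arcs and $L$ the number of remaining (light) arcs, and set $\kappa n\ns=H+\frac34L+\frac{N}{12}$.

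\textbf{Set gadget.} Each gadget $u_i,v_i$ should admit two locally optimal configurations, each making its light arcs carry exactly $\frac34$ per arc: a light arc carries $\frac12$ directly and $\frac12$ of two-hop flow, credited $\frac14$ to each hop.
\begin{itemize}
\item The \emph{unchosen} configuration uses the arcs $u_i\to v_i\to u_i$.
\item The \emph{chosen} configuration uses the arcs $s\to u_i\to v_i\to t$.
\end{itemize}
The demand $t\to s$ of $K$, together with the degree slack at $s$, limits the chosen configuration to at most $K$ gadgets. Each element vertex $y_j$ gets a heavy arc to each $v_i$ with $x_j\in S_i$, with $\frac16$ spare capacity. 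It also has a demand of $\frac16$ to $t$, which can be routed via $y_j\to v_i\to t$ only when some $S_i\ni x_j$ is chosen. This routing supplies the extra $\frac{N}{12}$ term.

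\textbf{Correctness.} The forward direction is direct: from an exact cover, place all heavy arcs, put the chosen gadgets on the cover sets, and route the $y_j\to t$ flows. This meets $\kappa$ exactly.

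For the backward direction, use the accounting $\con(e)=\sum_{(P,d)} d\,\mathbf{1}_{e\in P}/\len(P)$, whose sum over all arcs equals the hosted demand. The argument proceeds in three steps.
\begin{itemize}
\item An arc can contribute more than $\frac34$ only if it serves more than $\frac12$ directly. The gap $(\frac12,\frac34)$ then forces every heavy arc to be present in any solution.
\item Every light arc contributes at most $\frac34$, except for the bounded re-accounting of the $\frac16$ flows through the spare capacity of heavy arcs. The target therefore forces tightness everywhere. In particular, each $u_i$'s light arc goes to $v_i$, $s$ has exactly $K$ light arcs to vertices $u_i$, and every $y_j$ must reach some $v_i$ with an arc to $t$.
\item Each $v_i$ has only three element in-neighbours and $N=3K$, so the chosen $K$ sets must be disjoint and cover $U$.
\end{itemize}

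\textbf{Main obstacle.} The hard part is the gadget arithmetic together with the rigidity argument. Balancing requires choosing the padding-set sizes ($|A|=10(M-K)$, etc.) so that every row and column sums to $\ns$, and the demands must be checked so that no fractional part lands in $(\frac12,\frac34)$. The rigidity argument must rule out every alternative routing that could exploit spare heavy capacity. I would handle this by explicitly listing the few heavy arcs with slack and checking, case by case, the possible light-arc endpoints they could combine with.
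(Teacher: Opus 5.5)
Your proposal follows the paper's proof essentially step for step. It uses the same \textsc{Exact Cover by 3-Sets} reduction with hubs $s,t$, and set gadgets $u_i,v_i$ whose light arcs realize either $u_i\to v_i\to u_i$ or $s\to u_i\to v_i\to t$. It uses element vertices $y_j$ that reach $t$ through the $\frac16$ slack on the heavy arcs $(y_j,v_i)$, the threshold $\kappa n n^* = H+\frac34 L+\frac{N}{12}$, and the same contribution/re-accounting argument followed by the counting step. What you leave open, and the paper supplies, is the explicit demand matrix (five private helpers $W_i$ per set, padding sets $A$, $B$, $Z$ with $|A|=10(M-K)$, $|B|=10(\frac{M}{2}+\frac{N}{6}-K)$, $|Z|=15M-10K$) together with the row/column-sum checks.
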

\fi

Finally, we conjecture \strong{} to be NP-hard as well, but we are currently not able to prove this.

\section{Conclusion}
\label{sec:conclusion}
We investigated four variants of demand-aware throughput.
We showed that two of them (based on direct arcs) can be solved optimally in polynomial time and gave essentially tight lower and upper bounds for them.
We also showed the first separation between demand-oblivious and demand-aware throughput in the more general case where demands can be routed via paths of arbitrary length---at least for a sufficiently large number of nodes.
This result is based on previous results in the context of randomized and approximation algorithms as well as the probabilistic method; the result is nevertheless completely deterministic.

We conclude with some open problems.
First, it would be nice to have tight bounds for the two remaining problem variants.
While our lower bound for \strong{} is a strong result as it achieves the first separation from the demand-oblivious construction, we do not think that it is optimal.
Hence, we ask for a different construction that gives better bounds and which also applies for any number~$n$ of nodes.
We also conjecture that it should be NP-hard to compute the demand-aware throughput, but since we were unable to prove this formally, we leave it as an open problem.
It would also be interesting to know what are the matrices that have the lowest demand-aware throughput.
In the case of demand-oblivious throughput, it is known that these are the permutation matrices.
A similar question also arises for the weak throughput.
Speaking of weak throughput, we conjecture that the lower bound for \thr{} should be improvable to something like~$\frac{8}{9}$.
A matching upper bound for~$n > 2$ is also still open.
Finally, to the best of our knowledge, demand-oblivious weak throughput has not been studied before: 
What is the best bound for demand-oblivious weak throughput? 

\ifarxiv
\section*{Acknowledgement}
\else
\begin{acks}
\fi
This work received financial support from the German Research Foundation (DFG) under research grant SPP 2378: ReNO-2 (project number 511099228) and from the Israel Science Foundation (ISF) grant no. 2497/23.
\ifarxiv
\else
\end{acks}
\fi

\ifarxiv
\bibliographystyle{plainnat}
\else
\bibliographystyle{ACM-Reference-Format}
\balance
\fi
\bibliography{bib}

@book{GJ79,
  author       = {Michael R. Garey and
                  David S. Johnson},
  title        = {Computers and Intractability: {A} Guide to the Theory of NP-Completeness},
  publisher    = {W. H. Freeman},
  year         = {1979},
  isbn         = {0-7167-1044-7}
}

@article{GKPS06,
  author       = {Rajiv Gandhi and
                  Samir Khuller and
                  Srinivasan Parthasarathy and
                  Aravind Srinivasan},
  title        = {Dependent rounding and its applications to approximation algorithms},
  journal      = {Journal of the {ACM}},
  volume       = {53},
  number       = {3},
  pages        = {324--360},
  year         = {2006},
  doi          = {10.1145/1147954.1147956}
}

@article{PS97,
  author       = {Alessandro Panconesi and
                  Aravind Srinivasan},
  title        = {Randomized Distributed Edge Coloring via an Extension of the {C}hernoff-{H}oeffding Bounds},
  journal      = {{SIAM} Journal on Computing},
  volume       = {26},
  number       = {2},
  pages        = {350--368},
  year         = {1997},
  doi          = {10.1137/S0097539793250767}
}

@book{CS,
  author       = {Rajeev Motwani and
                  Prabhakar Raghavan},
  title        = {Randomized Algorithms},
  publisher    = {Cambridge University Press},
  year         = {1995},
  doi2          = {10.1017/CBO9780511814075},
  isbn         = {0-521-47465-5},
}

@book{DBLP:books/networkflows,
  author       = {Ravindra K. Ahuja and
                  Thomas L. Magnanti and
                  James B. Orlin},
  title        = {Network flows -- {T}heory, algorithms and applications},
  publisher    = {Prentice Hall},
  year         = {1993},
  isbn         = {978-0-13-617549-0},
}

@article{addanki2025vermilion,
  author       = {Vamsi Addanki and
                  Chen Avin and
                  Goran Dario Knabe and
                  Giannis Patronas and
                  Dimitris Syrivelis and
                  Nikos Terzenidis and
                  Paraskevas Bakopoulos and
                  Ilias Marinos and
                  Stefan Schmid},
  title        = {Vermilion: {A} Traffic-Aware Reconfigurable Optical Interconnect with Formal Throughput Guarantees},
  journal      = {CoRR},
  volume       = {abs/2504.09892},
  year         = {2025},
  doi          = {10.48550/ARXIV.2504.09892},
  eprinttype    = {arXiv},
}

@article{greenberg2011vl2,
  author       = {Albert G. Greenberg and
                  James R. Hamilton and
                  Navendu Jain and
                  Srikanth Kandula and
                  Changhoon Kim and
                  Parantap Lahiri and
                  David A. Maltz and
                  Parveen Patel and
                  Sudipta Sengupta},
  title        = {{VL2: A} scalable and flexible data center network},
  journal      = {Communications of the {ACM}},
  volume       = {54},
  number       = {3},
  pages        = {95--104},
  year         = {2011},
  doi          = {10.1145/1897852.1897877},
}

@article{avin2020demand,
  author       = {Chen Avin and
                  Kaushik Mondal and
                  Stefan Schmid},
  title        = {Demand-aware network designs of bounded degree},
  journal      = {Distributed Computing},
  volume       = {33},
  number       = {3-4},
  pages        = {311--325},
  year         = {2020},
  doi          = {10.1007/S00446-019-00351-5},
}

@article{avin2022demand,
  author       = {Chen Avin and
                  Kaushik Mondal and
                  Stefan Schmid},
  title        = {Demand-Aware Network Design With Minimal Congestion and Route Lengths},
  journal      = {{IEEE/ACM} Transactions on Networking},
  volume       = {30},
  number       = {4},
  pages        = {1838--1848},
  year         = {2022},
  doi          = {10.1109/TNET.2022.3153586},
}

@article{zerwas2023duo,
  author       = {Johannes Zerwas and
                  Csaba Gy{\"{o}}rgyi and
                  Andreas Blenk and
                  Stefan Schmid and
                  Chen Avin},
  title        = {Duo: {A} High-Throughput Reconfigurable Datacenter Network Using Local Routing and Control},
  journal      = {Proceedings of the {ACM} on Measurement and Analysis of Computing Systems},
  volume       = {7},
  number       = {1},
  pages        = {20:1--20:25},
  year         = {2023},
  doi          = {10.1145/3579449},
}

@article{griner2021cerberus,
  author       = {Chen Griner and
                  Johannes Zerwas and
                  Andreas Blenk and
                  Manya Ghobadi and
                  Stefan Schmid and
                  Chen Avin},
  title        = {Cerberus: The Power of Choices in Datacenter Topology Design -- {A} Throughput Perspective},
  journal      = {Proceedings of the {ACM} on Measurement and Analysis of Computing Systems},
  volume       = {5},
  number       = {3},
  pages        = {38:1--38:33},
  year         = {2021},
  doi          = {10.1145/3491050},
}

@article{addanki2023mars,
  author       = {Vamsi Addanki and
                  Chen Avin and
                  Stefan Schmid},
  title        = {Mars: Near-Optimal Throughput with Shallow Buffers in Reconfigurable Datacenter Networks},
  journal      = {Proceedings of the {ACM} on Measurement and Analysis of Computing Systems},
  volume       = {7},
  number       = {1},
  pages        = {2:1--2:43},
  year         = {2023},
  doi          = {10.1145/3579312},
}

@article{mogul2012we,
  author       = {Jeffrey C. Mogul and
                  Lucian Popa},
  title        = {What we talk about when we talk about cloud network performance},
  journal      = {Computer Communication Review},
  volume       = {42},
  number       = {5},
  pages        = {44--48},
  year         = {2012},
  doi          = {10.1145/2378956.2378964},
}

@article{farrington2013multiport,
  title={A multiport microsecond optical circuit switch for data center networking},
  author={Farrington, Nathan and Forencich, Alex and Porter, George and Sun, P.-C. and Ford, Joseph E. and Fainman, Yeshaiahu and Papen, George C. and Vahdat, Amin},
  journal={IEEE Photonics Technology Letters},
  volume={25},
  number={16},
  pages={1589--1592},
  year={2013},
  doi={10.1109/LPT.2013.2270462}
}

@inproceedings{al2008scalable,
  author       = {Mohammad Al{-}Fares and
                  Alexander Loukissas and
                  Amin Vahdat},
  title        = {A scalable, commodity data center network architecture},
  booktitle    = {Proceedings of the {ACM} {SIGCOMM} 2008 Conference},
  pages        = {63--74},
  publisher    = {{ACM}},
  year         = {2008},
  doi          = {10.1145/1402958.1402967},
}

@inproceedings{li2019hpcc,
  author       = {Yuliang Li and
                  Rui Miao and
                  Hongqiang Harry Liu and
                  Yan Zhuang and
                  Fei Feng and
                  Lingbo Tang and
                  Zheng Cao and
                  Ming Zhang and
                  Frank Kelly and
                  Mohammad Alizadeh and
                  Minlan Yu},
  title        = {{HPCC: H}igh precision congestion control},
  booktitle    = {Proceedings of the {ACM} {SIGCOMM} 2019 Conference},
  pages        = {44--58},
  publisher    = {{ACM}},
  year         = {2019},
  doi          = {10.1145/3341302.3342085},
}

@inproceedings{dc-throughput,
  author       = {Pooria Namyar and
                  Sucha Supittayapornpong and
                  Mingyang Zhang and
                  Minlan Yu and
                  Ramesh Govindan},
  title        = {A throughput-centric view of the performance of datacenter topologies},
  booktitle    = {Proceedings of the {ACM} {SIGCOMM} 2021 Conference},
  pages        = {349--369},
  publisher    = {{ACM}},
  year         = {2021},
  doi          = {10.1145/3452296.3472913},
}

@inproceedings{faizian2017throughput,
  author       = {Peyman Faizian and
                  Md Atiqul Mollah and
                  Md. Shafayat Rahman and
                  Xin Yuan and
                  Scott Pakin and
                  Mike Lang},
  title        = {Throughput Models of Interconnection Networks: {T}he Good, the Bad,
                  and the Ugly},
  booktitle    = {Proceedings of the 25th {IEEE} Annual Symposium on High-Performance Interconnects ({HOTI})},
  pages        = {33--40},
  publisher    = {{IEEE} Computer Society},
  year         = {2017},
  doi          = {10.1109/HOTI.2017.21},
}

@inproceedings{jyothi2016measuring,
  author       = {Sangeetha Abdu Jyothi and
                  Ankit Singla and
                  Brighten Godfrey and
                  Alexandra Kolla},
  title        = {Measuring and understanding throughput of network topologies},
  booktitle    = {Proceedings of the 2016 International Conference for High Performance Computing, Networking, Storage and Analysis ({SC})},
  pages        = {761--772},
  publisher    = {{IEEE} Computer Society},
  year         = {2016},
  doi          = {10.1109/SC.2016.64},
}

@inproceedings{singla2014high,
  author       = {Ankit Singla and
                  Philip Brighten Godfrey and
                  Alexandra Kolla},
  title        = {High Throughput Data Center Topology Design},
  booktitle    = {Proceedings of the 11th {USENIX} Symposium on Networked Systems Design and Implementation ({NSDI})},
  pages        = {29--41},
  publisher    = {{USENIX} Association},
  year         = {2014},
}

@inproceedings{amir2022optimal,
  author       = {Daniel Amir and
                  Tegan Wilson and
                  Vishal Shrivastav and
                  Hakim Weatherspoon and
                  Robert Kleinberg and
                  Rachit Agarwal},
  title        = {Optimal oblivious reconfigurable networks},
  booktitle    = {Proceedings of the 54th Annual {ACM} {SIGACT} Symposium on Theory of Computing ({STOC})},
  pages        = {1339--1352},
  publisher    = {{ACM}},
  year         = {2022},
  doi          = {10.1145/3519935.3520020},
}

@inproceedings{baral2026universal,
  title={Universal Connection Schedules for Reconfigurable Networking},
  author={Baral, Shaleen and Kleinberg, Robert and Martin, Sylvan and Rogers, Henry and Wilson, Tegan and Zhang, Ruogu},
  booktitle={Proceedings of the 37th Annual {ACM-SIAM} Symposium on Discrete Algorithms ({SODA})},
  pages={4996--5026},
  publisher = {{SIAM}},
  year={2026},
  doi={10.1137/1.9781611978971.18}
}

@inproceedings{mellette2017rotornet,
  author       = {William M. Mellette and
                  Rob McGuinness and
                  Arjun Roy and
                  Alex Forencich and
                  George Papen and
                  Alex C. Snoeren and
                  George Porter},
  title        = {RotorNet: {A} Scalable, Low-complexity, Optical Datacenter Network},
  booktitle    = {Proceedings of the {ACM} {SIGCOMM} 2017 Conference},
  pages        = {267--280},
  publisher    = {{ACM}},
  year         = {2017},
  doi          = {10.1145/3098822.3098838},
}

@inproceedings{ahn2009hyperx,
  author       = {Jung Ho Ahn and
                  Nathan L. Binkert and
                  Al Davis and
                  Moray McLaren and
                  Robert S. Schreiber},
  title        = {HyperX: {T}opology, routing, and packaging of efficient large-scale networks},
  booktitle    = {Proceedings of the 2009 International Conference for High Performance Computing, Networking, Storage and Analysis ({SC})},
  pages        = {1--11},
  publisher    = {{ACM}},
  year         = {2009},
  doi          = {10.1145/1654059.1654101},
}

@inproceedings{jain2014maximizing,
  author       = {Nikhil Jain and
                  Abhinav Bhatele and
                  Xiang Ni and
                  Nicholas J. Wright and
                  Laxmikant V. Kal{\'{e}}},
  title        = {Maximizing Throughput on a Dragonfly Network},
  booktitle    = {Proceedings of the 2014 International Conference for High Performance Computing, Networking, Storage and Analysis ({SC})},
  pages        = {336--347},
  publisher    = {{IEEE} Computer Society},
  year         = {2014},
  doi          = {10.1109/SC.2014.33},
}

@inproceedings{poutievski2022jupiter,
  author       = {Leon Poutievski and
                  Omid Mashayekhi and
                  Joon Ong and
                  Arjun Singh and
                  Muhammad Mukarram Bin Tariq and
                  Rui Wang and
                  Jianan Zhang and
                  Virginia Beauregard and
                  Patrick Conner and
                  Steve D. Gribble and
                  Rishi Kapoor and
                  Stephen Kratzer and
                  Nanfang Li and
                  Hong Liu and
                  Karthik Nagaraj and
                  Jason Ornstein and
                  Samir Sawhney and
                  Ryohei Urata and
                  Lorenzo Vicisano and
                  Kevin Yasumura and
                  Shidong Zhang and
                  Junlan Zhou and
                  Amin Vahdat},
  title        = {Jupiter evolving: {T}ransforming {G}oogle's datacenter network via optical circuit switches and software-defined networking},
  booktitle    = {Proceedings of the {ACM} {SIGCOMM} 2022 Conference},
  pages        = {66--85},
  publisher    = {{ACM}},
  year         = {2022},
  doi          = {10.1145/3544216.3544265},
}

@inproceedings{curtis2012rewire,
  author       = {Andrew R. Curtis and
                  Tommy Carpenter and
                  Mustafa Elsheikh and
                  Alejandro L{\'{o}}pez{-}Ortiz and
                  Srinivasan Keshav},
  title        = {{REWIRE: A}n optimization-based framework for unstructured data center network design},
  booktitle    = {Proceedings of the 31st Annual Joint Conference of the IEEE Computer and Communications Societies ({INFOCOM})},
  pages        = {1116--1124},
  publisher    = {{IEEE}},
  year         = {2012},
  doi          = {10.1109/INFCOM.2012.6195470},
}

@inproceedings{yuan2014lfti,
  author       = {Xin Yuan and
                  Santosh Mahapatra and
                  Michael Lang and
                  Scott Pakin},
  title        = {{LFTI:} {A} New Performance Metric for Assessing Interconnect Designs for Extreme-Scale {HPC} Systems},
  booktitle    = {Proceedings of the 28th {IEEE} International Parallel and Distributed Processing Symposium ({IPDPS})},
  pages        = {273--282},
  publisher    = {{IEEE} Computer Society},
  year         = {2014},
  doi          = {10.1109/IPDPS.2014.38},
}

@inproceedings{KCKL05,
  author       = {Isaac Keslassy and
                  Cheng{-}Shang Chang and
                  Nick McKeown and
                  Duan{-}Shin Lee},
  title        = {Optimal load-balancing},
  booktitle    = {Proceedings of the 24th Annual Joint Conference of the {IEEE} Computer and Communications Societies ({INFOCOM})},
  pages        = {1712--1722},
  publisher    = {{IEEE}},
  year         = {2005},
  doi          = {10.1109/INFCOM.2005.1498452},
}

@inproceedings{helios,
  author       = {Nathan Farrington and
                  George Porter and
                  Sivasankar Radhakrishnan and
                  Hamid Hajabdolali Bazzaz and
                  Vikram Subramanya and
                  Yeshaiahu Fainman and
                  George Papen and
                  Amin Vahdat},
  title        = {Helios: {A} hybrid electrical/optical switch architecture for modular data centers},
  booktitle    = {Proceedings of the {ACM} {SIGCOMM} 2010 Conference},
  pages        = {339--350},
  publisher    = {{ACM}},
  year         = {2010},
  doi          = {10.1145/1851182.1851223},
}

@inproceedings{firefly,
  author       = {Navid Hamed Azimi and
                  Zafar Ayyub Qazi and
                  Himanshu Gupta and
                  Vyas Sekar and
                  Samir R. Das and
                  Jon P. Longtin and
                  Himanshu Shah and
                  Ashish Tanwer},
  title        = {FireFly: {A} reconfigurable wireless data center fabric using free-space optics},
  booktitle    = {Proceedings of the {ACM} {SIGCOMM} 2014 Conference},
  pages        = {319--330},
  publisher    = {{ACM}},
  year         = {2014},
  doi          = {10.1145/2619239.2626328},
}

@inproceedings{projector,
  author       = {Monia Ghobadi and
                  Ratul Mahajan and
                  Amar Phanishayee and
                  Nikhil R. Devanur and
                  Janardhan Kulkarni and
                  Gireeja Ranade and
                  Pierre{-}Alexandre Blanche and
                  Houman Rastegarfar and
                  Madeleine Glick and
                  Daniel C. Kilper},
  title        = {{ProjecToR: A}gile Reconfigurable Data Center Interconnect},
  booktitle    = {Proceedings of the {ACM} {SIGCOMM} 2016 Conference},
  pages        = {216--229},
  publisher    = {{ACM}},
  year         = {2016},
  doi          = {10.1145/2934872.2934911},
}

@article{hall2021survey,
  author       = {Matthew Nance Hall and
                  Klaus{-}Tycho Foerster and
                  Stefan Schmid and
                  Ramakrishnan Durairajan},
  title        = {A Survey of Reconfigurable Optical Networks},
  journal      = {Optical Switching and Networking},
  volume       = {41},
  pages        = {100621},
  year         = {2021},
  doi          = {10.1016/J.OSN.2021.100621},
}

@article{avin2025revolutionizing,
  author       = {Chen Avin and
                  Stefan Schmid},
  title        = {Revolutionizing Datacenter Networks via Reconfigurable Topologies},
  journal      = {Communications of the {ACM}},
  volume       = {68},
  number       = {6},
  pages        = {44--53},
  year         = {2025},
  doi          = {10.1145/3708980},
}

@article{BAS26,
  author       = {Matthias Bentert and
                  Chen Avin and
                  Stefan Schmid},
  title        = {A Separation Between Optimal Demand-Oblivious and Demand-Aware Network Throughput},
  journal      = {CoRR},
  volume       = {XXX},
  year         = {2026},
  doi          = {XXX},
  eprinttype   = {arXiv},
  eprint       = {XXX}
}
\end{document}